\newcommand{\set}[1]{\left\{#1\right\}}
\newcommand{\iset}[1]{\mathcal{#1}}
\newcommand{\pr}[1]{\left(#1\right)}
\newcommand{\spr}[1]{\left[#1\right]}
\newcommand{\abs}[1]{\left|#1\right|}
\newcommand{\norm}[1]{\left\|#1\right\|_2}
\newcommand{\enset}[2]{\set{#1 ,\ldots , #2}}
\newcommand{\enpr}[2]{\pr{#1 ,\ldots , #2}}
\newcommand{\vect}[1]{\spr{#1}^T}
\newcommand{\sqrta}[1]{\spr{#1}^{1/2}}
\newcommand{\envec}[2]{\vect{#1 ,\ldots , #2}}
\newcommand{\mean}[2]{\operatorname{E}_{#1}\spr{#2}}
\newcommand{\cov}[3]{\operatorname{Cov}_{#1}^{#2}\spr{#3}}
\newcommand{\cova}[3]{\operatorname{Cov}^{#2}\spr{{#3}\mid{{#1}}}}
\newcommand{\argmin}{\operatorname{argmin}}
\newcommand{\real}{\mathbb{R}}
\newcommand{\kl}[2]{D\pr{#1 \| #2}}
\newcommand{\alname}[1]{\textsc{#1}}
\newcommand{\dtname}[1]{\emph{#1}}
\newcommand{\ftname}[1]{\emph{#1}}
\newcommand{\funcdef}[3]{{#1}:{#2} \to {#3}}
\newcommand{\const}[2]{\mathcal{C}\pr{#1, #2}}
\newcommand{\constp}[2]{\mathcal{C_+}\pr{#1, #2}}
\newcommand{\freq}[2]{{#1}\pr{#2}}
\newcommand{\dcm}[1]{d_{CM}\pr{#1}}
\newcommand{\du}[1]{d_U\pr{#1}}
\newcommand{\dist}[3]{d_{CM}\pr{#1, #2 \mid #3}}
\newcommand{\distu}[3]{d_{U}\pr{#1, #2 \mid #3}}
\newcommand{\dista}[3]{d\pr{#1, #2 \mid #3}}
\title{Distances between Data Sets Based on Summary Statistics}
\author{\name Nikolaj Tatti  \email ntatti@cc.hut.fi \\
\addr HIIT Basic Research Unit\\
Laboratory of Computer and Information Science\\
Helsinki University of Technology, Finland}
\begin{document}
\maketitle
\begin{abstract}
The concepts of similarity and distance are crucial in data mining. We consider the problem of defining the distance between two data sets by comparing summary statistics computed from the data sets. The initial definition of our distance is based on geometrical notions of certain sets of distributions. We show that this distance can be computed in cubic time and that it has several intuitive properties. We also show that this distance is the unique Mahalanobis distance satisfying certain assumptions. We also demonstrate that if we are dealing with binary data sets, then the distance can be represented naturally by certain parity functions, and that it can be evaluated in linear time. Our empirical tests with real world data show that the distance works well.
\end{abstract}
\section{Introduction}
In this paper we will consider the following problem: Given two data sets $D_1$ and $D_2$ of dimension $K$, define a distance between $D_1$ and $D_2$. To be more precise, we consider the problem of defining the distance between two multisets of transactions, each set sampled from its own unknown distribution. We will define a dissimilarity measure between $D_1$ and $D_2$ and we will refer to this measure as \emph{CM distance}.

Generally speaking, the notion of dissimilarity between two objects is one of the most fundamental concepts in data mining. If one is able to retrieve a distance matrix from a set of objects, then one is able to analyse data by using e.g., clustering or visualisation techniques. Many real world data collections may be naturally divided into several data sets. For example, if a data collection consists of movies from different eras, then we may divide the movies into subcollections based on their release years. A distance between these data (sub)sets would provide means to analyse them as single objects. Such an approach may ease the task of understanding complex data collections.

Let us continue by considering the properties the CM distance should have. First of all, it should be a metric. The motivation behind this requirement is that the metric theory is a well-known area and metrics have many theoretical and practical virtues. Secondly, in our scenario the data sets have statistical nature and the CM distance should take this into account. For example, consider that both data sets are generated from the same distribution, then the CM distance should give small values and approach $0$ as the number of data points in the data sets increases. The third requirement is that we should be able to evaluate the CM distance quickly. This requirement is crucial since we may have high dimensional data sets with a vast amount of data points.

The CM distance will be based on summary statistics, features. Let us give a simple example: Assume that we have data sets $D_1 = \set{A, B, A, A}$ and $D_2 = \set{A, B, C, B}$ and assume that the only feature we are interested in is the proportion of $A$ in the data sets. Then we can suggest the distance between $D_1$ and $D_2$ to be $\abs{3/4-1/4} = 1/2$. The CM distance is based on this idea; however, there is a subtle difficulty: If we calculate several features, then should we take into account the correlation of these features? We will do exactly that in defining the CM distance.

The rest of this paper is organised as follows. In Section~\ref{sec:cm} we give the definition of the CM distance by using some geometrical interpretations. We also study the properties of the distance and provide an alternative characterisation. In Section~\ref{sec:cmbin} we study the CM distance and binary data sets. In Section~\ref{sec:sequences} we discuss how the CM distance can be used with event sequences and in Section~\ref{sec:feature} we comment about the feature selection. Section~\ref{sec:related} is devoted for related work. The empirical tests are represented in Section~\ref{sec:tests} and we conclude our work with the discussion in Section~\ref{sec:conclusions}.

\section{The Constrained Minimum Distance}
\label{sec:cm}
In the following subsection we will define our distance using geometrical intuition and show that the distance can be evaluated efficiently. In the second subsection we will discuss various properties of the distance, and in the last subsection we will provide an alternative justification to the distance. The aim of this justification is to provide more theoretical evidence for our distance.
\subsection{The definition}
We begin by giving some basic definitions. By a \emph{data set} $D$ we mean a finite collection of samples lying in some finite space $\Omega$. The set $\Omega$ is called \emph{sample space}, and from now on we will denote this space by the letter $\Omega$. The number of elements in $\Omega$ is denoted by $\abs{\Omega}$. The number of samples in the data set $D$ is denoted by $\abs{D}$.

As we said in the introduction, our goal is not to define a distance directly on data sets but rather through some statistics evaluated from the data sets. In order to do so, we define a \emph{feature function} $\funcdef{S}{\Omega}{\real^N}$ to map a point in the sample space to a real vector. Throughout this section $S$ will indicate some given feature function and $N$ will indicate the dimension of the range space of $S$. We will also denote the $i^{\text{th}}$ component of $S$ by $S_i$. Note that if we have several feature functions, then we can join them into one big feature function. A \emph{frequency} $\theta \in \real^N$ of $S$ taken with respect to a data set $D$ is the average of values of $S$ taken over the data set, that is, $\theta = \frac{1}{\abs{D}}\sum_{\omega \in D} S(\omega)$. We denote this frequency by $\freq{S}{D}$.

Although we do not make any assumptions concerning the size of $\Omega$, some of our choices are motivated by thinking that $\abs{\Omega}$ can be very large --- so large that even the simplest operation, say, enumerating all the elements in $\Omega$, is not tractable. On the other hand, we assume that $N$ is such that an algorithm executable in, say, $O(N^3)$ time is feasible. In other words, we seek a distance whose evaluation time does not depend of the size of $\Omega$ but rather of $N$.

Let $\mathbb{P}$ be the set of all distributions defined on $\Omega$. Given a feature function $S$ and a frequency $\theta$ (calculated from some data set) we say that a distribution $p \in \mathbb{P}$ satisfies the frequency $\theta$ if $\mean{p}{S} = \theta$. We also define a \emph{constrained set of distributions}
\[
\constp{S}{\theta} = \set{p \in \mathbb{P} \mid \mean{p}{S} = \theta}
\]
to be the set of the distributions satisfying $\theta$. The idea behind this is as follows: From a given data set we calculate some statistics, and then we examine the distributions that can produce such frequencies.

We interpret the sets $\mathbb{P}$ and $\constp{S}{\theta}$ as \emph{geometrical objects}. This is done by enumerating the points in $\Omega$, that is, we think that $\Omega = \enset{1,2,}{\abs{\Omega}}$. We can now represent each distribution $p \in \mathbb{P}$ by a vector $u \in \real^{\abs{\Omega}}$ by setting $u_i = p(i)$. Clearly, $\mathbb{P}$ can be represented by the vectors in $\real^{\abs{\Omega}}$ having only non-negative elements and summing to one. In fact, $\mathbb{P}$ is a simplex in $\real^{\abs{\Omega}}$. Similarly, we can give an alternative definition for $\constp{S}{\theta}$ by saying
\begin{equation}
\constp{S}{\theta} = \set{u \in \real^{\abs{\Omega}} \mid \sum_{i \in \Omega }S(i)u_i = \theta, \sum_{i \in \Omega }u_i = 1, u \geq 0}.
\label{eq:constpdef}
\end{equation}
Let us now study the set $\constp{S}{\theta}$. In order to do so, we define a \emph{constrained space}
\[
\const{S}{\theta} = \set{u \in \real^{\abs{\Omega}} \mid \sum_{i \in \Omega }S(i)u_i = \theta, \sum_{i \in \Omega }u_i = 1},
\]
that is, we drop the last condition from Eq.~\ref{eq:constpdef}. The set $\constp{S}{\theta}$ is included in $\const{S}{\theta}$; the set $\constp{S}{\theta}$ consists of the non-negative vectors from $\const{S}{\theta}$. Note that the constraints defining $\const{S}{\theta}$ are vector products. This implies that $\const{S}{\theta}$ is an affine space, and that, given two different frequencies $\theta_1$ and $\theta_2$, the spaces $\const{S}{\theta_1}$ and $\const{S}{\theta_2}$ are parallel. 
\begin{example}
Let us illustrate the discussion above with a simple example. Assume that $\Omega = \set{A,B,C}$. We can then imagine the distributions as vectors in $\real^3$. The set $\mathbb{P}$ is the triangle having $\pr{1, 0, 0}$, $\pr{0, 1, 0}$, and $\pr{0, 0, 1}$ as corner points (see Figure~\ref{fig:ex1plot}). Define a feature function $S$ to be
\[
S(\omega) = \left\{
\begin{array}{ll}
1 & \omega = C \\
0 & \omega \neq C.
\end{array}\right.
\]
The frequency $\freq{S}{D}$ is the proportion of $C$ in a data set $D$. Let $D_1 = \pr{C, C, C, A}$ and $D_2 = \pr{C, A, B, A}$. Then $\freq{S}{D_1} = 0.75$ and $\freq{S}{D_2} = 0.25$. The spaces $\const{S}{0.25}$ and $\const{S}{0.75}$ are parallel lines (see Figure~\ref{fig:ex1plot}). The distribution sets $\constp{S}{0.25}$ and $\constp{S}{0.75}$ are the segments of the lines $\const{S}{0.25}$ and $\const{S}{0.75}$, respectively.
\begin{figure}[ht!]
\center
\begin{minipage}{7cm}
\includegraphics[width=7cm]{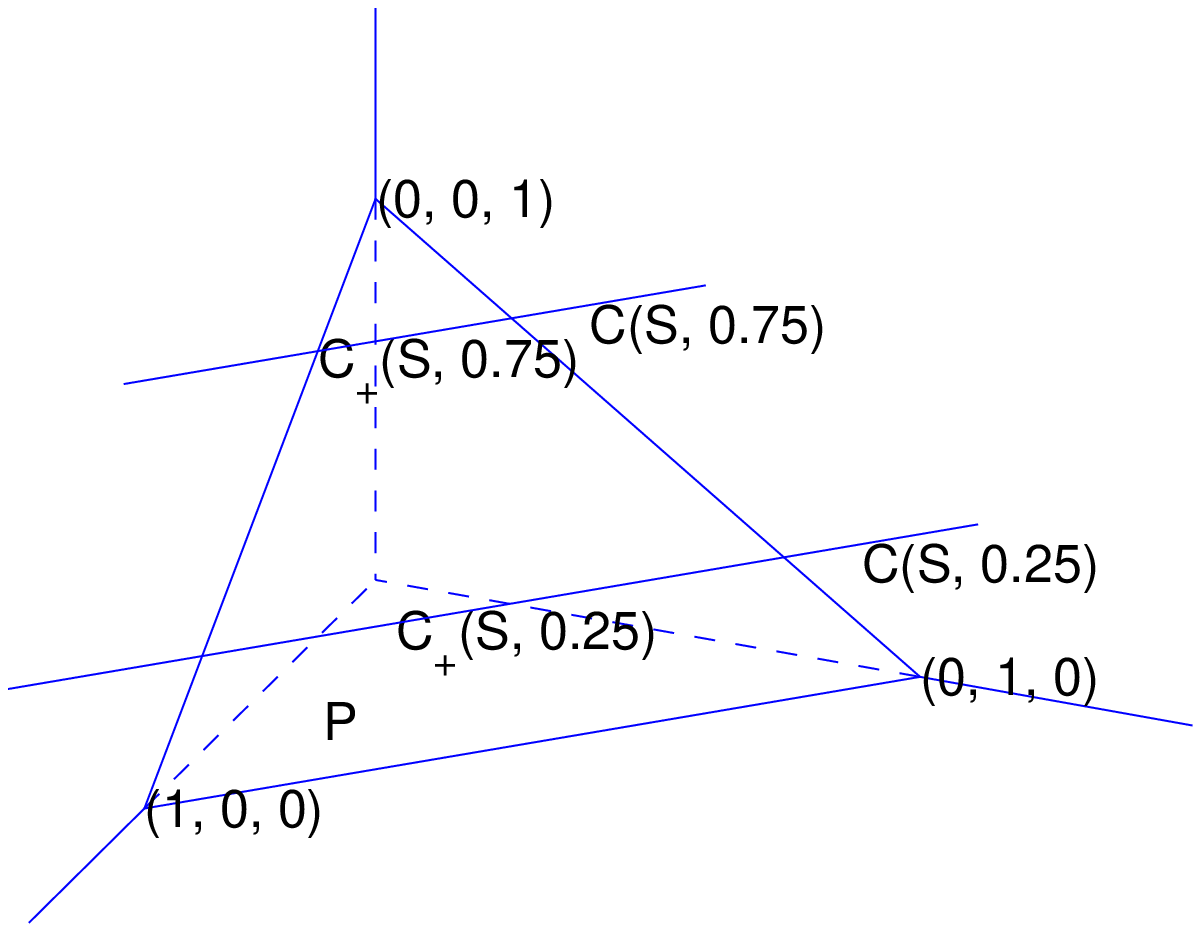}
\end{minipage}
\begin{minipage}{7cm}
\includegraphics[width=7cm]{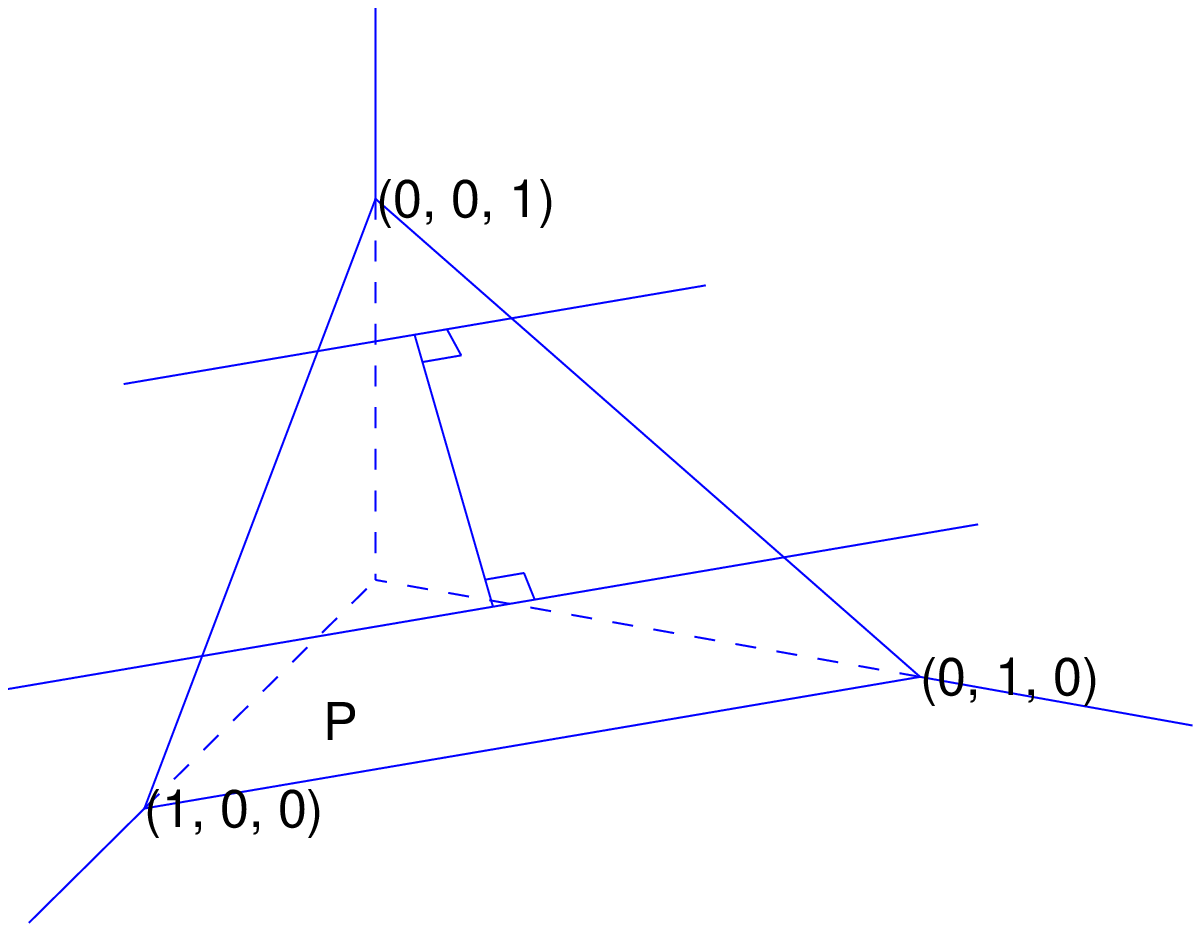}
\end{minipage}
\caption{A geometrical interpretation of the distribution sets for $\abs{\Omega} = 3$. In the left figure, the set $\mathbb{P}$, that is, the set of all distributions, is a triangle. The constrained spaces $\const{S}{0.25}$ and $\const{S}{0.75}$ are parallel lines and the distribution sets $\constp{S}{0.25}$ and $\constp{S}{0.75}$ are segments of the constrained spaces. In the right figure we added a segment perpendicular to the constraint spaces. This segment has the shortest length among the segments connecting the constrained spaces.}
\label{fig:ex1plot}
\end{figure}
\label{ex:illustration}
\end{example}
The idea of interpreting distributions as geometrical objects is not new. For example, a well-known boolean query problem  is solved by applying linear programming to the constrained sets of distributions~\citep{hailperin65inequalities, calders03thesis}.

Let us revise some elementary Euclidean geometry: Assume that we are given two parallel affine spaces $\mathcal{A}_1$ and $\mathcal{A}_2$. There is a natural way of measuring the distance between these two spaces. This is done by taking the length of the shortest segment going from a point in $\mathcal{A}_1$ to a point in $\mathcal{A}_2$ (for example see the illustration in Figure~\ref{fig:ex1plot}). We know that the segment has the shortest length if and only if it is orthogonal with the affine spaces. We also know that if we select a point $a_1 \in \mathcal{A}_1$ having the shortest norm, and if we similarly select $a_2 \in \mathcal{A}_2$, then the segment going from $a_1$ to $a_2$ has the shortest length.

The preceeding discussion and the fact that the constrained spaces are affine motivates us to give the following definition: Assume that we are given two data sets, namely $D_1$ and $D_2$ and a feature function $S$. Let us shorten the notation $\const{S}{\freq{S}{D_i}}$ by $\const{S}{D_i}$. We pick a vector from each constrained space having the shortest norm
\[
u_i = \underset{u \in \const{S}{D_i}}{\argmin} \norm{u}, \quad i = 1,2.
\]
We define the distance between $D_1$ and $D_2$ to be
\begin{equation}
\dist{D_1}{D_2}{S} = \sqrt{\abs{\Omega}}\norm{u_1-u_2}.
\label{eq:def}
\end{equation}
The reasons for having the factor $\sqrt{\abs{\Omega}}$ will be given later. We will refer to this distance as \emph{Constrained Minimum (CM) distance}. We should emphasise that $u_1$ or $u_2$ may have negative elements. Thus the CM distance is \emph{not} a distance between two distributions; it is rather a distance based on the frequencies of a given feature function and is motivated by the geometrical interpretation of the distribution sets.

The main reason why we define the CM distance using the constrained spaces $\const{S}{D_i}$ and not the distribution sets $\constp{S}{D_i}$ is that we can evaluate the CM distance efficiently. We discussed earlier that $\Omega$ may be very large so it is crucial that the evaluation time of a distance does not depend on $\abs{\Omega}$. The following theorem says that the CM distance can be represented using the frequencies and a covariance matrix
\[
\cov{}{}{S} = \frac{1}{\abs{\Omega}}\sum_{\omega \in \Omega}S(\omega)S(\omega)^T-\pr{\frac{1}{\abs{\Omega}}\sum_{\omega \in \Omega}S(\omega)}\pr{\frac{1}{\abs{\Omega}}\sum_{\omega \in \Omega}S(\omega)}^T.
\]
\begin{theorem}
\label{thr:calc}
Assume that $\cov{}{}{S}$ is invertible. For the CM distance between two data sets $D_1$ and $D_2$ we have
\[
\dist{D_1}{D_2}{S}^2 = \pr{\theta_1-\theta_2}^T\cov{}{-1}{S}\pr{\theta_1-\theta_2},
\]
where $\theta_i = \freq{S}{D_i}$.
\end{theorem}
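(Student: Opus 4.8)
The plan is to turn the geometric minimum-norm description of the CM distance into a constrained least-squares problem, solve it in closed form, and then evaluate the resulting quadratic form by block-matrix inversion. First I would encode the two affine constraints defining $\const{S}{\theta}$ as a single linear system. Let $A$ be the $N\times\abs{\Omega}$ matrix whose $\omega$-th column is $S(\omega)$, let $\mathbf{1}\in\real^{\abs{\Omega}}$ be the all-ones vector, and let $B$ be the $\pr{N+1}\times\abs{\Omega}$ matrix obtained by stacking $A$ on top of $\mathbf{1}^T$. Then $u\in\const{S}{\theta}$ if and only if $Bu=c$, where $c\in\real^{N+1}$ is the frequency $\theta$ with a trailing $1$ appended (encoding the normalisation $\mathbf{1}^Tu=1$). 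In this language the vector $u_i$ of shortest norm in $\const{S}{D_i}$ is exactly the minimum-norm solution of $Bu=c_i$.

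Assuming $B$ has full row rank (to be verified), the minimum-norm solution is the classical pseudoinverse expression $u_i=B^T\pr{BB^T}^{-1}c_i$. Subtracting the two, the components of $c_1$ and $c_2$ coming from the trailing $1$ cancel, so
\[
u_1-u_2=B^T\pr{BB^T}^{-1}\tilde\delta,\qquad \tilde\delta=\pr{\theta_1-\theta_2}\text{ with a trailing }0.
\]
Expanding $\norm{u_1-u_2}^2=\pr{u_1-u_2}^T\pr{u_1-u_2}$ and using the symmetry of $\pr{BB^T}^{-1}$ collapses $\pr{BB^T}^{-1}BB^T\pr{BB^T}^{-1}$ into a single inverse, giving $\norm{u_1-u_2}^2=\tilde\delta^T\pr{BB^T}^{-1}\tilde\delta$.

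Next I would identify $BB^T$ with the covariance matrix. A direct computation of the blocks, together with the defining identity $\frac{1}{\abs{\Omega}}\sum_{\omega}S(\omega)S(\omega)^T=\cov{}{}{S}+mm^T$ for the mean $m=\frac{1}{\abs{\Omega}}\sum_{\omega\in\Omega}S(\omega)$, yields
\[
BB^T=\abs{\Omega}\begin{pmatrix}\cov{}{}{S}+mm^T & m \\ m^T & 1\end{pmatrix}.
\]
Since $\tilde\delta$ has a zero last coordinate, $\tilde\delta^T\pr{BB^T}^{-1}\tilde\delta$ only probes the top-left $N\times N$ block of $\pr{BB^T}^{-1}$. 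By the Schur-complement formula this block equals the inverse of $\pr{\cov{}{}{S}+mm^T}-m\cdot 1\cdot m^T=\cov{}{}{S}$, scaled by $1/\abs{\Omega}$; the rank-one term $mm^T$ cancels exactly. Hence $\norm{u_1-u_2}^2=\frac{1}{\abs{\Omega}}\pr{\theta_1-\theta_2}^T\cov{}{-1}{S}\pr{\theta_1-\theta_2}$, and multiplying by $\abs{\Omega}$ as prescribed by Eq.~\ref{eq:def} gives the stated identity. This cancellation also explains the $\sqrt{\abs{\Omega}}$ normalisation.

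The only nonroutine point — the main obstacle — is justifying that $BB^T$ is invertible so that the minimum-norm formula applies. This follows from the very Schur complement just computed: its value is $\cov{}{}{S}$, which is invertible by hypothesis, and the $(2,2)$ block is $1$, so $\det BB^T=\abs{\Omega}^{N+1}\det\cov{}{}{S}\neq 0$ and $B$ indeed has full row rank. The remaining work is the bookkeeping of the block inversion and confirming the cancellation of the $mm^T$ term, which is precisely where the definition of $\cov{}{}{S}$ as a second moment minus the outer product of means does the work.
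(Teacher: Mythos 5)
Your proof is correct and follows essentially the same route as the paper's: the minimum-norm point $u_i=B^T\pr{BB^T}^{-1}c_i$ is exactly what the paper obtains via Lagrange multipliers (its $(N+1)\times(N+1)$ matrix $A$ is your $BB^T$), and your Schur-complement identification of the top-left block of $\pr{BB^T}^{-1}$ with $\abs{\Omega}^{-1}\cov{}{-1}{S}$ is the paper's entrywise verification that $\abs{\Omega}B=\cov{}{-1}{S}$ carried out in closed form. There are no gaps; the full-row-rank justification you supply is a detail the paper only asserts in passing.
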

The proofs for the theorems are given in Appendix.

The preceding theorem shows that we can evaluate the distance using the covariance matrix and frequencies. If we assume that evaluating a single component of the feature function $S$ is a unit operation, then the frequencies can be calculated in $O(N\abs{D_1}+N\abs{D_2})$ time. The evaluation time of the covariance matrix is $O(\abs{\Omega}N^2)$ but we assume that $S$ is such that we know a closed form for the covariance matrix (such cases will be discussed in Section~\ref{sec:cmbin}), that is, we assume that we can evaluate the covariance matrix in $O(N^2)$ time. Inverting the matrix takes $O(N^3)$ time and evaluating the distance itself is $O(N^2)$ operation. Note that calculating frequencies and inverting the covariance matrix needs to be done only once: for example, assume that we have $k$ data sets, then calculating the distances between every data set pair can be done in $O\pr{N\sum_i^k\abs{D_i}+N^3+k^2N^2}$ time.
 
\begin{example}
Let us evaluate the distance between the data sets given in Example~\ref{ex:illustration} using both the definition of the CM distance and Theorem~\ref{thr:calc}. We see that the shortest vector in $\const{S}{0.25}$ is $u_1 = \pr{\frac{3}{8}, \frac{3}{8}, \frac{1}{4}}$. Similarly, the shortest vector in $\const{S}{0.75}$ is $u_2 = \pr{\frac{1}{8}, \frac{1}{8}, \frac{3}{4}}$. Thus the CM distance is equal to
\[
\dist{D_1}{D_2}{S} = \sqrt{3}\norm{u_1-u_2} = \sqrt{3}\sqrta{\frac{2^2}{8^2}+\frac{2^2}{8^2}+\frac{2^2}{4^2}} = \frac{3}{\sqrt{8}}.
\]
The covariance of $S$ is equal to $\cov{}{}{S} = \frac{1}{3}-\frac{1}{3}\frac{1}{3} = \frac{2}{9}$. Thus Theorem~\ref{thr:calc} gives us
\[
\dist{D_1}{D_2}{S} = \sqrta{\cov{}{-1}{S}\pr{\frac{3}{4}-\frac{1}{4}}^2} = \sqrta{\frac{9}{2}\pr{\frac{2}{4}}^2} = \frac{3}{\sqrt{8}}.
\]
\end{example}
From Theorem~\ref{thr:calc} we see a reason to have the factor $\sqrt{\abs{\Omega}}$ in Eq.~\ref{eq:def}: Assume that we have two data sets $D_1$ and $D_2$ and a feature function $S$. We define a new sample space $\Omega' = \set{\pr{\omega, b} \mid \omega \in \Omega, b = 0, 1}$ and transform the original data sets into new ones by setting $D_i' = \set{\pr{\omega, 0} \mid \omega \in D_i}$. We also expand $S$ into $\Omega'$  by setting $S'(\omega, 1) = S'(\omega, 0) = S(\omega)$. Note that $S(D_i) = S'(D_i')$ and that $\cov{}{}{S} = \cov{}{}{S'}$ so Theorem~\ref{thr:calc} says that the CM distance has not changed during this transformation. This is very reasonable since we did not actually change anything essential: We simply added a bogus variable into the sample space, and we ignored this variable during the feature extraction. The size of the new sample space is $\abs{\Omega'} = 2\abs{\Omega}$. This means that the difference $\norm{u_1-u_2}$ in Eq.~\ref{eq:def} is smaller by the factor $\sqrt{2}$. The factor $\sqrt{\abs{\Omega}}$ is needed to negate this effect.
\subsection{Properties}
\label{sec:properties}
We will now list some important properties of $\dist{D_1}{D_2}{S}$.
\begin{theorem} $\dist{D_1}{D_2}{S}$ is a pseudo metric.
\label{thr:metric}
\end{theorem}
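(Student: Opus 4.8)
The plan is to reduce the claim to the standard fact that the Euclidean distance is a metric, exploiting the closed form supplied by Theorem~\ref{thr:calc}. By that theorem,
\[
\dist{D_1}{D_2}{S}^2 = \pr{\theta_1-\theta_2}^T\cov{}{-1}{S}\pr{\theta_1-\theta_2}, \qquad \theta_i = \freq{S}{D_i},
\]
so the whole quantity depends on the data sets only through their frequency vectors $\theta_1,\theta_2$. The matrix $\cov{}{}{S}$ is a genuine covariance matrix (the covariance of $S$ under the uniform distribution on $\Omega$), hence symmetric and positive semidefinite; under the standing assumption that it is invertible it is in fact positive definite, and therefore so is its inverse.

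First I would factor the inverse covariance. Since $\cov{}{-1}{S}$ is symmetric positive definite, it admits a factorization $\cov{}{-1}{S} = B^TB$ with $B$ invertible (via the Cholesky decomposition or the symmetric square root $\cov{}{-1/2}{S}$). Substituting this into the expression above gives
\[
\dist{D_1}{D_2}{S} = \norm{B\theta_1 - B\theta_2},
\]
that is, the CM distance is exactly the ordinary Euclidean distance between the transformed frequencies $B\theta_1$ and $B\theta_2$. The three pseudo-metric axioms are now immediate: non-negativity and $\dist{D}{D}{S}=0$ hold because $\norm{\cdot}$ is a norm and $B\theta-B\theta=0$; symmetry follows from $\norm{x-y}=\norm{y-x}$; and the triangle inequality $\dist{D_1}{D_3}{S}\le\dist{D_1}{D_2}{S}+\dist{D_2}{D_3}{S}$ follows by applying the triangle inequality for $\norm{\cdot}$ to the points $B\theta_1,B\theta_2,B\theta_3$.

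The only genuine content is the positive-definiteness of $\cov{}{}{S}$, which is where I would be most careful: positive semidefiniteness is standard for any covariance matrix, and invertibility is assumed, so the two together give positive definiteness and the existence of $B$. This is also the point at which it becomes clear why the statement claims a \emph{pseudo} metric rather than a metric: the map $D\mapsto B\freq{S}{D}$ need not be injective, since two distinct data sets can share the same frequency vector $\freq{S}{D}$ and hence lie at distance zero. Thus the separation axiom $\dist{D_1}{D_2}{S}=0\Rightarrow D_1=D_2$ can fail, and only a pseudo metric can be claimed.
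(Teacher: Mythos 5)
Your argument is correct and is essentially the paper's own proof with the details filled in: the paper simply observes that $\cov{}{}{S}$ depends only on $S$ and is positive definite, so the CM distance is a Mahalanobis distance and hence a pseudo metric, which is exactly the reduction to a Euclidean distance via the factorization $\cov{}{-1}{S}=B^TB$ that you carry out explicitly. Your added remarks --- that positive semidefiniteness plus the invertibility hypothesis gives positive definiteness, and that the separation axiom fails because distinct data sets can share a frequency vector --- are accurate and correctly explain why only a pseudo metric is claimed.
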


The following theorem says that adding external data set to the original data sets makes the distance smaller which is very reasonable property.

\begin{theorem}
Assume three data sets $D_1$, $D_2$, and $D_3$ over the same set of items. Assume further that $D_1$ and $D_2$ have the same number of data points and let $\epsilon = \frac{\abs{D_3}}{\abs{D_1}+\abs{D_3}}$. Then \[\dist{D_1 \cup D_3}{D_2 \cup D_3}{S} = (1-\epsilon)\dist{D_1}{D_2}{S}.\]
\label{thr:augdata}
\end{theorem}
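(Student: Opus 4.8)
The plan is to reduce everything to Theorem~\ref{thr:calc} and then simply compute the frequencies of the two unions. The crucial observation is that the covariance matrix $\cov{}{}{S}$ depends only on $S$ and $\Omega$, not on any data set, so it is identical in all three distance evaluations. Hence the whole claim will follow once I understand how the frequency vector behaves under multiset union.

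First I would record that for a multiset union the feature sum is additive, so that
\[
\freq{S}{D_1 \cup D_3} = \frac{\abs{D_1}\freq{S}{D_1} + \abs{D_3}\freq{S}{D_3}}{\abs{D_1}+\abs{D_3}} = \pr{1-\epsilon}\freq{S}{D_1} + \epsilon\freq{S}{D_3},
\]
using that $1-\epsilon = \abs{D_1}/\pr{\abs{D_1}+\abs{D_3}}$. Here is the one place where the hypothesis $\abs{D_1} = \abs{D_2}$ does real work: because the two data sets have the same size, the second union carries the \emph{same} weights,
\[
\freq{S}{D_2 \cup D_3} = \pr{1-\epsilon}\freq{S}{D_2} + \epsilon\freq{S}{D_3}.
\]
Subtracting, the common terms $\epsilon\freq{S}{D_3}$ cancel and I obtain $\freq{S}{D_1\cup D_3} - \freq{S}{D_2 \cup D_3} = \pr{1-\epsilon}\pr{\theta_1 - \theta_2}$, where $\theta_i = \freq{S}{D_i}$.

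Finally I would feed this difference into Theorem~\ref{thr:calc}. Since the covariance matrix is unchanged, the scalar $\pr{1-\epsilon}$ pulls out of both factors of the quadratic form, giving $\dist{D_1 \cup D_3}{D_2 \cup D_3}{S}^2 = \pr{1-\epsilon}^2 \dist{D_1}{D_2}{S}^2$; taking square roots, and noting that $\epsilon \in [0,1]$ so that $1-\epsilon \geq 0$, yields the claim. There is no genuine obstacle here beyond bookkeeping: the only point that requires care is that the equal-cardinality assumption is exactly what forces the two convex combinations to share the weight $\epsilon$, so that the external data set $D_3$ drops out of the difference. Without it the two unions would be weighted differently and this clean factorisation would fail.
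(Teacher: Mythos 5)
Your proposal is correct and is essentially the paper's own argument: the paper likewise writes the union frequencies as $(1-\epsilon)\theta_1+\epsilon\theta_3$ and $(1-\epsilon)\theta_2+\epsilon\theta_3$ and then invokes Theorem~\ref{thr:calc}. You merely spell out the cancellation of $\epsilon\theta_3$ and the role of the equal-cardinality hypothesis more explicitly than the paper does.
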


\begin{theorem}
Let $A$ be a $M \times N$ matrix and $b$ a vector of length $M$. Define $T(\omega) = AS(\omega)+b$. It follows that $\dist{D_1}{D_2}{T} \leq \dist{D_1}{D_2}{S}$ for any $D_1$ and $D_2$.
\label{thr:linear}
\end{theorem}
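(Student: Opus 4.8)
The plan is to avoid Theorem~\ref{thr:calc} as the primary tool and instead argue directly from the geometric definition, exploiting that an affine reparametrisation of the features can only weaken the constraints that cut out the constrained spaces.

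First I would relate the two constrained spaces. Since $T$ is affine in $S$, averaging over $D_i$ gives $\freq{T}{D_i} = A\freq{S}{D_i}+b$, so the target frequencies transform by the same affine map. More to the point, if $u \in \real^{\abs{\Omega}}$ satisfies the $S$-constraints $\sum_i S(i) u_i = \freq{S}{D_i}$ and $\sum_i u_i = 1$, then $\sum_i T(i) u_i = A\sum_i S(i)u_i + b\sum_i u_i = A\freq{S}{D_i}+b = \freq{T}{D_i}$, so $u$ also satisfies the $T$-constraints. Hence $\const{S}{D_i} \subseteq \const{T}{D_i}$ for $i = 1,2$. Passing to the homogeneous (direction) spaces $V_S$ and $V_T$ of these parallel affine spaces, namely the kernels of the respective constraint maps, the identical computation with the right-hand sides set to $0$ gives $V_S \subseteq V_T$, and therefore $V_T^{\perp} \subseteq V_S^{\perp}$.

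Next I would use the fact, already stated in the text, that the CM distance is $\sqrt{\abs{\Omega}}$ times the length of the shortest segment between the two parallel affine spaces, which equals $\norm{\proj{V^{\perp}}{a_1-a_2}}$ for the common direction space $V$ and any representatives $a_1,a_2$ in the two spaces. The inclusion $\const{S}{D_i} \subseteq \const{T}{D_i}$ lets me pick a single pair $a_1 \in \const{S}{D_1}$ and $a_2 \in \const{S}{D_2}$ (both nonempty, since they contain the empirical distributions) that simultaneously lie in the $T$-spaces, so the \emph{same} difference vector computes both distances: $\dist{D_1}{D_2}{S} = \sqrt{\abs{\Omega}}\norm{\proj{V_S^{\perp}}{a_1-a_2}}$ and $\dist{D_1}{D_2}{T} = \sqrt{\abs{\Omega}}\norm{\proj{V_T^{\perp}}{a_1-a_2}}$. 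Since $V_T^{\perp}\subseteq V_S^{\perp}$ and orthogonal projection onto a smaller subspace never increases the norm (formally $\proj{V_T^{\perp}}{x} = \proj{V_T^{\perp}}{\proj{V_S^{\perp}}{x}}$ and projections are contractions), the $T$-distance is at most the $S$-distance, which is the claim; the common factor $\sqrt{\abs{\Omega}}$ plays no role beyond cancelling.

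The step I expect to cost the most care is not the inequality itself but choosing the right framework to make it hold in full generality: the natural reflex is to invoke Theorem~\ref{thr:calc} and reduce to a Mahalanobis comparison, but the transformed covariance $\cov{}{}{T} = A\cov{}{}{S}A^T$ need not be invertible when $A$ drops rank, so that route does not even apply in the degenerate case, whereas the geometric argument needs no invertibility hypothesis. When $\cov{}{}{T}$ \emph{is} invertible I would still run the algebraic computation as a cross-check: with $\Delta = \freq{S}{D_1}-\freq{S}{D_2}$ and $C = \cov{}{}{S}$, Theorem~\ref{thr:calc} gives $\dist{D_1}{D_2}{T}^2 = (A\Delta)^T(ACA^T)^{-1}(A\Delta)$ and $\dist{D_1}{D_2}{S}^2 = \Delta^T C^{-1}\Delta$; writing $z = C^{-1/2}\Delta$ and $B = AC^{1/2}$ turns the former into $z^T B^T(BB^T)^{-1}B z$, and since $B^T(BB^T)^{-1}B$ is the orthogonal projection onto the row space of $B$ it is a contraction, so this is at most $z^T z = \Delta^T C^{-1}\Delta$, recovering the bound.
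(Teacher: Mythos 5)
Your proof is correct and follows essentially the same route as the paper: the paper's Lemma~\ref{thr:transform} establishes exactly the inclusion $\const{S}{\theta_i} \subseteq \const{T}{\phi_i}$ and then concludes from the fact that the CM distance is the shortest distance between the parallel affine spaces, which is your projection-contraction argument made explicit. Your added observations --- that the geometric route needs no invertibility of $\cov{}{}{T}$, and the Mahalanobis cross-check via the projector $B^T(BB^T)^{-1}B$ --- are sound but supplementary.
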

\begin{corollary}
Adding extra feature functions cannot decrease the distance.
\end{corollary}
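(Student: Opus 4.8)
The plan is to recognise this corollary as a direct consequence of Theorem~\ref{thr:linear}: appending features to a feature function is, from the point of view of the smaller function, nothing but a coordinate projection, and a projection is a linear map of exactly the form that theorem permits.

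Concretely, let $S : \Omega \to \real^N$ be the original feature function, and suppose we enlarge it by appending $M$ further features, collected in a function $R : \Omega \to \real^M$, to obtain the augmented feature function $\tilde{S} : \Omega \to \real^{N+M}$ defined by stacking, $\tilde{S}(\omega) = \vect{S(\omega), R(\omega)}$. First I would observe that $S$ is recovered from $\tilde{S}$ by a linear map carrying no translation term: setting $A = \spr{I_N \;\; 0}$ to be the $N \times (N+M)$ matrix whose left block is the identity and whose right block is zero, and taking $b = 0$, we have $S(\omega) = A\tilde{S}(\omega) + b$ for every $\omega \in \Omega$.

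Next I would apply Theorem~\ref{thr:linear}, letting its feature function be our augmented function $\tilde{S}$ and taking the linear map to be the projection $A$ above. The theorem then yields
\[
\dist{D_1}{D_2}{S} = \dist{D_1}{D_2}{A\tilde{S} + b} \leq \dist{D_1}{D_2}{\tilde{S}}
\]
for all data sets $D_1, D_2$, which is precisely the statement that enlarging the feature function cannot decrease the distance.

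The only point requiring care is the direction of the inequality. Theorem~\ref{thr:linear} compares a transformed function against the function it was transformed \emph{from}, so one must feed it the \emph{larger} function $\tilde{S}$ and recover the \emph{smaller} function $S$ as its image, rather than the reverse. Once the projection is identified there is no genuine obstacle, since the bound is handed to us by the previously established theorem; in particular, no new estimate and no separate invertibility assumption on the covariance matrix is needed.
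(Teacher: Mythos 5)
Your proof is correct and is exactly the argument the paper intends: the corollary is stated as an immediate consequence of Theorem~\ref{thr:linear}, obtained by writing the original feature function as a coordinate projection $A\tilde{S}$ of the augmented one, which is precisely what you do. You also correctly handle the one point of potential confusion, namely that the theorem must be applied with the \emph{larger} function in the role of $S$ and the smaller one as its linear image.
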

\begin{corollary}
Let $A$ be an invertible $N \times N$ matrix and $b$ a vector of length $N$. Define $T(\omega) = AS(\omega)+b$. It follows that $\dist{D_1}{D_2}{T} = \dist{D_1}{D_2}{S}$ for any $D_1$ and $D_2$.
\label{thr:ind}
\end{corollary}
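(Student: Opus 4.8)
The plan is to obtain the equality by applying Theorem~\ref{thr:linear} twice, once in each direction, exploiting the fact that an invertible affine map has an affine inverse. The monotonicity statement of Theorem~\ref{thr:linear} is an inequality, and the natural way to upgrade an inequality to an equality is to show that it also holds with the two feature functions interchanged.

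First I would invoke Theorem~\ref{thr:linear} directly on the given transformation $T(\omega) = AS(\omega)+b$. Since $A$ is in particular an $N \times N$ matrix, it fits the hypothesis of Theorem~\ref{thr:linear} with $M = N$, and $b$ has the required length $M = N$. This yields the inequality $\dist{D_1}{D_2}{T} \leq \dist{D_1}{D_2}{S}$.

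Next I would express $S$ as an affine function of $T$. Because $A$ is invertible, solving $T(\omega) = AS(\omega)+b$ for $S(\omega)$ gives $S(\omega) = A^{-1}T(\omega) - A^{-1}b$. This is again a transformation of exactly the form handled by Theorem~\ref{thr:linear}, now with the invertible matrix $A^{-1}$ in the role of $A$ and the vector $-A^{-1}b$ in the role of $b$; both have the correct dimensions. Applying Theorem~\ref{thr:linear} with the roles of the two feature functions exchanged produces the reverse inequality $\dist{D_1}{D_2}{S} \leq \dist{D_1}{D_2}{T}$. Combining the two inequalities gives $\dist{D_1}{D_2}{S} = \dist{D_1}{D_2}{T}$, which is the claim.

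There is no genuine analytic obstacle here: the whole argument rests on the monotonicity already established in Theorem~\ref{thr:linear}, together with the observation that invertibility of $A$ makes the relationship between $S$ and $T$ symmetric. The only point deserving care is to verify that the inverse map is honestly of the affine form required by Theorem~\ref{thr:linear} --- that is, that $A^{-1}$ is a legitimate $N \times N$ matrix and $-A^{-1}b$ a length-$N$ vector --- which is immediate from the invertibility of $A$.
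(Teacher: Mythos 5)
Your proof is correct and matches the paper's intent: the corollary is obtained from Theorem~\ref{thr:linear} together with the invertibility of the affine map, exactly as you argue. The only cosmetic difference is that the paper's Lemma~\ref{thr:transform} already carries an explicit equality clause (equality holds when the frequency $\theta$ is uniquely determined by $\phi$, which invertibility guarantees), so the result follows from a single application, whereas you derive it from the two-sided inequality $\dist{D_1}{D_2}{T}\leq\dist{D_1}{D_2}{S}$ and $\dist{D_1}{D_2}{S}\leq\dist{D_1}{D_2}{T}$ --- an equally valid route.
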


Corollary~\ref{thr:ind} has an interesting interpretation. Note that $\freq{T}{D} = A\freq{S}{D}+b$ and that $\freq{S}{D} = A^{-1}\pr{\freq{T}{D}-b}$. This means that if we know the frequencies $\freq{S}{D}$, then we can infer the frequencies $\freq{T}{D}$ without a new data scan. Similarly, we can infer $\freq{S}{D}$ from $\freq{T}{D}$. We can interpret this relation by thinking that $\freq{S}{D}$ and $\freq{T}{D}$ are merely different representations of the same feature information. Corollary~\ref{thr:ind} says that the CM distance is equal for any such representation.

\subsection{Alternative Characterisation of the CM Distance}
We derived our distance using geometrical interpretation of the distribution sets. In this section we will provide an alternative way for deriving the CM distance. Namely, we will show that if some distance is of Mahalanobis type and satisfies some mild assumptions, then this distance is proportional to the CM distance. The purpose of this theorem is to provide more theoretical evidence to our distance.

We say that a distance $d$ is of Mahalanobis type if
\[
\dista{D_1}{D_2}{S}^2 = \pr{\theta_1-\theta_2}^TC(S)^{-1}\pr{\theta_1-\theta_2},
\]
where $\theta_1 = \freq{S}{D_1}$ and $\theta_2 = \freq{S}{D_2}$ and $C(S)$ maps a feature function $S$ to a symmetric $N\times N$ matrix. Note that if $C(S) = \cov{}{}{S}$, then the distance $d$ is the CM distance. We set $\mathbb{M}$ to be the collection of all distances of Mahalanobis type. Can we justify the decision that we examine only the distances included in $\mathbb{M}$? One reason is that a distance belonging to $\mathbb{M}$ is guaranteed to be a metric. The most important reason, however, is the fact that we can evaluate the distance belonging to $\mathbb{M}$ efficiently (assuming, of course, that we can evaluate $C(S)$).

Let $d \in \mathbb{M}$ and assume that it satisfies two additional assumptions:
\begin{enumerate}
\item If $A$ is an $M \times N$ matrix and $b$ is a vector of length $M$ and if we set $T(\omega) = AS(\omega)+b$, then $C(T) = AC(S)A^T$.
\label{as:1}
\item Fix two points $\omega_1$ and $\omega_2$. Let $\funcdef{\sigma}{\Omega}{\Omega}$ be a function swapping $\omega_1$ and $\omega_2$ and mapping everything else to itself. Define $U(\omega) = S(\sigma(\omega))$. Then $\dista{\sigma(D_1)}{\sigma(D_2)}{U} = \dista{D_1}{D_2}{S}$.
\label{as:2}
\end{enumerate}
The first assumption can be partially justified if we consider that $A$ is an invertible square matrix. In this case the assumption is identical to $\dista{\cdot}{\cdot}{AS+b} = \dista{\cdot}{\cdot}{S}$. This is to say that the distance is independent of the representation of the frequency information. This is similar to Corollary~\ref{thr:ind} given in Section~\ref{sec:properties}. We can construct a distance that would satisfy Assumption~\ref{as:1} in the invertible case but fail in a general case. We consider such distances pathological and exclude them by making a broader assumption. To justify Assumption~\ref{as:2} note that the frequencies have not changed, that is, $\freq{U}{\sigma(D)} = \freq{S}{D}$. Only the representation of single data points have changed. Our argument is that the distance should be based on the frequencies and not on the values of the data points.
\begin{theorem}
Let $d \in M$ satisfying Assumptions~\ref{as:1}~and~\ref{as:2}. If $C(S)$ is invertible, then there is a constant $c>0$, not depending on $S$, such that $\dista{\cdot}{\cdot}{S} = c\dist{\cdot}{\cdot}{S}$.
\label{thr:char}
\end{theorem}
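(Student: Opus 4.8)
The plan is to show that the defining matrix $C$ of $d$ must be a fixed positive multiple of the covariance, i.e. $C(S) = c^2\cov{}{}{S}$ for some constant $c>0$ independent of $S$; substituting this into the Mahalanobis form then gives $\dista{\cdot}{\cdot}{S} = \tfrac1c\dist{\cdot}{\cdot}{S}$ at once. The engine of the reduction is Assumption~\ref{as:1}, which transports the identity from one cleverly chosen feature function to all others. Let $E:\Omega\to\real^{\abs{\Omega}}$ be the \emph{indicator feature} $E(\omega)=e_\omega$ (the $\omega$-th standard basis vector). Every feature function factors through $E$: writing $A_S$ for the $N\times\abs{\Omega}$ matrix whose columns are the vectors $S(\omega)$, we have $S = A_S E$, so Assumption~\ref{as:1} gives $C(S) = A_S C(E) A_S^T$, while the covariance obeys the same law $\cov{}{}{S} = A_S \cov{}{}{E} A_S^T$. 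Hence it suffices to prove $C(E) = c^2\cov{}{}{E}$.

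Next I would determine $C(E)$ from the symmetry in Assumption~\ref{as:2}. For a transposition $\sigma$ of two points one has $E\circ\sigma = P_\sigma E$ with $P_\sigma$ the corresponding permutation matrix, so Assumption~\ref{as:1} yields $C(E\circ\sigma) = P_\sigma C(E) P_\sigma^T$, whereas Assumption~\ref{as:2} (the frequencies being unchanged, $\freq{E\circ\sigma}{\sigma(D)} = \freq{E}{D}$) forces $C(E\circ\sigma) = C(E)$. Combining the two gives $P_\sigma C(E) = C(E) P_\sigma$ for every transposition, hence for every permutation; a symmetric matrix commuting with the whole permutation group is necessarily of the form $C(E) = \alpha\,\mathrm{Id} + \beta J$, where $J = \mathbf{1}\mathbf{1}^T$. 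To fix the ratio $\alpha:\beta$ I would use the constant feature $T\equiv 1$, computed in two ways through Assumption~\ref{as:1}: as $T = \mathbf{1}^T E$ it gives $C(T) = \mathbf{1}^T C(E)\mathbf{1} = \alpha\abs{\Omega}+\beta\abs{\Omega}^2$, while as $T = 0\cdot E + 1$ it gives $C(T)=0$. Thus $\alpha = -\beta\abs{\Omega}$, which is exactly the ratio in $\cov{}{}{E} = \tfrac1{\abs{\Omega}}\mathrm{Id} - \tfrac1{\abs{\Omega}^2}J$; therefore $C(E) = c^2\cov{}{}{E}$ with $c^2 = -\beta\abs{\Omega}^2$. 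For positivity, since any $d\in\mathbb{M}$ is a pseudometric $C(S)$ is positive semidefinite, and for the given invertible $C(S)=c^2\cov{}{}{S}$ this forces $c^2>0$; substituting back finishes the proof.

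I expect the main obstacle to be the step $C(E\circ\sigma)=C(E)$. Assumption~\ref{as:2} is phrased through the distance, i.e. through $C^{-1}$, and the Mahalanobis form only constrains $C(E)^{-1}$ as a quadratic form on the span of the achievable frequency differences $\theta_1-\theta_2$. For the indicator feature these differences fill only the hyperplane $\set{x : \sum_i x_i = 0}$ and, correspondingly, $\cov{}{}{E}$ is singular, so the expression for $\dista{\cdot}{\cdot}{E}$ is not even literally defined. I would therefore run the symmetry argument on the \emph{reduced} indicator $G:\Omega\to\real^{\abs{\Omega}-1}$ obtained by dropping one coordinate of $E$: here $\cov{}{}{G}$ is invertible and the frequency differences span all of $\real^{\abs{\Omega}-1}$, so Assumption~\ref{as:2} does give honest matrix identities $C(G\circ\sigma)=C(G)$. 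Transpositions among the kept points act on $G$ by a genuine permutation and yield $C(G)=\alpha\,\mathrm{Id}+\beta J$ as before; a transposition involving the dropped point acts on $G$ by an \emph{invertible affine} map $G\circ\sigma = MG+v$, and Assumption~\ref{as:1} then turns $C(G\circ\sigma)=C(G)$ into $MC(G)M^T = C(G)$, whose only solutions inside $\set{\alpha\,\mathrm{Id}+\beta J}$ satisfy $\alpha=-\beta\abs{\Omega}$. One then transports to arbitrary $S$ by writing $S$ as an affine function of $G$ and invoking Assumption~\ref{as:1}. The genuinely delicate point that remains — the crux of the whole argument — is the invertibility of $C(G)$, which is what makes Assumption~\ref{as:2} applicable to $G$ at all; this has to be secured from the hypothesis that $d$ is defined (so that $C$ is invertible) on features whose covariance is invertible, rather than assumed outright, since before the theorem is proved invertibility of $C$ and of the covariance are not known to coincide.
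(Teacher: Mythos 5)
Your argument is, at its core, the paper's own proof in matrix clothing. The paper reduces the entries of $C(S)$, via the linearity in Assumption~\ref{as:1} applied to the point-indicator features $\delta\pr{\omega \mid \alpha}$, to a bilinear kernel $g(\omega_1,\omega_2) = C\pr{\delta\pr{\omega\mid\omega_1}, \delta\pr{\omega\mid\omega_2}}$ --- these indicators are exactly the components of your feature $E$, and $g$ is exactly your matrix $C(E)$. It then uses the swap of Assumption~\ref{as:2} to force $g(\omega_1,\omega_2) = a\delta\pr{\omega_1\mid\omega_2}+b$, which is your $\alpha\,\mathrm{Id}+\beta J$, and the translation invariance $C(S+b)=C(S)$ to pin $b=-a\abs{\Omega}^{-1}$, which is your constant-feature computation. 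Your packaging through the single conjugation identity $C(S)=A_SC(E)A_S^T$ and the commutant of the symmetric group is tidier than the paper's entrywise derivation (which first has to argue that $C_{ij}(S)$ depends only on $(S_i,S_j)$ and has the same functional form for all index pairs), but the mathematical content is the same. Concerning the ``delicate point'' you flag at the end: it is a genuine subtlety, but you have not introduced it --- you have only made explicit something the paper leaves implicit. The paper's one application of Assumption~\ref{as:2} is to the two-dimensional features $\pr{\delta\pr{\omega\mid\alpha},\delta\pr{\omega\mid\beta}}$, and since that assumption is phrased through $d$, hence through $C^{-1}$ evaluated on achievable frequency differences, it yields the needed identity $C(U)=C(T)$ only if those $2\times 2$ matrices are invertible and the achievable differences fill an open set; neither is verified there. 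So your reduced-indicator variant stands or falls on exactly the same unstated regularity hypothesis as the published proof, and closing it (e.g., by adding the blanket assumption that $d$ is defined on all features with invertible covariance) would repair both arguments at once.
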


\section{The CM distance and Binary Data Sets}
\label{sec:cmbin}
In this section we will concentrate on the distances between binary data sets. We will consider the CM distance based on itemset frequencies, a very popular statistics in the literature concerning binary data mining. In the first subsection we will show that a more natural way of representing the CM distance is to use parity frequencies. We also show that we can evaluate the distance in linear time. In the second subsection we will provide more theoretical evidence why the CM distance is a good distance between binary data sets.
\subsection{The CM Distance and Itemsets}
\label{sec:cmitemsets}
We begin this section by giving some definitions. We set the sample space $\Omega$ to be
\[
\Omega = \set{\omega \mid \omega = \enpr{\omega_1}{\omega_K}, \omega_i = 0,1},
\]
that is, $\Omega$ is the set of all binary vectors of length $K$. Note that $\abs{\Omega} = 2^K$. It is custom that each dimension in $\Omega$ is identified with some symbol. We do this by assigning the symbol $a_i$ to the $i^\text{th}$ dimension. These symbols are called \emph{attributes} or \emph{items}. Thus when we speak of the attribute $a_i$ we refer to the $i^\text{th}$ dimension. We denote the set of all items by $\mathbb{A} = \enset{a_1}{a_K}$. A non-empty subset of $\mathbb{A}$ is called \emph{itemset}. 

A \emph{boolean formula} $\funcdef{S}{\Omega}{\set{0,1}}$ is a feature function mapping a binary vector to a binary value. We are interested in two particular boolean formulae: Assume that we are given an itemset $B = \enset{a_{i_1}}{a_{i_L}}$. We define a \emph{conjunction function} $S_B$ to be
\[
S_B(\omega) = \omega_{i_1} \land \omega_{i_2} \land \cdots \land \omega_{i_K},
\]
that is, $S_B$ results $1$ if and only if all the variables corresponding the itemset $B$ are on. Given a data set $D$ the frequency $S_B(D)$ is called the frequency of the itemset $B$. Conjuction functions are popular and there are a lot of studies in the literature concerning finding itemsets that have large frequency~\citep[see e.g.,][]{agrawal93mining,hand02principles}. We also define a \emph{parity function} $T_B$ to be
\[
T_B(\omega) = \omega_{i_1} \oplus \omega_{i_2} \oplus \cdots \oplus \omega_{i_K},
\]
where $\oplus$ is the binary operator XOR. The function $T_B$ results $1$ if and only if the number of active variables included in $B$ are odd.

A collection $\iset{F}$ of itemsets is said to be \emph{antimonotonic} or \emph{downwardly closed} if each non-empty subset of an itemset included in $\iset{F}$ is also included in $\iset{F}$. Given a collection of itemsets $\iset{F} = \enset{B_1}{B_N}$ we extend our definition of the conjuction function by setting $S_\iset{F} = \envec{S_{B_1}}{S_{B_N}}$. We also define $T_\iset{F} = \envec{T_{B_1}}{T_{B_N}}$.

Assume that we are given an antimonotonic family $\iset{F}$ of itemsets. We can show that there is an invertible matrix $A$ such that $T_\iset{F} = AS_\iset{F}$. In other words, we can get the parity function $T_\iset{F}$ from the conjunction function $T_\iset{F}$ by an invertible linear transformation. Corollary~\ref{thr:ind} now implies that
\begin{equation}
\label{eq:andvsparity}
\dist{D_1}{D_2}{S_\iset{F}} = \dist{D_1}{D_2}{T_\iset{F}},
\end{equation}
for any $D_1$ and $D_2$. The following lemma shows that the covariance matrix $\cov{}{}{T_\iset{F}}$ of the parity function is very simple.
\begin{lemma}
\label{lem:paritycov}
Let $T_\iset{F}$ be a parity function for a family of itemsets $\iset{F}$, then $\cov{}{}{T_\iset{F}} = 0.5I$, that is, the covariance matrix is a diagonal matrix having $0.5$ at the diagonal.
\end{lemma}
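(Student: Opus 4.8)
The plan is to compute the entries of $\cov{}{}{T_\iset{F}}$ directly, after first observing that the definition of $\cov{}{}{\cdot}$ is simply the ordinary covariance matrix of the vector $T_\iset{F}$ under the \emph{uniform} distribution on $\Omega = \set{0,1}^K$: the averaging $\frac{1}{\abs{\Omega}}\sum_{\omega}$ weights every binary vector equally, so under it each coordinate $\omega_i$ is an independent fair bit. Writing $\mean{}{\cdot}$ for this uniform expectation, the $(i,j)$ entry of the matrix is $\mean{}{T_{B_i}T_{B_j}} - \mean{}{T_{B_i}}\mean{}{T_{B_j}}$, so the whole computation reduces to evaluating first and second mixed moments of parity functions.

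The single fact I would establish first is that the XOR of any non-empty set of independent fair bits is again a fair bit (an easy induction). Consequently, for every non-empty itemset $B$ the parity $T_B$ is Bernoulli with mean $\tfrac12$; since $T_B$ is $\set{0,1}$-valued we also get $\mean{}{T_B^2} = \mean{}{T_B} = \tfrac12$. This settles the diagonal: each $T_{B_i}$ has the same variance $\mean{}{T_{B_i}^2} - \mean{}{T_{B_i}}^2$, so the diagonal of $\cov{}{}{T_\iset{F}}$ is a single constant. For the off-diagonal entries I would use the combinatorial identity $T_{B_i}\oplus T_{B_j} = T_{B_i \triangle B_j}$, where $\triangle$ is symmetric difference: the items shared by $B_i$ and $B_j$ appear twice in the combined XOR and cancel. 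Combining this with the elementary $\set{0,1}$ identity $ab = \tfrac12\pr{a + b - (a\oplus b)}$ gives $\mean{}{T_{B_i}T_{B_j}} = \tfrac12\pr{\mean{}{T_{B_i}} + \mean{}{T_{B_j}} - \mean{}{T_{B_i \triangle B_j}}}$. Because the members of $\iset{F}$ are distinct itemsets, $B_i \triangle B_j$ is non-empty whenever $i \neq j$, so $T_{B_i \triangle B_j}$ is again a non-trivial parity with mean $\tfrac12$; the mixed second moment therefore collapses to $\mean{}{T_{B_i}}\mean{}{T_{B_j}}$ and the off-diagonal covariance vanishes. With a constant diagonal and vanishing off-diagonals, $\cov{}{}{T_\iset{F}}$ is a scalar multiple of $I$, and evaluating one diagonal entry fixes the scalar.

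The only real content is this pairwise-orthogonality step, namely that two distinct non-trivial parities are uncorrelated under the uniform measure; everything rests on the symmetric-difference cancellation together with a non-trivial parity having mean exactly $\tfrac12$. The remaining pieces (the product identity and the reduction of the diagonal to a single variance) are routine. The one point of care in the bookkeeping is to verify that distinctness of the itemsets in $\iset{F}$ really forces $B_i \triangle B_j \neq \emptyset$ for $i \neq j$, since this is exactly what prevents an off-diagonal term from being mistakenly treated like a diagonal one where the cancellation yields the empty parity $T_\emptyset = 0$.
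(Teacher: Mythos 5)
Your proposal is correct and, at its core, follows the same route as the paper: both proofs reduce the lemma to computing first and second moments of the parity functions under the uniform distribution on $\Omega$, obtaining $\mean{}{T_B} = \mean{}{T_B^2} = \tfrac12$ for every non-empty itemset $B$ and $\mean{}{T_{B_i}T_{B_j}} = \tfrac14 = \mean{}{T_{B_i}}\mean{}{T_{B_j}}$ for $i \neq j$. The one genuine difference is how the cross moment is obtained. The paper argues by direct counting: it splits on whether $\omega$ has an even or an odd number of ones in $B_i \cap B_j$ and checks that in either case exactly a quarter of the $2^K$ binary vectors make both parities equal to $1$. You instead route everything through $T_{B_i} \oplus T_{B_j} = T_{B_i \triangle B_j}$ together with the identity $ab = \tfrac12\pr{a+b-(a\oplus b)}$ on $\set{0,1}$, so the only fact ever needed is that a non-trivial parity has mean $\tfrac12$; this avoids the case split and makes explicit where distinctness of the itemsets enters (it forces $B_i \triangle B_j \neq \emptyset$). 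Both arguments are sound, and yours generalises a little more gracefully since the same one fact handles diagonal and off-diagonal entries alike.

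One thing you should not leave implicit: you defer "evaluating one diagonal entry to fix the scalar," but that evaluation is where the stated constant must come from, and a $\set{0,1}$-valued parity with mean $\tfrac12$ has variance $\tfrac12 - \pr{\tfrac12}^2 = \tfrac14$. So your computation delivers $\cov{}{}{T_\iset{F}} = \tfrac14 I$, not $\tfrac12 I$. Note that the paper's own proof computes exactly the same moments ($\mean{}{T_A^2} = \tfrac12$, $\mean{}{T_A T_B} = \tfrac14$) and then asserts $0.5I$ without performing the final subtraction, so the discrepancy is between the moment calculations (yours and the paper's agree) and the constant stated in the lemma and propagated into Eq.~\ref{eq:fastev}. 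Carry out the last step explicitly rather than citing the stated value; as a sanity check, for a single attribute with $\Omega = \set{0,1}$ the definition gives covariance $\tfrac12 - \tfrac14 = \tfrac14$ and CM distance $2\abs{\theta_1 - \theta_2}$, consistent with $\tfrac14 I$.
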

Theorem~\ref{thr:calc}, Lemma~\ref{lem:paritycov}, and Eq.~\ref{eq:andvsparity} imply that
\begin{equation}
\dist{D_1}{D_2}{S_\iset{F}} = \sqrt{2}\norm{\theta_1 - \theta_2},
\label{eq:fastev}
\end{equation}
where $\theta_1 = \freq{T_\iset{F}}{D_1}$ and $\theta_2 = \freq{T_\iset{F}}{D_2}$. This identity says that the CM distance can be calculated in $O(N)$ time (assuming that we know the frequencies $\theta_1$ and $\theta_2$). This is better than $O(N^3)$ time implied by Theorem~\ref{thr:calc}.
\begin{example}
\label{ex:ind}
Let $\iset{I} = \set{\set{a_j}\mid j = 1 \ldots K}$ be a family of itemsets having only one item. Note that $T_{\set{a_j}} = S_{\set{a_j}}$. Eq.~\ref{eq:fastev} implies that
\[
\dist{D_1}{D_2}{S_\iset{I}} = \sqrt{2}\norm{\theta_{1}-\theta_{2}},
\]
where $\theta_1$ and $\theta_2$ consists of the marginal frequencies of each $a_j$ calculated from $D_1$ and $D_2$, respectively. In this case the CM distance is simply the $L_2$ distance between the marginal frequencies of the individual attributes. The frequencies $\theta_1$ and $\theta_2$ resemble term frequencies (TF) used in text mining~\citep[see e.g.,][]{baldi03internet}.
\end{example}

\begin{example}
\label{ex:cov}
We consider now a case with a larger set of features. Our motivation for this is that using only the feature functions $S_\iset{I}$ is sometimes inadequate. For example, consider data sets with two items having the same individual frequencies but different correlations. In this case the data sets may look very different but according to our distance they are equal.

Let $\iset{C} = \iset{I} \cup \set{a_ja_k \mid j, k = 1\ldots K, j < k}$ be a family of itemsets such that each set contains at most two items. The corresponding frequencies contain the individual means and the pairwise correlation for all items. Let $S_{a_ja_k}$ be the conjunction function for the itemset $a_ja_k$. Let $\gamma_{jk} = \freq{S_{a_ja_k}}{D_1}-\freq{S_{a_ja_k}}{D_2}$ be the difference between the correlation frequencies. Also, let $\gamma_j = \freq{S_{a_j}}{D_1}-\freq{S_{a_j}}{D_2}$. Since
\[
T_{a_ja_k} = S_{a_j}+S_{a_k}-2S_{a_ja_k}
\]
it follows from Eq.~\ref{eq:fastev} that
\begin{equation}
\dist{D_1}{D_2}{S_\iset{C}}^2 = 2\sum_{j < k}\pr{\gamma_j+\gamma_k-2\gamma_{jk}}^2+2\sum_{j=1}^{K}\gamma_j^2.
\label{eq:distcovform}
\end{equation}
\end{example}

\subsection{Characterisation of the CM Distance for Itemsets}
The identity given in Eq.~\ref{eq:fastev} is somewhat surprising and seems less intuitive. A question arises: why this distance is more natural than some other, say, a simple $L_2$ distance between the itemset frequencies. Certainly, parity functions are less intuitive than conjunction functions. One answer is that the parity frequencies are decorrelated version of the traditional itemset frequencies.

However, we can clarify this situation from another point of view: Let $\iset{A}$ be the set of all itemsets. Assume that we are given two data sets $D_1$ and $D_2$ and define \emph{empirical distributions} $p_1$ and $p_2$ by setting
\[
p_i(\omega) = \frac{\text{number of samples in $D_i$ equal to $\omega$}}{\abs{D_i}}.
\]
The constrained spaces of $S_\iset{A}$ are singular points containing only $p_i$, that is, $\const{S_\iset{A}}{D_i} = \set{p_i}$. This implies that
\begin{equation}
\dist{D_1}{D_2}{S_\iset{A}} = \sqrt{2^K}\norm{p_1-p_2}.
\label{eq:fullinfo}
\end{equation}
In other words, the CM distance is proportional to the $L_2$ distance between the empirical distributions. This identity seems very reasonable. At least, it is more natural than, say, taking $L_2$ distance between the traditional itemset frequencies.

The identity in Eq.~\ref{eq:fullinfo} holds only when we use the features $S_\iset{A}$. However, we can prove that a distance of the Mahalanobis type satisfying the identity in Eq.~\ref{eq:fullinfo} and some additional conditions is equal to the CM distance. Let us explain this in more detail. We assume that we have a distance $d$ having the form
\[
\dista{D_1}{D_2}{S_\iset{F}}^2 = \pr{\theta_1-\theta_2}^TC(S_\iset{F})^{-1}\pr{\theta_1-\theta_2},
\]
where $\theta_1 = \freq{S_\iset{F}}{D_1}$ and $\theta_2 = \freq{S_\iset{F}}{D_2}$ and $C(S_\iset{F})$ maps a conjuction function $S_\iset{F}$ to a symmetric $N\times N$ matrix. The distance $d$ should satisfy the following mild assumptions.
\begin{enumerate}
\item Assume two antimonotonic families of itemsets $\iset{F}$ and $\iset{H}$ such that $\iset{F} \subset \iset{H}$. It follows that $\dista{\cdot}{\cdot}{S_\iset{F}} \leq \dista{\cdot}{\cdot}{S_\iset{H}}$.
\label{as2:2}
\item Adding extra dimensions (but not changing the features) does not change the distance.
\label{as2:3}
\end{enumerate}
The following theorem says that the assumptions and the identity in Eq.~\ref{eq:fullinfo} are sufficient to prove that $d$ is actually the CM distance.
\begin{theorem}
\label{thr:generic} Assume that a Mahalanobis distance $d$ satisfies Assumptions~\ref{as2:2}~and~\ref{as2:3}. Assume also that there is a constant $c_1$ such that
\[
\dista{D_1}{D_2}{S_\iset{A}} = c_1\norm{p_1-p_2}.
\]
Then it follows that for any antimonotonic family $\iset{F}$ we have
\[
\dista{D_1}{D_2}{S_\iset{F}} = c_2\dist{D_1}{D_2}{S_\iset{F}},
\]
for some constant $c_2$.
\end{theorem}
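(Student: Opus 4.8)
The plan is to pass to the parity basis, where the claim becomes a statement that a single matrix is a scalar multiple of the identity, and then to sandwich the unknown distance between two copies of the CM distance. By Eq.~\ref{eq:fastev} and Lemma~\ref{lem:paritycov}, in the parity coordinates $\theta_i = \freq{T_\iset{F}}{D_i}$ the CM distance is $\dist{D_1}{D_2}{S_\iset{F}}^2 = 2\norm{\theta_1-\theta_2}^2$, i.e.\ simply a scalar times the identity quadratic form. Writing the hypothesised distance in these same coordinates (the change of basis $T_\iset{F}=AS_\iset{F}$ is invertible, cf.\ Eq.~\ref{eq:andvsparity}) as $\dista{D_1}{D_2}{S_\iset{F}}^2 = \pr{\theta_1-\theta_2}^T Q_\iset{F}\pr{\theta_1-\theta_2}$, the entire theorem reduces to showing that the symmetric positive definite matrix $Q_\iset{F}$ is a scalar multiple of $I$.

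First I would anchor the full family. Combining the boundary hypothesis with Eq.~\ref{eq:fullinfo}, and using the invertible (M\"obius/Walsh) correspondence between the vector of all itemset frequencies and the empirical distribution together with Parseval's identity, gives $\dista{D_1}{D_2}{S_\iset{A}}^2 = c_1^2\norm{p_1-p_2}^2 = c_2^2\dist{D_1}{D_2}{S_\iset{A}}^2$ with $c_2=c_1/\sqrt{2^K}$; in the parity basis this reads $Q_\iset{A}=2c_2^2 I$. For the upper bound on a general antimonotonic $\iset{F}\subset\iset{A}$, I would construct data whose parity-frequency differences take an arbitrary prescribed value on $\iset{F}$ and vanish on $\iset{A}\setminus\iset{F}$; since the Walsh transform is invertible this is realised by a small perturbation of the uniform distribution, and homogeneity extends it to all vectors. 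On such data $\dista{\cdot}{\cdot}{S_\iset{A}}^2 = 2c_2^2\sum_{B\in\iset{F}}\pr{\theta_{1,B}-\theta_{2,B}}^2$, so Assumption~\ref{as2:2} applied to $\iset{F}\subset\iset{A}$ yields $Q_\iset{F}\preceq 2c_2^2 I$.

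To promote this to equality I would use the observation that a positive semidefinite matrix $M$ with $M\preceq\lambda I$ and all diagonal entries equal to $\lambda$ must equal $\lambda I$, since $\lambda I-M$ is then positive semidefinite with zero diagonal and hence vanishes. Thus it suffices to prove the matching diagonal lower bounds $\pr{Q_\iset{F}}_{BB}\ge 2c_2^2$ for each $B\in\iset{F}$. Applying Assumption~\ref{as2:2} to the downward closure $\iset{A}_B\subseteq\iset{F}$ (all nonempty subsets of $B$, an antimonotonic family complete on the items of $B$) reduces each such bound to the single top-diagonal entry of a complete family, so the theorem comes down to establishing $\dista{\cdot}{\cdot}{S_{\iset{A}_W}} = c_2\dist{\cdot}{\cdot}{S_{\iset{A}_W}}$ for every complete family $\iset{A}_W$.

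This complete-family step is the main obstacle, and it is exactly where Assumption~\ref{as2:3} must do the work monotonicity cannot: monotonicity against the anchored full family only ever produces upper bounds, so the matching lower bound has to come from dimension invariance. A complete family $\iset{A}_W$ is intrinsic to its support, so by Assumption~\ref{as2:3} it is unchanged when the remaining items are added or removed and is literally the full family of its own subspace. The plan is to embed via the product construction $p_i = q_i\otimes(\text{uniform on the other items})$, which places $\iset{A}_W$ inside the full $K$-family with all extra parities carrying zero difference, and to transport the full-family anchor down to $\iset{A}_W$ using the dimension invariance shared by $\dist{\cdot}{\cdot}{\cdot}$ (via the $\sqrt{\abs{\Omega}}$ normalisation) and by $\dista{\cdot}{\cdot}{\cdot}$. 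The delicate point I expect to require the most care is verifying that this transport fixes the constant rather than introducing a support-size-dependent factor; once the complete families are pinned down, the diagonal-domination argument above closes the general case.
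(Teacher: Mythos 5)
Your plan is essentially the paper's own proof: pass to the parity basis so the claim becomes $C(T_\iset{F}) = cI$, anchor the full family $\iset{A}$ via the hypothesis, use the realizability of small prescribed parity-frequency perturbations (the paper isolates this as Lemma~\ref{lem:valid}) together with Assumption~\ref{as2:2} to get the upper bound against $\iset{A}$, and obtain the matching diagonal lower bound from the downward closure $\set{H \in \iset{F} \mid H \subseteq G}$, which Assumption~\ref{as2:3} identifies with the full family of its own support. The one genuine departure is your finish for the off-diagonal entries: you note that a positive semidefinite matrix dominated by $\lambda I$ with diagonal identically $\lambda$ must equal $\lambda I$, whereas the paper perturbs two coordinates with a sign matched to $C_{G,H}$ and contradicts the upper bound; both work, and yours is arguably cleaner since it needs no second perturbation. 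The ``delicate point'' you flag --- that transporting the anchor down to a complete sub-family must not pick up a support-size-dependent constant --- is real, but the paper resolves it no more carefully than you do: it asserts in a single sentence that $C(T_\iset{A}) = cI$ and Assumption~\ref{as2:3} yield $C(T_\iset{R}) = cI$ with the \emph{same} $c$, which implicitly reads the hypothesis as fixing the parity-basis constant uniformly across dimensions. So you have correctly identified the load-bearing (and tersely justified) step of the published argument rather than missed one.
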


\section{The CM distance and Event Sequences}
\label{sec:sequences}
In the previous section we discussed about the CM distance between the binary data sets. We will use similar approach to define the CM distance between sequences.

An \emph{event sequence} $s$ is a finite sequence whose symbols belong to a finite alphabet $\Sigma$. We denote the length of the event sequence $s$ by $\abs{s}$, and by $s(i, j)$ we mean a subsequence starting from $i$ and ending at $j$. The subsequence $s(i, j)$ is also known as \emph{window}. A popular choice for statistics of event sequences are \emph{episodes}~\citep{hand02principles}. A \emph{parallel episode} is represented by a subset of the alphabet $\Sigma$. A window of $s$ satisfies a parallel episode if all the symbols given in the episode occur in the window. Assume that we are given an integer $k$. Let $W$ be a collection of windows of $s$ having the length $k$. A \emph{frequency} of a parallel episode is the proportion of windows in $W$ satisfying the episode. We should point out that this mapping destroys the exact ordering of the sequence. On the other hand, if some symbols occur often close to each other, then the episode consisting of these symbols will have a high frequency.

In order to apply the CM distance we will now describe how we can transform a sequence $s$ to a binary data set. Assume that we are given a window length $k$. We transform a window of length $k$ into a binary vector of length $\abs{\Sigma}$ by setting $1$ if the corresponding symbol occurs in the window, and $0$ otherwise. Let $D$ be the collection of these binary vectors. We have now transformed the sequence $s$ to the binary data set $D$. Note that parallel episodes of $s$ are represented by itemsets of $D$.

This transformation enables us to use the CM distance. Assume that we are given two sequences $s_1$ and $s_2$, a collection of parallel episodes $\iset{F}$, and a window length $k$. First, we transform the sequences into data sets $D_1$ and $D_2$. We set the CM distance between the sequences $s_1$ and $s_2$ to be $\dist{D_1}{D_2}{S_\iset{F}}$.

\section{Feature Selection}
\label{sec:feature}
We will now discuss briefly about feature selection --- a subject that we have taken for granted so far. The CM distance depends on a feature function $S$. How can we choose a good set of features?

Assume for simplicity that we are dealing with binary data sets. Eq.~\ref{eq:fullinfo} tells us that if we use all itemsets, then the CM distance is $L_2$ distance between empirical distributions. However, to get a reliable empirical distribution we need an exponential number of data points. Hence we can use only some subset of itemsets as features. The first approach is to make an expert choice without seeing data. For example, we could decide that the feature function is $S_\iset{I}$, the means of the individual attributes, or $S_\iset{C}$, the means of individual attributes and the pairwise correlation.

The other approach is to infer a feature function from the data sets. At first glimpse this seems an application of feature selection. However, traditional feature selection fails: Let $S_\iset{I}$ be the feature function representing the means of the individual attributes and let $S_\iset{A}$ be the feature function containing all itemsets. Let $\omega$ be a binary vector. Note that if we know $S_\iset{I}(\omega)$, then we can deduce $S_\iset{A}(\omega)$. This means that $S_\iset{I}$ is a \emph{Markov blanket}~\citep{pearl88reasoning} for $S_\iset{A}$. Hence we cannot use the Markov blanket approach to select features. The essential part is that the traditional feature selection algorithms deal with the \emph{individual} points. We try to select features for whole data sets.

Note that feature selection algorithms for singular points are based on training data, that is, we have data points divided into clusters. In other words, when we are making traditional feature selection we \emph{know} which points are close and which are not. In order to make the same ideas work for data sets we need to have similar information, that is, we need to know which data sets are close to each other, and which are not. Such an information is rarely provided and hence we are forced to seek some other approach.

We suggest a simple approach for selecting itemsets by assuming that frequently occurring itemsets are interesting. Assume that we are given a collection of data sets $D_i$ and a threshold $\sigma$. Let $\iset{I}$ be the itemsets of order one. We define $\iset{F}$ such that $B \in \iset{F}$ if and only if $B \in \iset{I}$ or that $B$ is a $\sigma$-frequent itemset for some $D_i$.

\section{Related Work}
\label{sec:related}
In this section we discuss some existing methods for comparing data sets and compare the evaluation algorithms. The execution times are summarised in Table~\ref{tab:times}.
\begin{table}[ht!]
\centering
\begin{tabular}{rr}
\toprule
Distance & Time \\
\midrule
CM distance (general case) & $O(NM+N^2\abs{\Omega}+N^3)$\\
CM distance (known cov. matrix) & $O(NM+N^3)$\\
CM distance (binary case) & $O(NM+N)$ \\
Set distances & $O(M^3)$\\
Kullback-Leibler &  $O(NM+N\abs{\Omega})$\\
Fischer's Information &  $O(NM+N^2\abs{D_2}+N^3)$\\
\bottomrule
\end{tabular}
\caption{Comparison of the execution times of various distances. The number $M = \abs{D_1}+\abs{D_2}$ is the number of data points in total. The $O(NM)$ term refers to the time needed to evaluate the frequencies $\freq{S}{D_1}$ and $\freq{S}{D_2}$. Kullback-Leibler distance is solved using Iterative Scaling algorithm in which one round has $N$ steps and one step is executed in $O(\abs{\Omega})$ time.}
\label{tab:times}
\end{table}
\subsection{Set Distances}
One approach to define a data set distance is to use some natural distance between single data points and apply some known set distance. \citet{eiter97distance} show that some data set distances defined in this way can be evaluated in cubic time. However, this is too slow for us since we may have a vast amount of data points. The other downsides are that these distances may not take into account the statistical nature of data which may lead into problems.
\subsection{Edit Distances}
We discuss in Section~\ref{sec:sequences} of using the CM distance for event sequences. Traditionally, edit distances are used for comparing event sequences. The most famous edit distance is Levenshtein distance~\citep{levenshtein66distance}. However, edit distances do not take into account the statistical nature of data. For example, assume that we have two sequences generated such that the events are sampled from the uniform distribution independently of the previous event (a zero-order Markov chain). In this case the CM distance is close to $0$ whereas the edit distance may be large. Roughly put, the CM distance measures the dissimilarity between the statistical characteristics whereas the edit distances operate at the symbol level.

\subsection{Minimum Discrimination Approach}
There are many distances for distributions~\citep[see][for a nice review]{baseville89distance}. From these distances the CM distance resembles the statistical tests involved with Minimum Discrimination Theorem~\citep[see][]{kullback68information, csiszar75divergence}. In this framework we are given a feature function $S$ and two data sets $D_1$ and $D_2$. From the set of distributions $\constp{S}{D_i}$ we select a distribution maximising the entropy and denote it by $p^{ME}_i$. The distance itself is the Kullback-Leibler divergence between $p^{ME}_1$ and $p^{ME}_2$. It has been empirically shown that $p^{ME}_i$ represents well the distribution from which $D_i$ is generated~\citep[see][]{mannila99prediction}. The downsides are that this distance is not a metric (it is not even symmetric), and that the evaluation time of the distance is infeasible: Solving $p^{ME}_i$ is \textbf{NP}-hard~\citep{cooper90complexity}. We can approximate the Kullback-Leibler distance by Fischer's information, that is,
\[
\kl{p^{ME}_1}{p^{ME}_2} \approx \frac{1}{2}\pr{\theta_1 - \theta_2}^T\cova{p^{ME}_2}{-1}{S}\pr{\theta_1 - \theta_2},
\]
where $\theta_i = \freq{S}{D_i}$ and $\cova{p^{ME}_2}{}{S}$ is the covariance matrix of $S$ taken with respect to $p^{ME}_2$~\citep[see][]{kullback68information}. This resembles greatly the equation in Theorem~\ref{thr:calc}. However, in this case the covariance matrix depends on data sets and thus generally this approximation is not a metric. In addition, we do not know $p^{ME}_2$ and hence we cannot evaluate the covariance matrix. We can, however, estimate the covariance matrix from $D_2$, that is,
\[
\cova{p^{ME}_2}{}{S} \approx \frac{1}{\abs{D_2}}\sum_{\omega \in D_2}S(\omega)S(\omega)^T -
\frac{1}{\abs{D_2}^2}\spr{\sum_{\omega \in D_2}S(\omega)}\spr{\sum_{\omega \in D_2}S(\omega)^T}.
\]
The execution time of this operation is $O(N^2\abs{D_2})$.

\section{Empirical Tests}
\label{sec:tests}
In this section we describe our experiments with the CM distance. We begin by examining the effect of different feature functions. We continue studying the distance by applying clustering algorithms, and finally we represent some interpretations to the results.

In many experiments we use a base distance $d_{U}$ defined as the $L_2$ distance between the itemset frequencies, that is,
\begin{equation}
\distu{D_1}{D_2}{S} = \sqrt{2}\norm{\theta_1 - \theta_2},
\label{eq:base}
\end{equation}
where $\theta_i$ are the itemset frequencies $\theta_i = \freq{S}{D_i}$. This type of distance was used in~\cite{hollmen03mixture}. Note that $\distu{D_1}{D_2}{ind} = \dist{D_1}{D_2}{ind}$, where $ind$ is the feature set containing only individual means.

\subsection{Real World Data Sets}
We examined the CM distance with several real world data sets and several feature sets. We had $7$ data sets: \dtname{Bible}, a collection of $73$ books from the Bible\footnote{The books were taken from \url{http://www.gutenberg.org/etext/8300} in 20. July 2005}, \dtname{Addresses}, a collection of $55$ inaugural addresses given by the presidents of the U.S.\footnote{The addresses were taken from \url{http://www.bartleby.com/124/} in 17. August 2005}, \dtname{Beatles}, a set of lyrics from $13$ studio albums made by the Beatles, \dtname{20Newsgroups}, a collection of 20 newsgroups\footnote{The data set was taken from \url{http://www.ai.mit.edu/~jrennie/20Newsgroups/}, a site hosted by Jason Rennie, in 8. June, 2001.}, \dtname{TopGenres}, plot summaries for top rated movies of 8 different genres, and \dtname{TopDecades}, plot summaries for top rated movies from 8 different decades\footnote{The movie data sets were taken from \url{http://www.imdb.com/Top/} in 1. January, 2006}. \dtname{20Newsgroups} contained (in that order) 3 religion groups, 3 of politics, 5 of computers, 4 of science, 4 recreational, and \dtname{misc.forsale}. \dtname{TopGenres} consisted (in that order) of \dtname{Action}, \dtname{Adventure}, \dtname{SciFi}, \dtname{Drama}, \dtname{Crime}, \dtname{Horror}, \dtname{Comedy}, and \dtname{Romance}. The decades for \dtname{TopDecades} were 1930--2000. Our final data set, \dtname{Abstract}, was composed of abstracts describing NSF awards from 1990--1999\footnote{The data set was taken from \url{http://kdd.ics.uci.edu/databases/nsfabs/nsfawards.data.html} in 13. January, 2006}.

\dtname{Bible} and \dtname{Addresses} were converted into binary data sets by taking subwindows of length $6$ (see the discussion in Section~\ref{sec:sequences}). We reduced the number of attributes to $1000$ by using the mutual information gain. \dtname{Beatles} was preprocessed differently: We transformed each song to its binary bag-of-words representation and selected $100$ most informative words. In \dtname{20Newsgroups} a transaction was a binary bag-of-words representation of a single article. Similarly, In \dtname{TopGenres} and in \dtname{TopDecades} a transaction corresponded to a single plot summary. We reduced the number of attributes in these three data sets to $200$ by using the mutual information gain. In \dtname{Abstract} a data set represented one year and a transaction was a bag-of-words representation of a single abstract. We reduced the dimension of \dtname{Abstract} to $1000$.

\subsection{The Effect of Different Feature Functions}
We begin our experiments by studying how the CM distance (and the base distance) changes as we change features.

We used $3$ different sets of features: \ftname{ind}, the independent means, \ftname{cov}, the independent means along with the pairwise correlation, and \ftname{freq}, a family of frequent itemsets obtained by using \alname{APriori}~\citep{agrawal96apriori}. We adjusted the threshold so that \ftname{freq} contained $10K$ itemsets, where $K$ is the number of attributes.


We plotted the CM distances and the base distances as functions of $\dcm{ind}$. The results are shown in Figure~\ref{fig:scatterplots}. Since the number of constraints varies, we normalised the distances by dividing them with $\sqrt{N}$, where $N$ is the number of constraints. In addition, we computed the correlation of each pair of distances. These correlations are shown in Table~\ref{tab:corrs}.

\begin{figure}[htb!]
\center
\includegraphics[width=6cm]{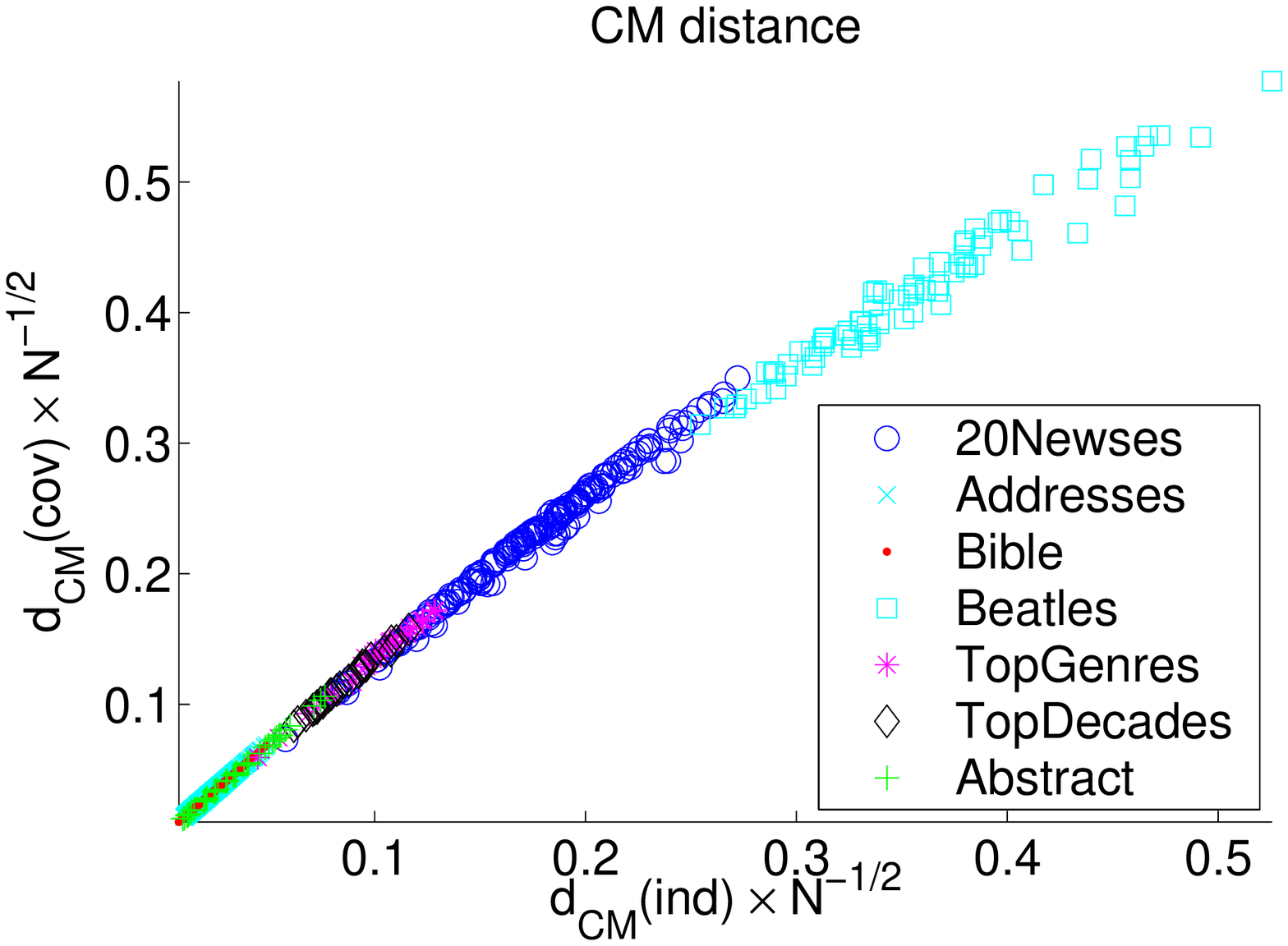}
\includegraphics[width=6cm]{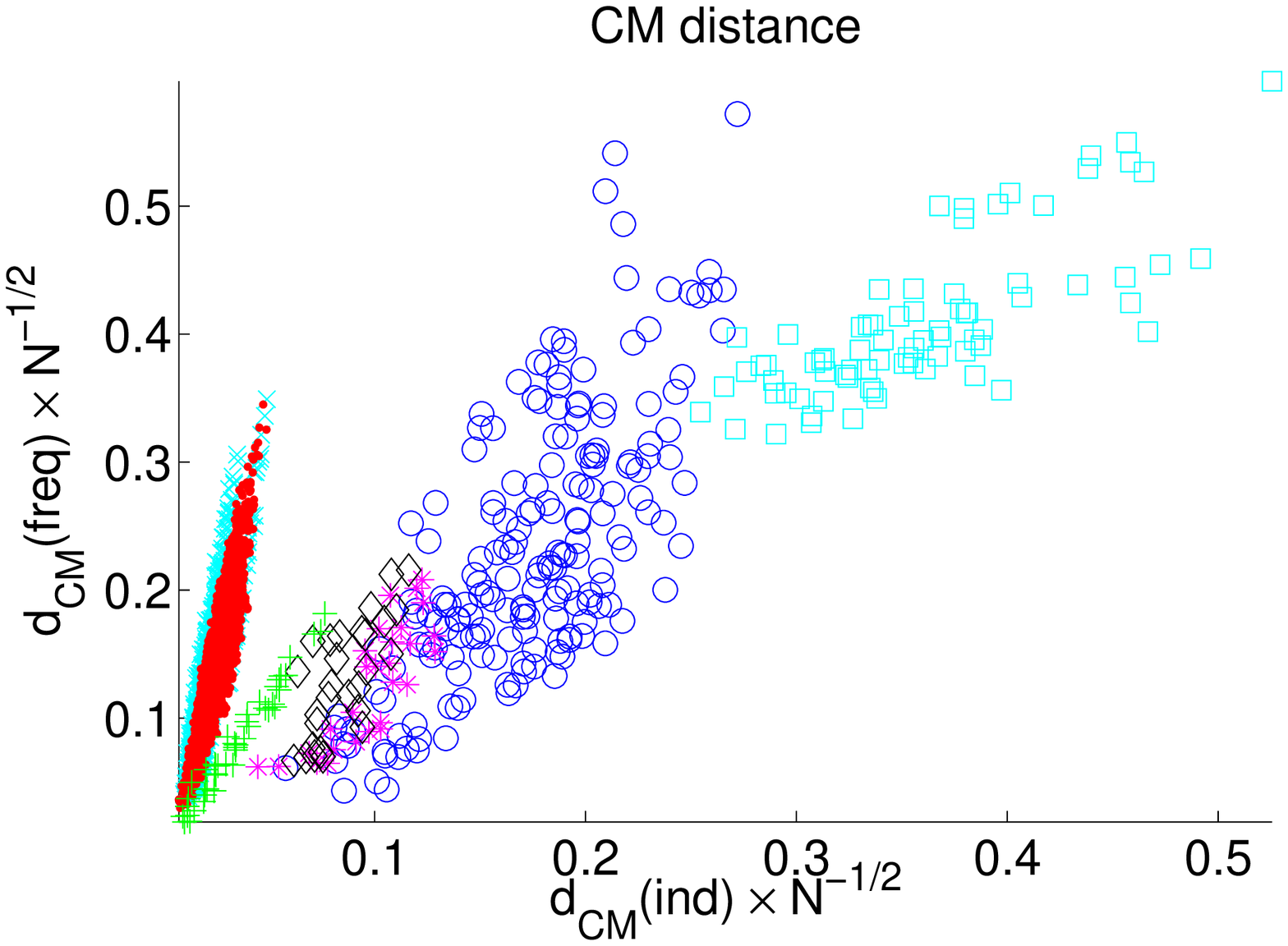}
\includegraphics[width=6cm]{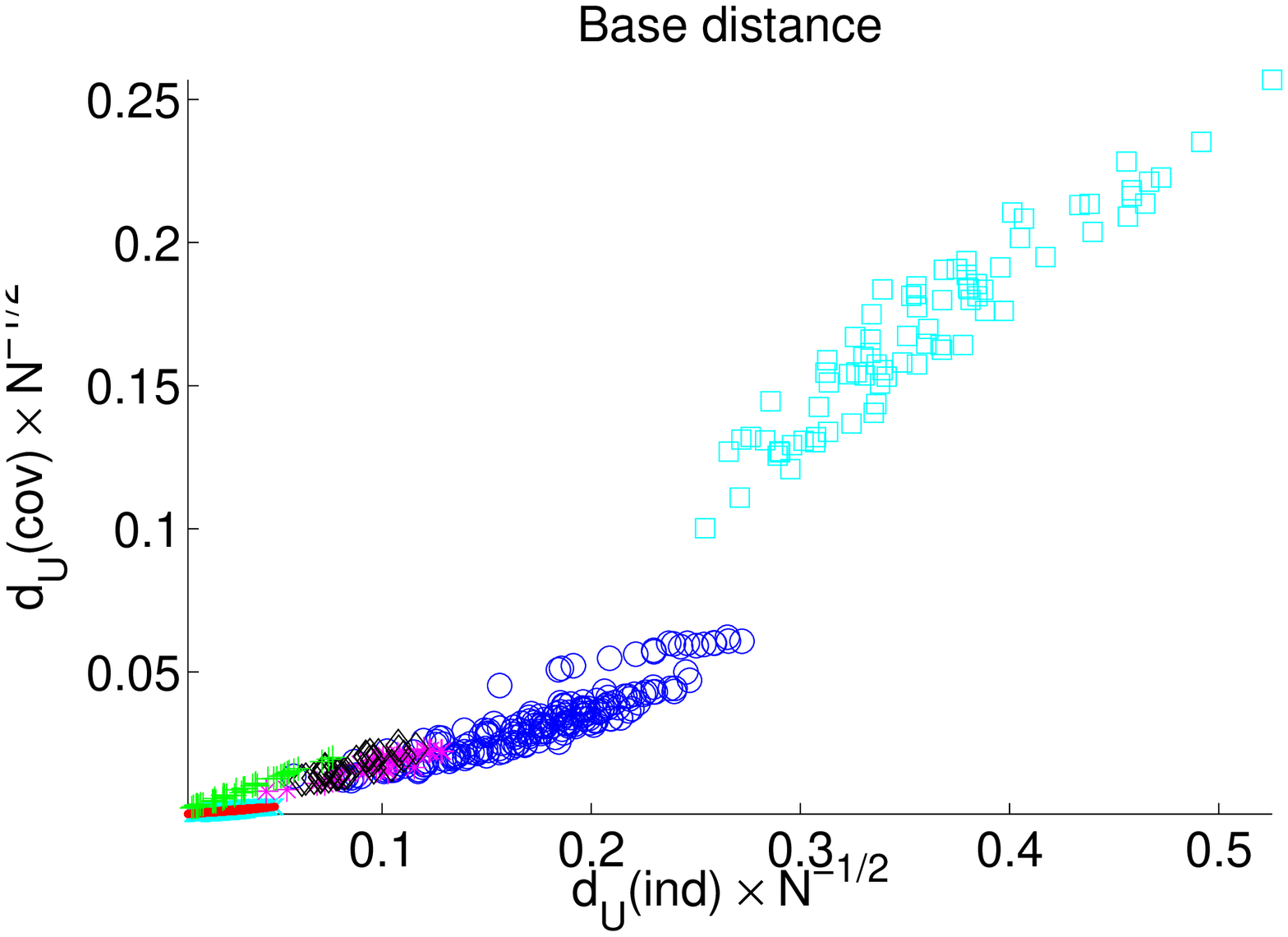}
\includegraphics[width=6cm]{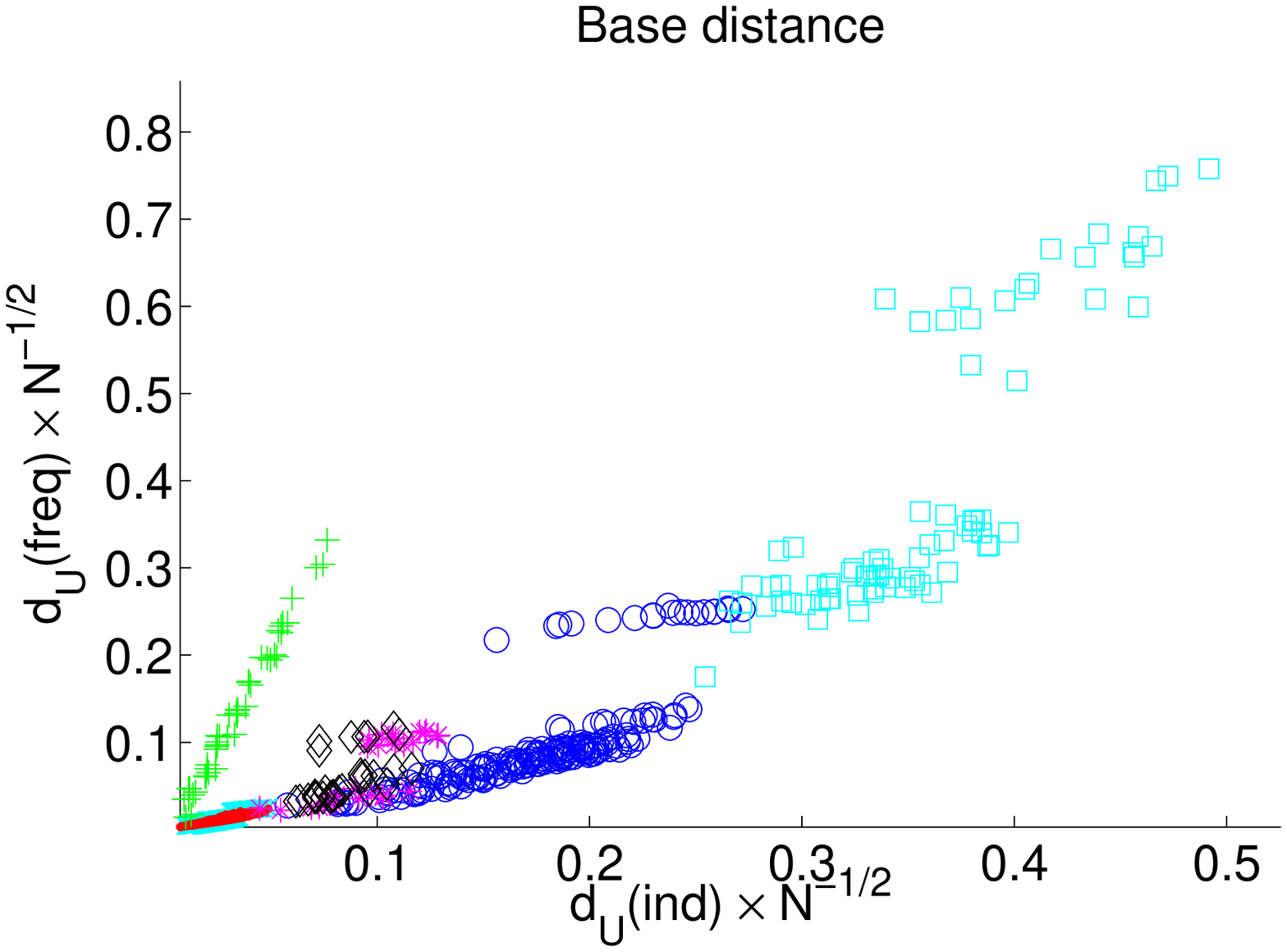}
\caption{CM and base distances as functions of $\dcm{ind}$. A point represents a distance between two data sets. The upper two figures contain the CM distances while the lower two contain the base distance. The distances were normalised by dividing $\sqrt{N}$, where $N$ is the number of constraints. The corresponding correlations are given in Table~\ref{tab:corrs}. Note that $x$-axis in the left (right) two figures are equivalent.}
\label{fig:scatterplots}
\end{figure}

\begin{table}[htb!]
\centering
\begin{tabular}{rrrrrrrrr}
\toprule
& \multicolumn{3}{c}{$d_{CM}$ vs. $d_{CM}$} & \multicolumn{3}{c}{$d_U$ vs. $d_U$} & \multicolumn{2}{c}{$d_{CM}$ vs. $d_U$} \\
\cmidrule{2-9}
          & \ftname{cov} & \ftname{freq} & \ftname{freq} & \ftname{cov} & \ftname{freq} & \ftname{freq} & \ftname{cov} & \ftname{freq} \\
Data set  & \ftname{ind} & \ftname{ind} & \ftname{cov} & \ftname{ind} & \ftname{ind} & \ftname{cov} & \ftname{cov} & \ftname{freq} \\
\midrule
\dtname{20Newsgroups} & $0.996$ & $0.725$ & $0.733$ & $0.902$ & $0.760$ & $0.941$ & $0.874$ & $0.571$ \\
\dtname{Addresses} & $1.000$ & $0.897$ & $0.897$ & $0.974$ & $0.927$ & $0.982$ & $0.974$ & $0.743$ \\
\dtname{Bible} & $1.000$ & $0.895$ & $0.895$ & $0.978$ & $0.946$ & $0.989$ & $0.978$ & $0.802$ \\
\dtname{Beatles} & $0.982$ & $0.764$ & $0.780$ & $0.951$ & $0.858$ & $0.855$ & $0.920$ & $0.827$ \\
\dtname{TopGenres} & $0.996$ & $0.817$ & $0.833$ & $0.916$ & $0.776$ & $0.934$ & $0.927$ & $0.931$ \\
\dtname{TopDecades} & $0.998$ & $0.735$ & $0.744$ & $0.897$ & $0.551$ & $0.682$ & $0.895$ & $0.346$ \\
\dtname{Abstract} & $1.000$ & $0.985$ & $0.985$ & $0.996$ & $0.993$ & $0.995$ & $0.996$ & $0.994$ \\
Total & $0.998$ & $0.702$ & $0.709$ & $0.934$ & $0.894$ & $0.938$ & $0.910$ & $0.607$ \\
\bottomrule
\end{tabular}
\caption{Correlations for various pairs of distances. A column represents a pair of distances and a row represents a single data set. For example, the correlation between $\dcm{ind}$ and $\dcm{cov}$ in \dtname{20Newsgroups} is $0.996$. The last row is the correlation obtained by using the distances from all data sets simultaneously. Scatterplots for the columns 1--2 and 4--5 are given in Fig.~\ref{fig:scatterplots}.}
\label{tab:corrs}
\end{table}

Our first observation from the results is that $\dcm{cov}$ resembles $\dcm{ind}$ whereas $\dcm{freq}$ produces somewhat different results.

The correlations between $\dcm{cov}$ and $\dcm{ind}$ are stronger than the correlations between $\du{cov}$ and $\du{ind}$. This can be explained by examining Eq.~\ref{eq:distcovform} in Example~\ref{ex:cov}. If the dimension is $K$, then the itemsets of size $1$, according to Eq.~\ref{eq:distcovform}, involve $\frac{1}{2}K(K - 1) + K$ times in computing $\dcm{cov}$, whereas in computing $\du{cov}$ they involve only $K$ times. Hence, the itemsets of size $2$ have smaller impact in $\dcm{cov}$ than in $\du{cov}$.

On the other hand, the correlations between $\dcm{freq}$ and $\dcm{ind}$ are weaker than the correlations between $\du{freq}$ and $\du{ind}$, implying that the itemsets of higher order have stronger impact on the CM distance.

\subsection{Clustering Experiments}
In this section we continue our experiments by applying clustering algorithms to the distances. Our goal is to compare the clusterings obtained from the CM distance to those obtained from the base distance (given in Eq.~\ref{eq:base}).

We used $3$ different clustering algorithms: a hierarchical clustering with complete linkage, a standard K-median, and a spectral algorithm by~\cite{ng02clustering}. Since each algorithm takes a number of clusters as an input parameter, we varied the number of clusters between $3$ and $5$. We applied clustering algorithms to the distances $\dcm{cov}$, $\dcm{freq}$, $\du{cov}$, and $\du{freq}$, and compare the clusterings obtained from $\dcm{cov}$ against the clusterings obtained from $\du{cov}$, and similarly compare the clusterings obtained from $\dcm{freq}$ against the clusterings obtained from $\du{freq}$.

We measured the performance using $3$ different clustering indices: a ratio $r$ of the mean of the intra-cluster distances and the mean of the inter-cluster distances, Davies-Bouldin (DB) index~\citep{davies79index}, and Calinski-Harabasz (CH) index~\citep{calinski74index}.

The obtained results were studied in the following way: Given a data set and a performance index, we calculated the number of algorithms in which $\dcm{cov}$ outperformed $\du{cov}$. The distances $\dcm{freq}$ and $\du{freq}$ were handled similarly. The results are given in Table~\ref{tab:clustperf2}. We also calculated the number of data sets in which $\dcm{cov}$ outperformed $\du{cov}$, given an algorithm and an index. These results are given in Table~\ref{tab:clustperf1}.

\begin{table}[htb!]
\centering
\begin{tabular}{rrrrrrrrrr}
\toprule
& & \multicolumn{3}{c}{$\dcm{cov}$ vs. $\du{cov}$} & \multicolumn{3}{c}{$\dcm{freq}$ vs. $\du{freq}$} \\
\cmidrule{3-8}
& Data set  &  $r$ & $DB$ & $CH$ & $r$ & $DB$ & $CH$ & Total & $P$ \\
\midrule
1. & \dtname{20Newsgroups} & $0/9$ & $2/9$ & $7/9$ & $8/9$ & $5/9$ & $9/9$ & $31/54$ & $0.22$ \\
2. & \dtname{Speeches} & $9/9$ & $6/9$ & $3/9$ & $9/9$ & $9/9$ & $9/9$ & $\mathbf{45/54}$ & $0.00$ \\
3. & \dtname{Bible} & $9/9$ & $7/9$ & $2/9$ & $9/9$ & $7/9$ & $9/9$ & $\mathbf{43/54}$ & $0.00$ \\
4. & \dtname{Beatles} & $0/9$ & $3/9$ & $6/9$ & $0/9$ & $1/9$ & $0/9$ & $\mathbf{10/54}$ & $0.00$ \\
5. & \dtname{TopGenres} & $0/9$ & $4/9$ & $5/9$ & $0/9$ & $1/9$ & $0/9$ & $\mathbf{10/54}$ & $0.00$ \\
6. & \dtname{TopDecades} & $3/9$ & $7/9$ & $2/9$ & $7/9$ & $7/9$ & $9/9$ & $\mathbf{35/54}$ & $0.02$ \\
7. & \dtname{Abstract} & $9/9$ & $8/9$ & $1/9$ & $0/9$ & $2/9$ & $1/9$ & $21/54$ & $0.08$ \\
\midrule
& Total & $30/63$ & $37/63$ & $26/63$ & $33/63$ & $32/63$ & $37/63$ & $195/378$ & $0.50$ \\
& $P$ & $0.61$ & $0.13$ & $0.13$ & $0.61$ & $0.80$ & $0.13$\\
\bottomrule
\end{tabular}
\caption{Summary of the performance results of the CM distance versus the base distance. A single entry contains the number of clustering algorithm configurations (see Column $1$ in Table~\ref{tab:clustperf1}) in which the CM distance was better than the base distance. The $P$-value is the standard Fisher's sign test.}
\label{tab:clustperf2}
\end{table}

\begin{table}[htb!]
\centering
\begin{tabular}{rrrrrrrrrr}
\toprule
& & \multicolumn{3}{c}{$\dcm{cov}$ vs. $\du{cov}$} & \multicolumn{3}{c}{$\dcm{freq}$ vs. $\du{freq}$} \\
\cmidrule{3-8}
& Algorithm  & $r$ & $DB$ & $CH$ & $r$ & $DB$ & $CH$ & Total & $P$ \\
\midrule
1. & \alname{K-med(3)} & $4/7$ & $2/7$ & $5/7$ & $4/7$ & $4/7$ & $4/7$ & $23/42$ & $0.44$ \\
2. & \alname{K-med(4)} & $4/7$ & $4/7$ & $3/7$ & $4/7$ & $4/7$ & $4/7$ & $23/42$ & $0.44$ \\
3. & \alname{K-med(5)} & $4/7$ & $4/7$ & $3/7$ & $4/7$ & $4/7$ & $4/7$ & $23/42$ & $0.44$ \\
4. & \alname{link(3)} & $3/7$ & $4/7$ & $3/7$ & $2/7$ & $3/7$ & $4/7$ & $19/42$ & $0.44$ \\
5. & \alname{link(4)} & $3/7$ & $4/7$ & $3/7$ & $4/7$ & $3/7$ & $4/7$ & $21/42$ & $0.88$ \\
6. & \alname{link(5)} & $3/7$ & $3/7$ & $4/7$ & $4/7$ & $2/7$ & $4/7$ & $20/42$ & $0.64$ \\
7. & \alname{spect(3)} & $3/7$ & $6/7$ & $1/7$ & $3/7$ & $4/7$ & $4/7$ & $21/42$ & $0.88$ \\
8. & \alname{spect(4)} & $3/7$ & $4/7$ & $3/7$ & $4/7$ & $4/7$ & $4/7$ & $22/42$ & $0.64$ \\
9. & \alname{spect(5)} & $3/7$ & $6/7$ & $1/7$ & $4/7$ & $4/7$ & $5/7$ & $23/42$ & $0.44$ \\
\midrule
& Total & $30/63$ & $37/63$ & $26/63$ & $33/63$ & $32/63$ & $37/63$ & $195/378$ & $0.50$ \\
& $P$ & $0.61$ & $0.13$ & $0.13$ & $0.61$ & $0.80$ & $0.13$\\
\bottomrule
\end{tabular}
\caption{Summary of the performance results of the CM distance versus the base distance. A single entry contains the number of data sets (see Column $1$ in Table~\ref{tab:clustperf2}) in which the CM distance was better than the base distance. The $P$-value is the standard Fisher's sign test.}
\label{tab:clustperf1}
\end{table}

We see from Table~\ref{tab:clustperf2} that the performance of CM distance against the base distance depends on the data set. For example, the CM distance has tighter clusterings in \dtname{Speeches}, \dtname{Bible}, and \dtname{TopDecade} whereas the base distance outperforms the CM distance in \dtname{Beatles} and \dtname{TopGenre}.

Table~\ref{tab:clustperf1} suggests that the overall performance of the CM distance is as good as the base distance. The CM distance obtains a better index $195$ times out of $378$. The statistical test suggests that this is a tie. The same observation is true if we compare the distances algorithmic-wise or index-wise.

\subsection{Distance matrices}
In this section we will investigate the CM distance matrices for real-world data sets. Our goal is to demonstrate that the CM distance produces interesting and interpretable results.

We calculated the distance matrices using the feature sets \ftname{ind}, \ftname{cov}, and \ftname{freq}. The matrices are given in Figures~\ref{fig:distances}~and~\ref{fig:distances2}. In addition, we computed performance indices, a ratio of the mean of the intra-cluster distances and the mean of the inter-cluster distances, for various clusterings and compare these indices to the ones obtained from the base distances. The results are given in Table~\ref{tab:clusterings}.

\begin{table}[htb!]
\centering
\begin{tabular}{rlrrrrr}
\toprule
 & & & \multicolumn{2}{c}{\ftname{cov}} & \multicolumn{2}{c}{\ftname{freq}} \\
\cmidrule{4-7}
Data & Clustering & ind & $d_{CM}$ & $d_U$ & $d_{CM}$ & $d_U$\\
\midrule
\dtname{Bible} & Old Test. $\mid$ New Test. & $0.79$ & $0.79$ & $0.82$ & $0.73$ & $0.81$ \\
 & Old Test. $\mid$ Gospels $\mid$ Epistles & $0.79$ & $0.79$ & $0.81$ & $0.73$ & $0.81$ \\
\dtname{Addresses} & 1--32 $\mid$ 33--55 & $0.79$ & $0.80$ & $0.85$ & $0.70$ & $0.84$ \\
 & 1--11 $\mid$ 12--22 $\mid$ 23--33 $\mid$ 34--44 $\mid$ 45--55 & $0.83$ & $0.83$ & $0.87$ & $0.75$ & $0.87$ \\
\dtname{Beatles} & 1,2,4--6 $\mid$ 7--10,12--13 $\mid$ 3 $\mid$ 11 & $0.83$ & $0.86$ & $0.83$ & $0.88$ & $0.61$ \\
 & 1,2,4,12,13 $\mid$ 5--10 $\mid$ 3 $\mid$ 11 & $0.84$ & $0.85$ & $0.84$ & $0.89$ & $0.63$ \\
\dtname{20Newsgroups} & Rel.,Pol. $\mid$ Rest & $0.76$ & $0.77$ & $0.67$ & $0.56$ & $0.62$ \\
 & Rel.,Pol. $\mid$ Comp., misc $\mid$ Rest & $0.78$ & $0.78$ & $0.79$ & $0.53$ & $0.79$ \\
\dtname{TopGenres} & Act.,Adv., SciFi $\mid$ Rest & $0.74$ & $0.73$ & $0.64$ & $0.50$ & $0.32$ \\
\dtname{TopDecades} & 1930--1960 $\mid$ 1970--2000 & $0.84$ & $0.83$ & $0.88$ & $0.75$ & $0.88$ \\
 & 1930--1950 $\mid$ 1960--2000 & $0.88$ & $0.88$ & $0.98$ & $0.57$ & $1.06$ \\
\bottomrule
\end{tabular}
\caption{Statistics of various interpretable clusterings. The proportions are the averages of the intra-cluster distances divided by the averages of the inter-cluster distances. Hence small fractions imply tight clusterings.}
\label{tab:clusterings}
\end{table}

\begin{figure}[htb!]
\center
\small
\begin{minipage}{4.5cm}
\center
\includegraphics[width=4cm]{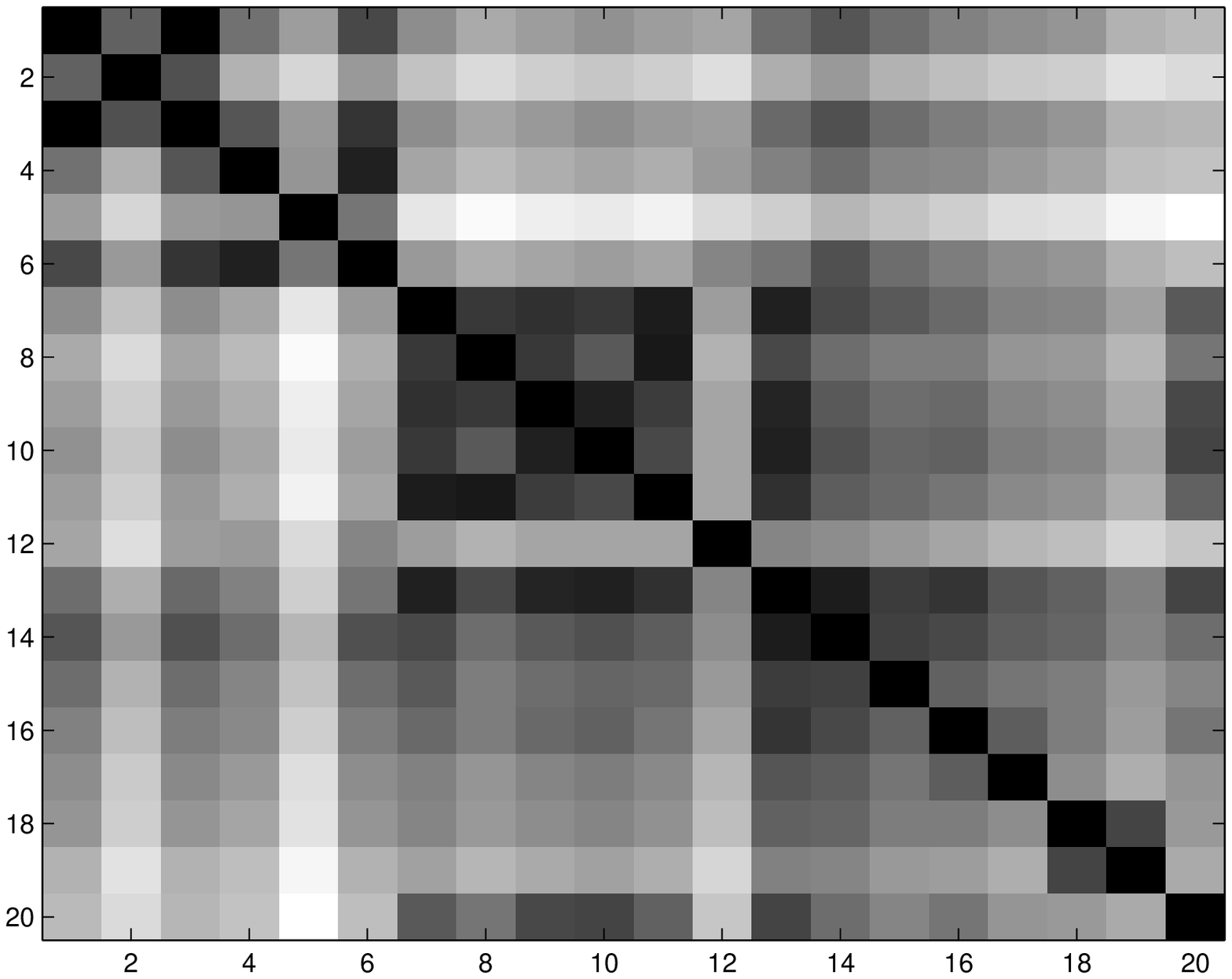}
\dtname{20Newsgroups}, $\dcm{ind}$
\end{minipage}
\begin{minipage}{4.5cm}
\center
\includegraphics[width=4cm]{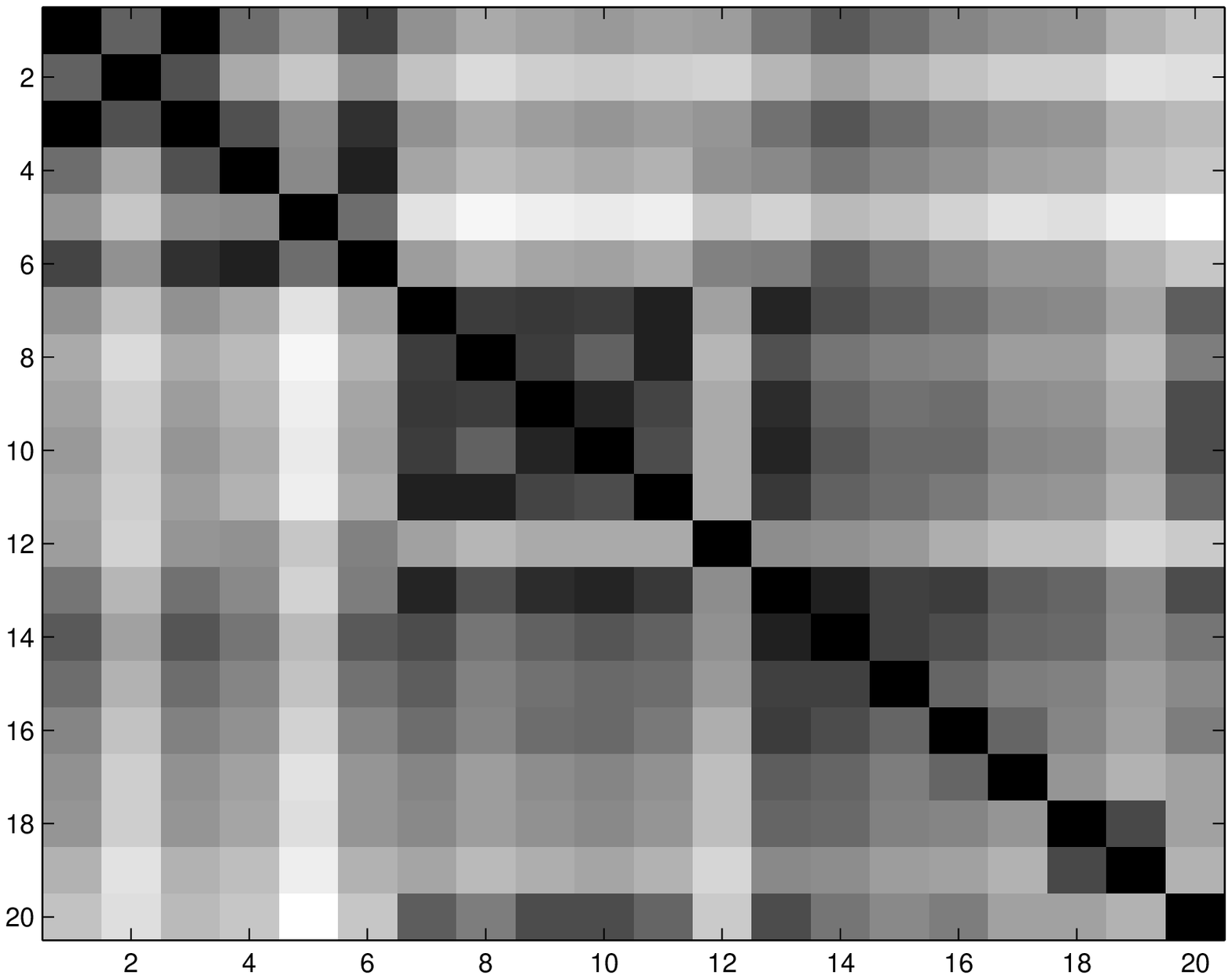}
\dtname{20Newsgroups}, $\dcm{cov}$
\end{minipage}
\begin{minipage}{4.5cm}
\center
\includegraphics[width=4cm]{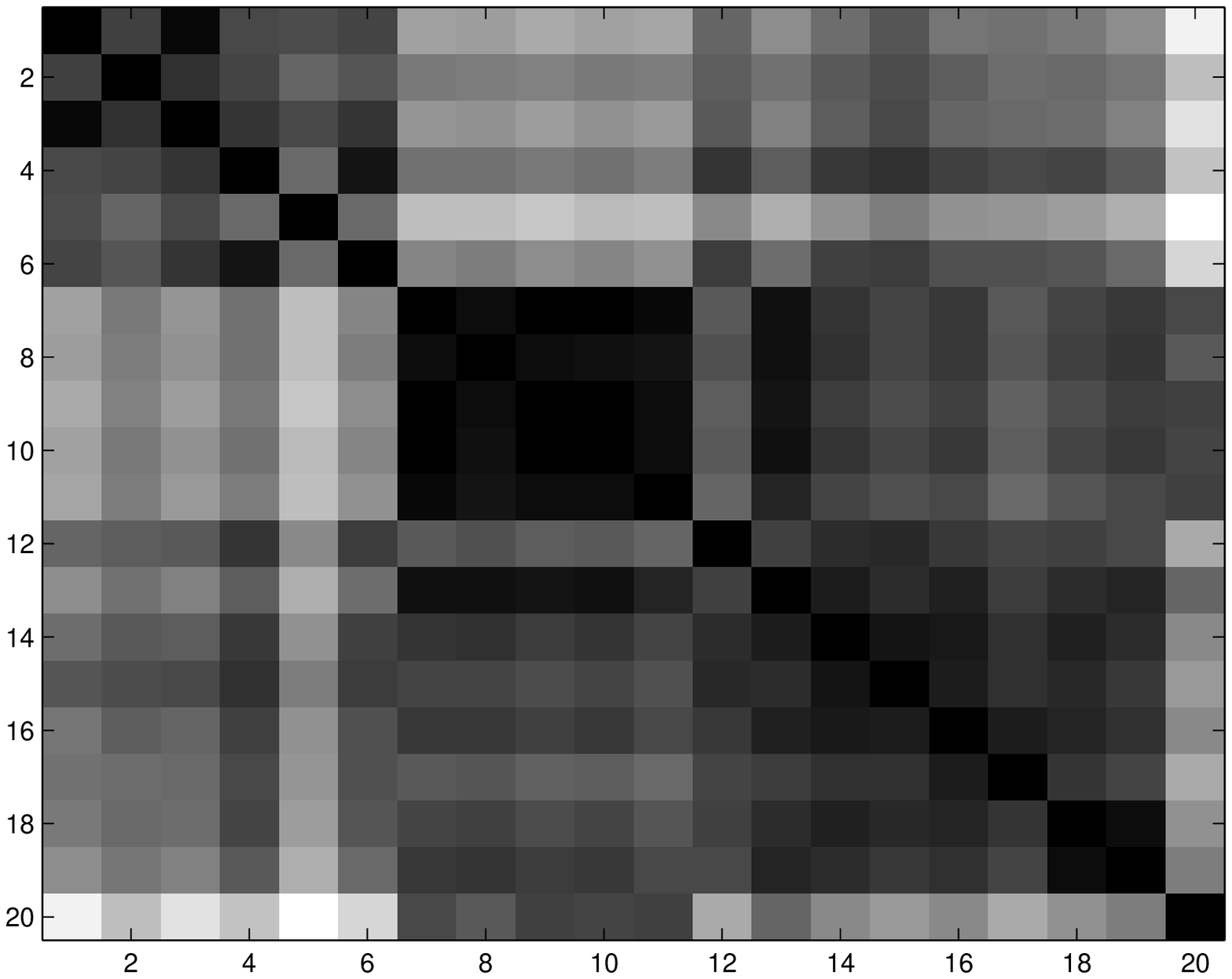}
\dtname{20Newsgroups}, $\dcm{freq}$
\end{minipage}
\bigskip

\begin{minipage}{4.5cm}
\center
\includegraphics[width=4cm]{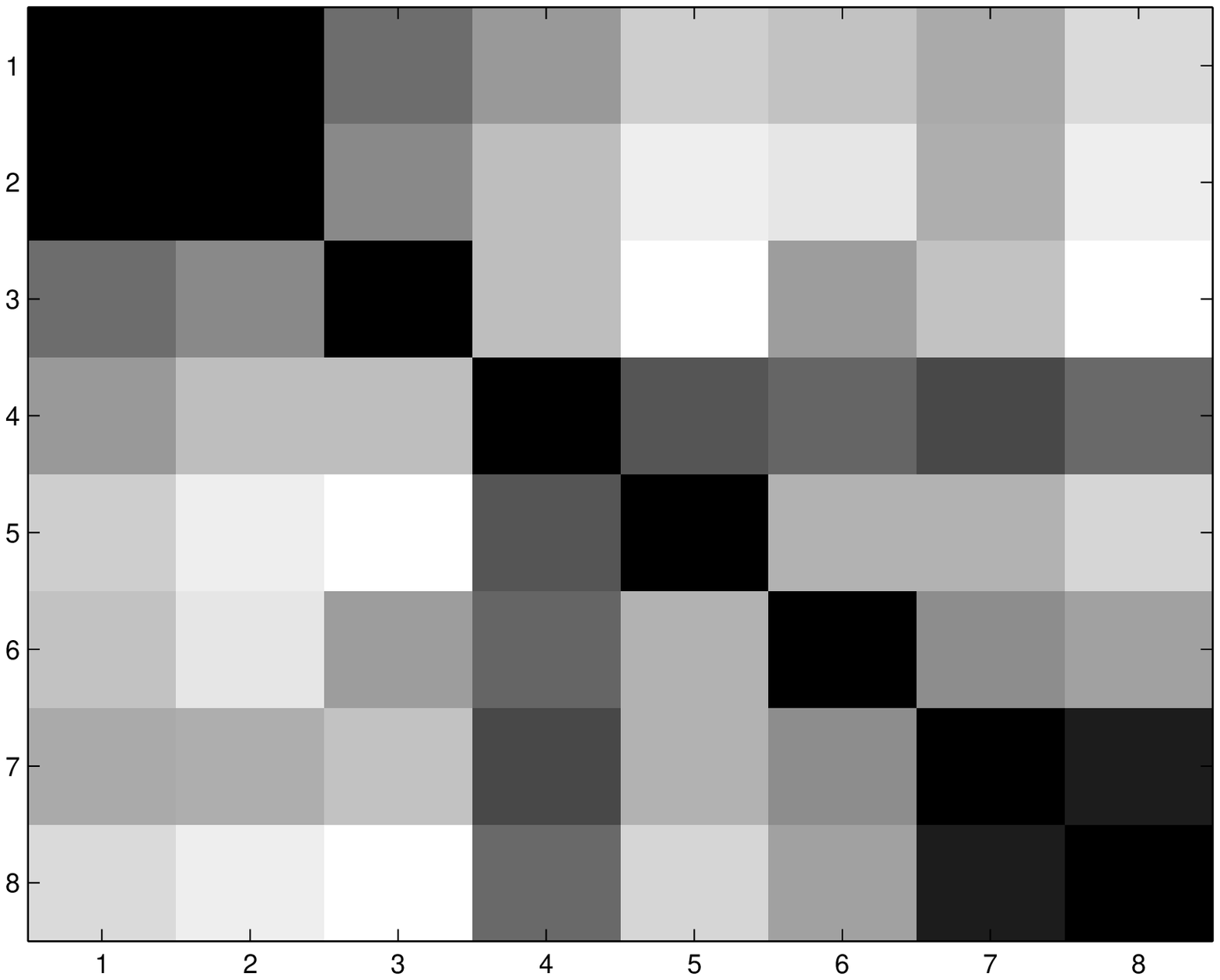}
\dtname{TopGenres}, $\dcm{ind}$
\end{minipage}
\begin{minipage}{4.5cm}
\center
\includegraphics[width=4cm]{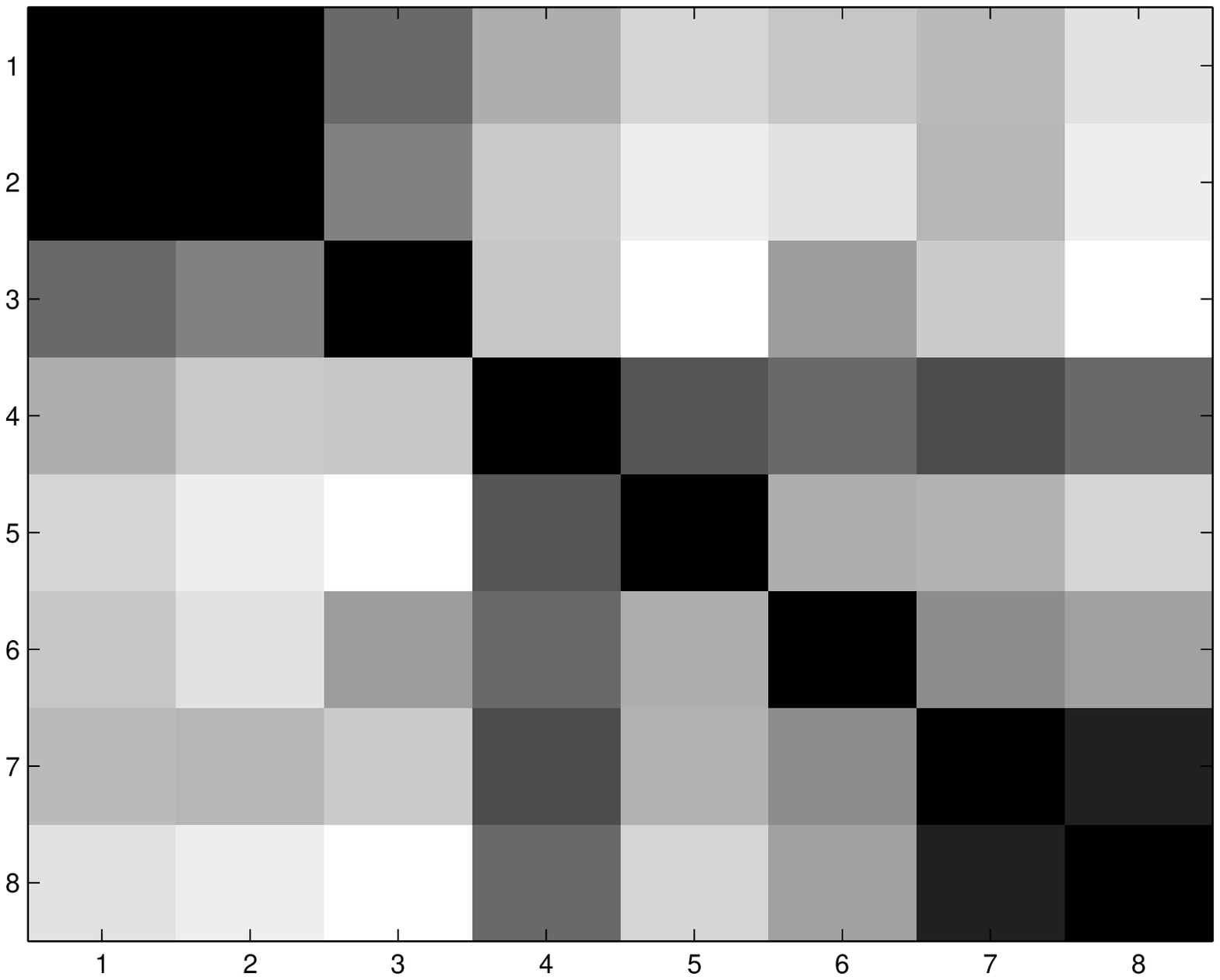}
\dtname{TopGenres}, $\dcm{cov}$
\end{minipage}
\begin{minipage}{4.5cm}
\center
\includegraphics[width=4cm]{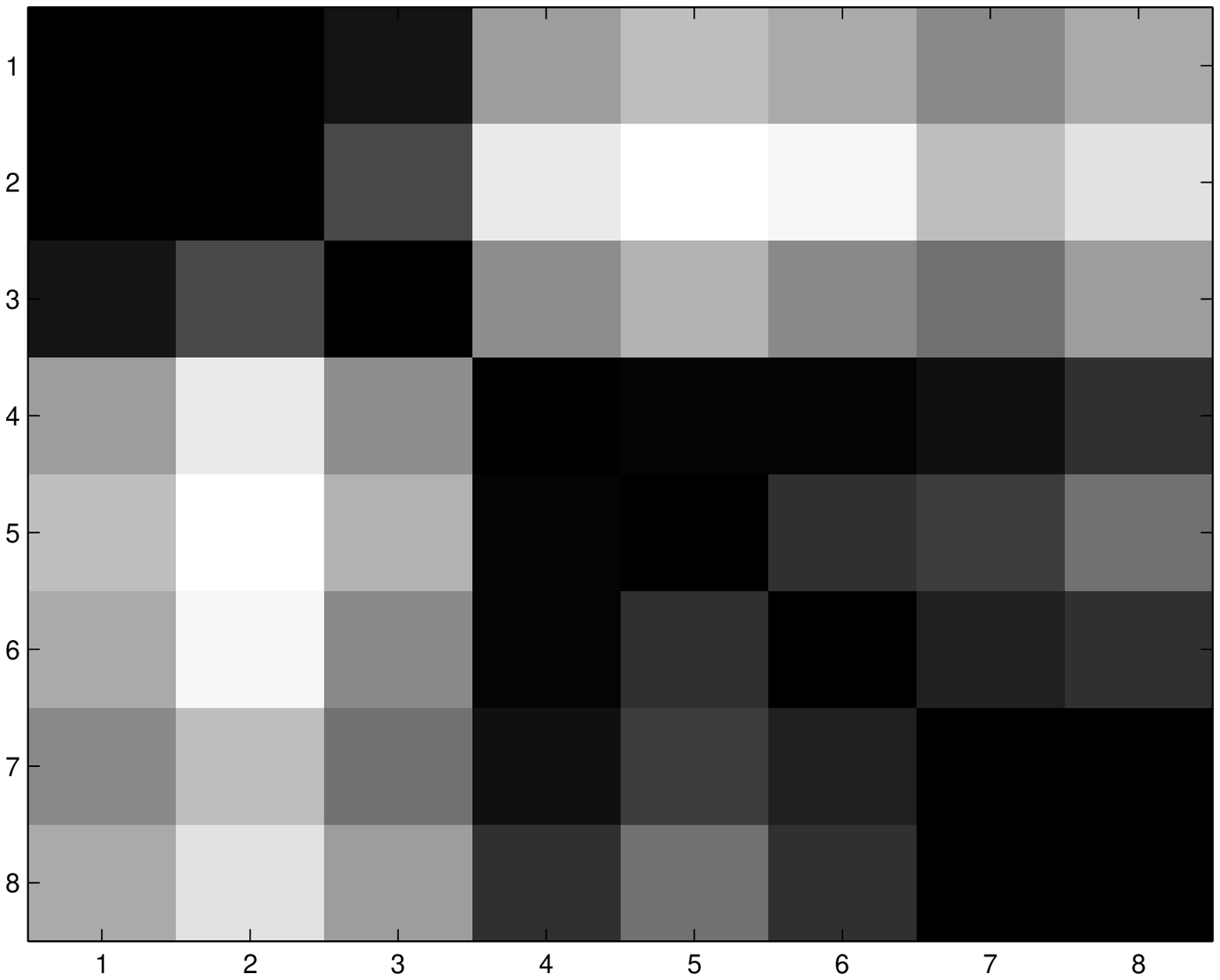}
\dtname{TopGenres}, $\dcm{freq}$
\end{minipage}
\bigskip

\begin{minipage}{4.5cm}
\center
\includegraphics[width=4cm]{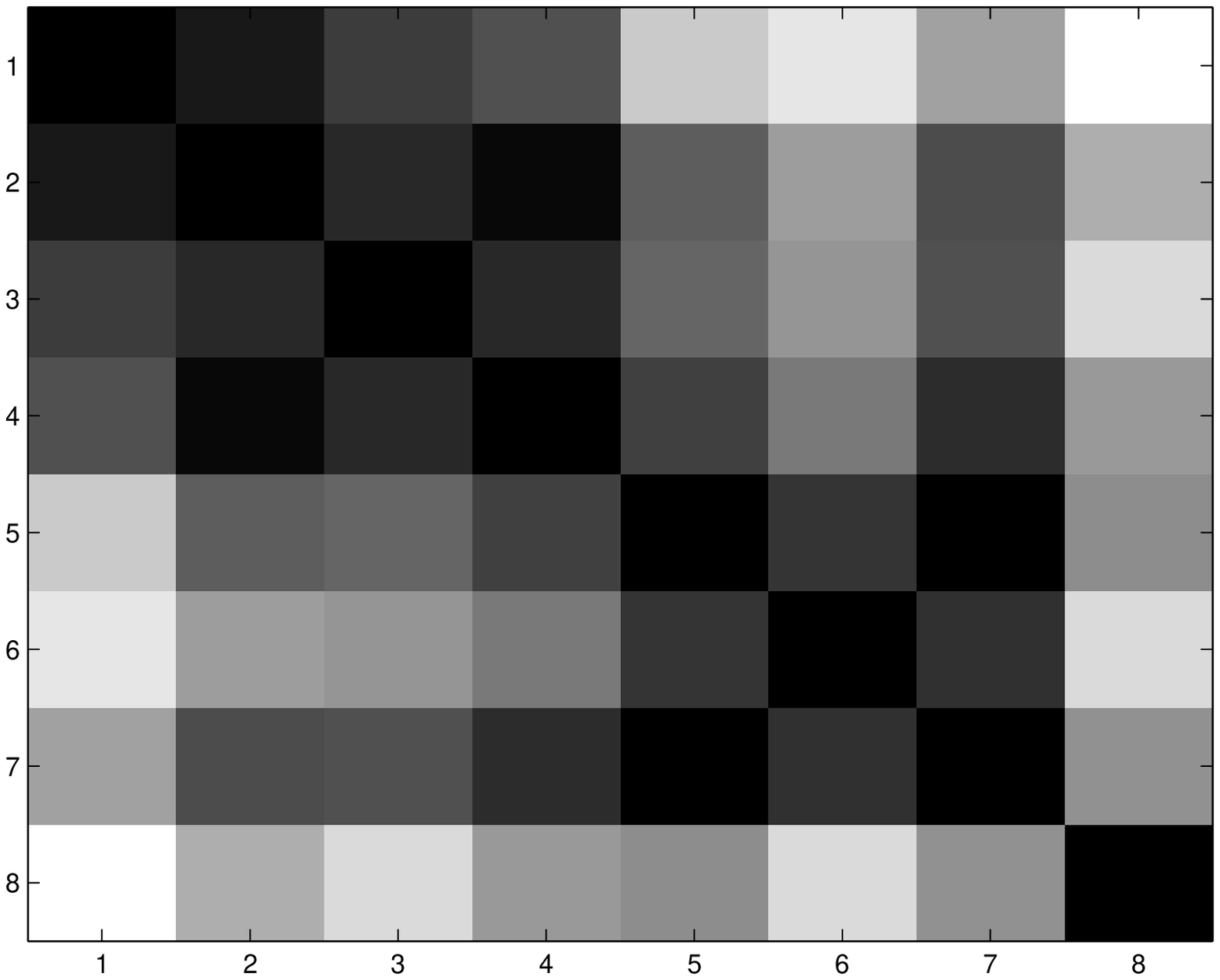}
\dtname{TopDecades}, $\dcm{ind}$
\end{minipage}
\begin{minipage}{4.5cm}
\center
\includegraphics[width=4cm]{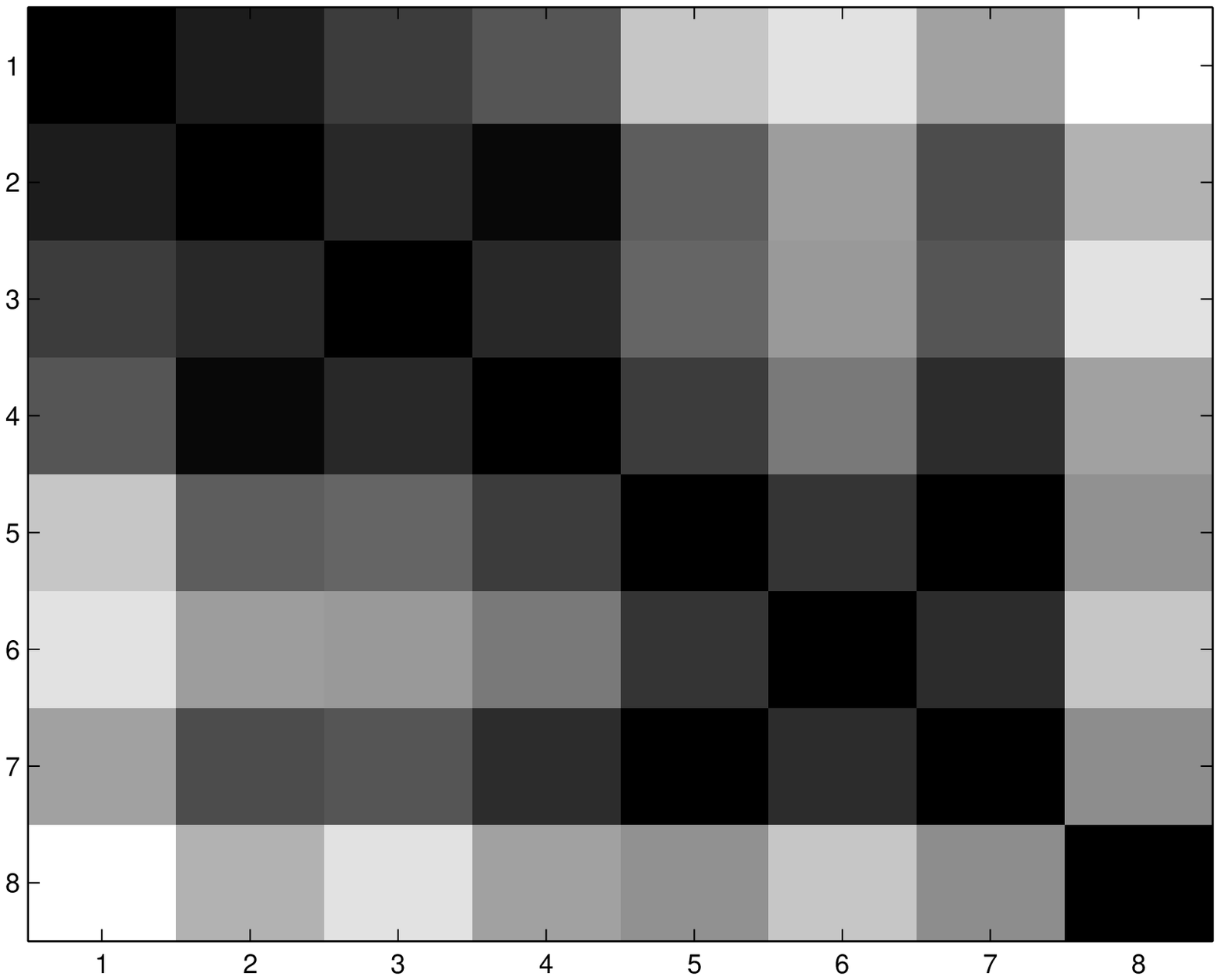}
\dtname{TopDecades}, $\dcm{cov}$
\end{minipage}
\begin{minipage}{4.5cm}
\center
\includegraphics[width=4cm]{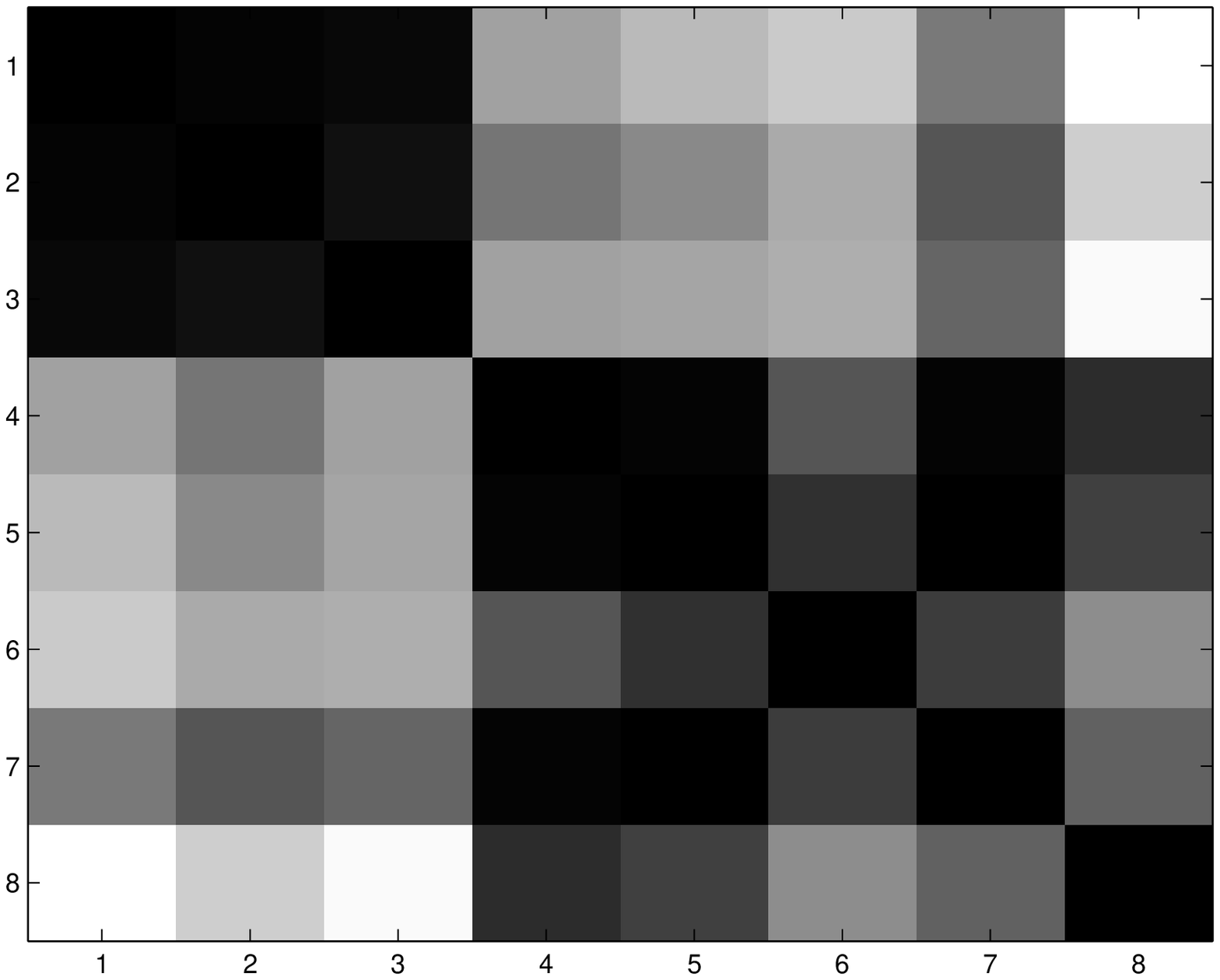}
\dtname{TopDecades}, $\dcm{freq}$
\end{minipage}

\begin{minipage}{4.5cm}
\center
\includegraphics[width=4cm]{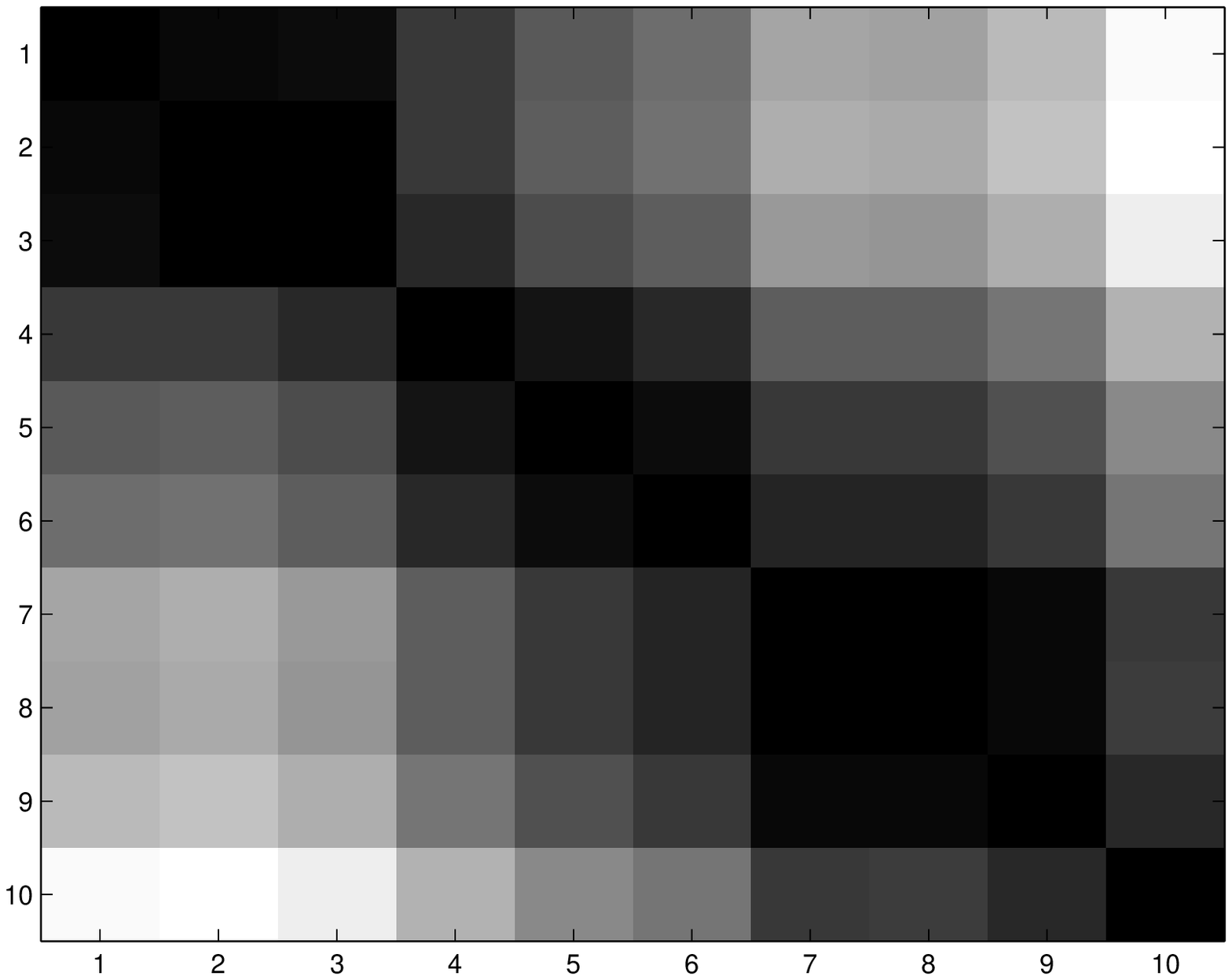}
\dtname{Abstract}, $\dcm{ind}$
\end{minipage}
\begin{minipage}{4.5cm}
\center
\includegraphics[width=4cm]{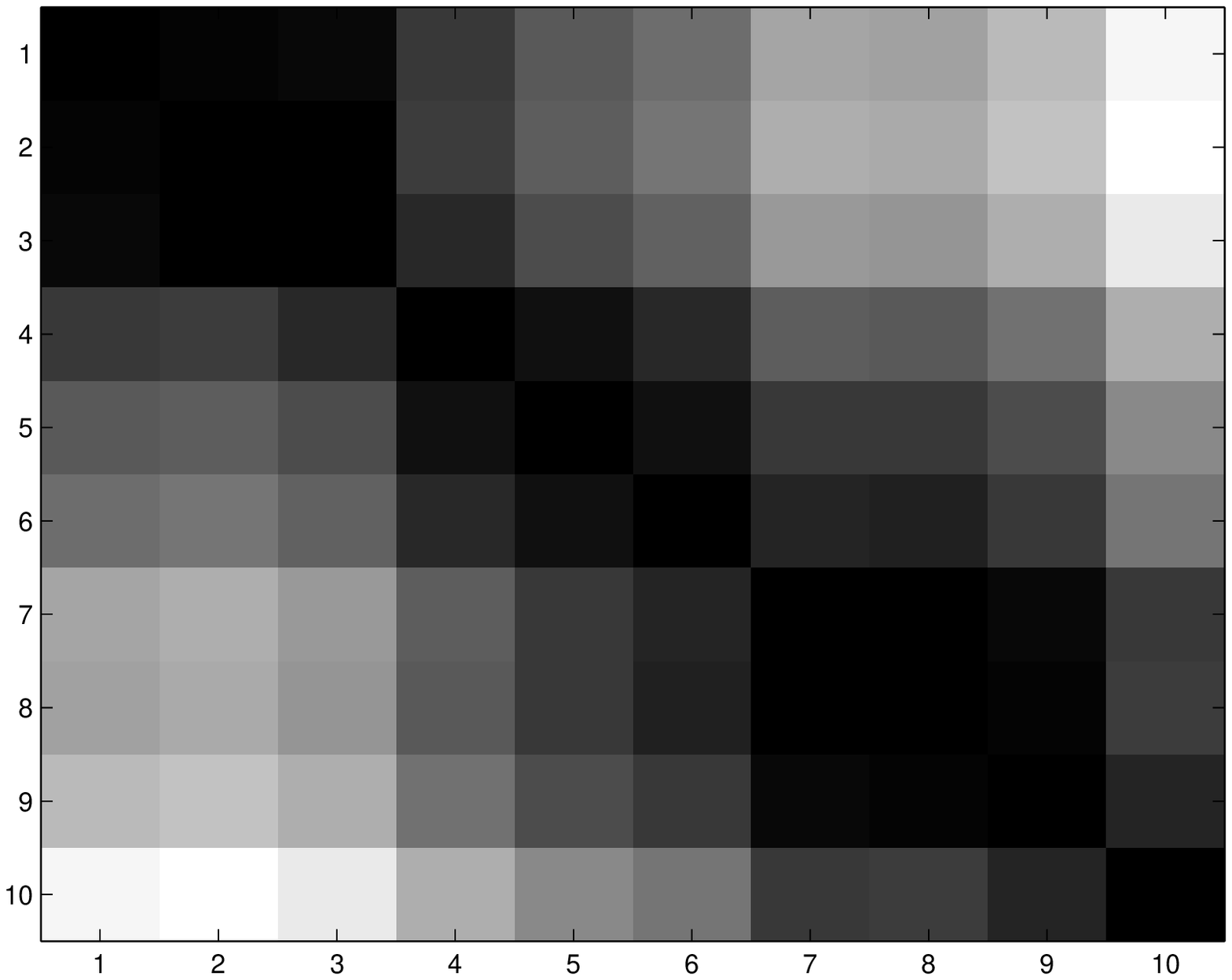}
\dtname{Abstract}, $\dcm{cov}$
\end{minipage}
\begin{minipage}{4.5cm}
\center
\includegraphics[width=4cm]{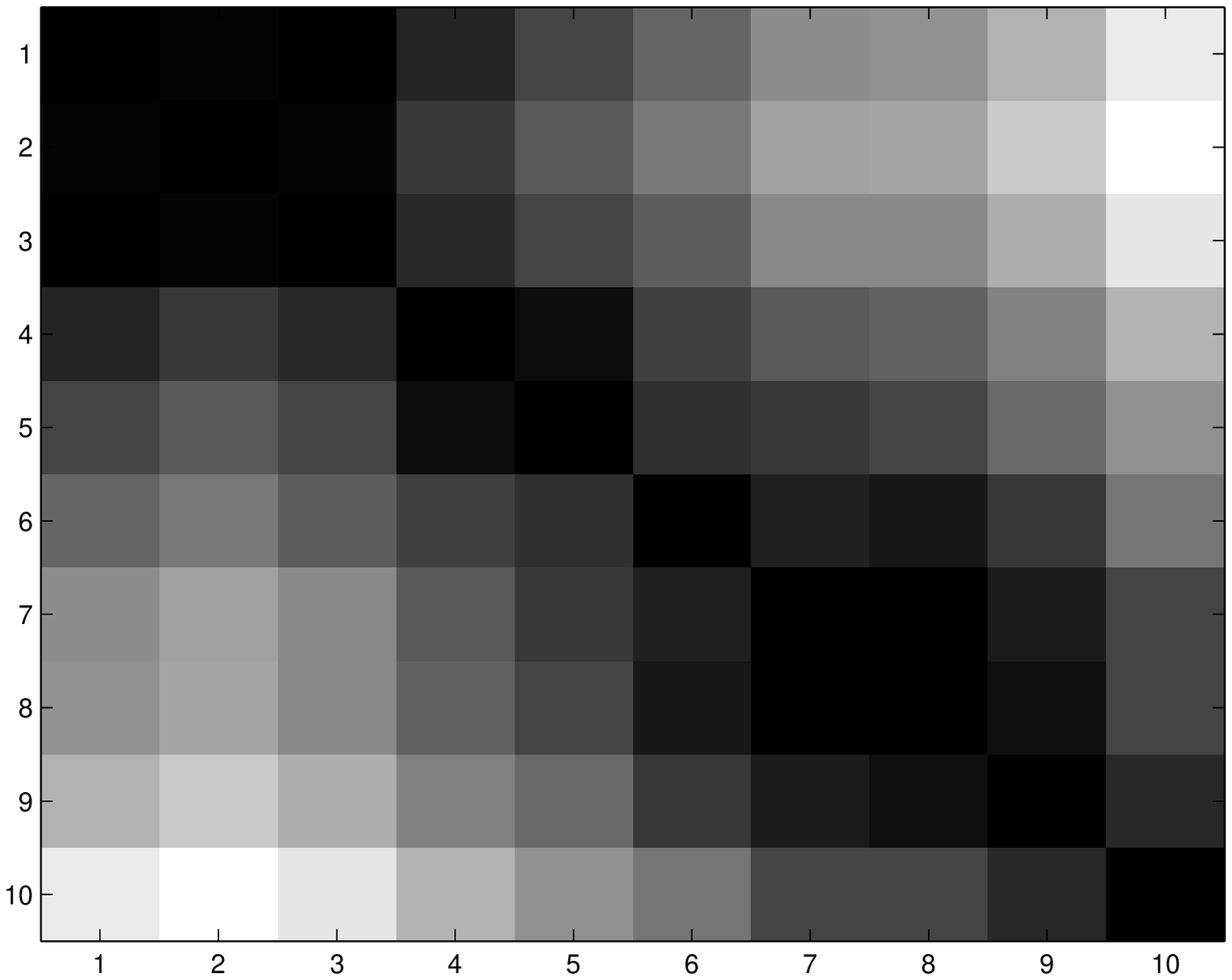}
\dtname{Abstract}, $\dcm{freq}$
\end{minipage}
\caption{Distance matrices for \ftname{20Newsgroups}, \ftname{TopGenres}, \ftname{TopDecades}, and \ftname{Abstract}. In the first column the feature set \ftname{ind} contains the independent means, in the second feature set \ftname{cov} the pairwise correlation is added, and in the third column the feature set \ftname{freq} consists of $10K$ most frequent itemsets, where $K$ is the number of attributes. Darker colours indicate smaller distances.}
\label{fig:distances2}
\end{figure}

\begin{figure}[htb!]
\center
\small
\begin{minipage}{5cm}
\center
\includegraphics[width=5cm]{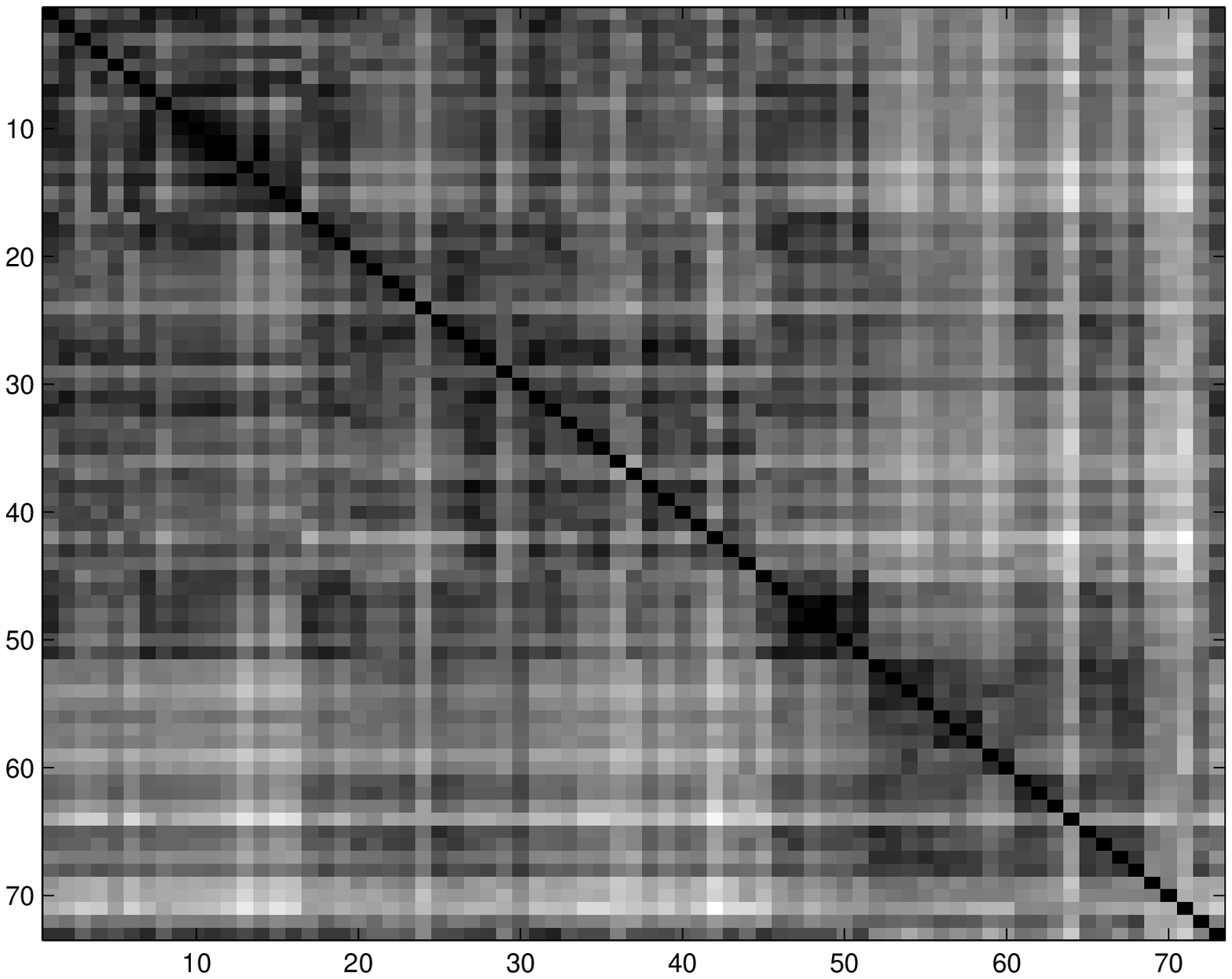}
\dtname{Bible}, $\dcm{ind}$
\end{minipage}
\begin{minipage}{5cm}
\center
\includegraphics[width=5cm]{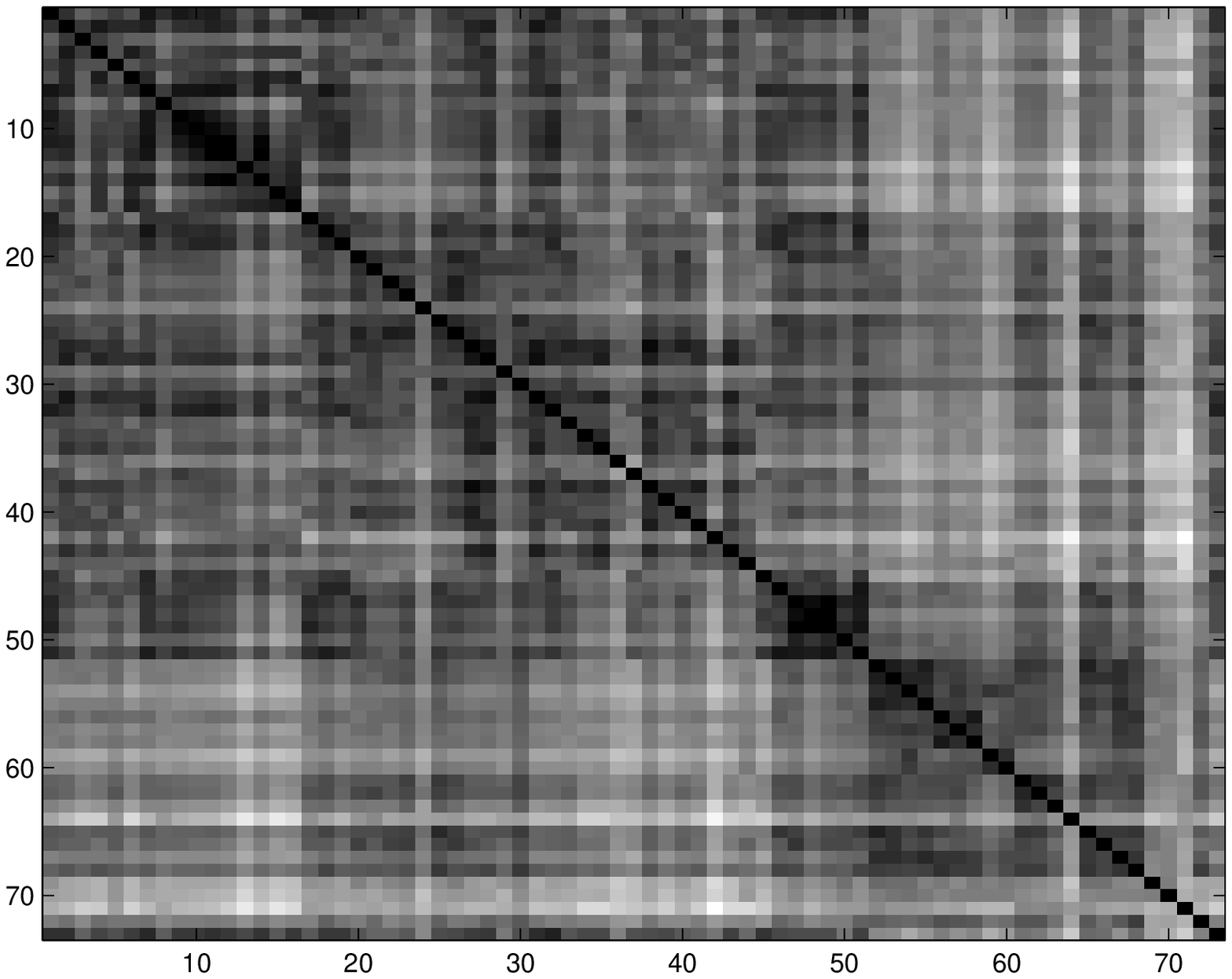}
\dtname{Bible}, $\dcm{cov}$
\end{minipage}
\begin{minipage}{5cm}
\center
\includegraphics[width=5cm]{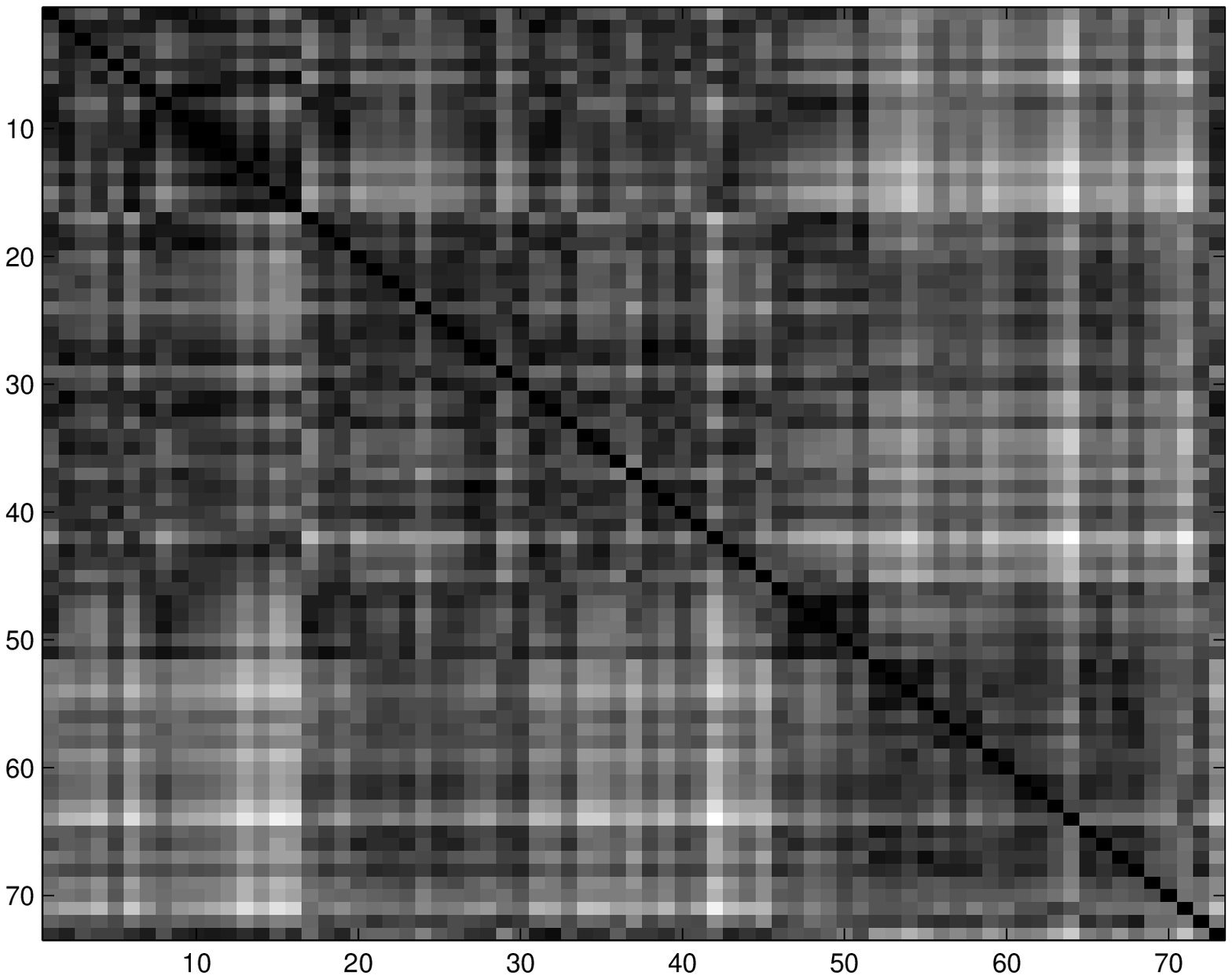}
\dtname{Bible}, $\dcm{freq}$
\end{minipage}
\bigskip

\begin{minipage}{5cm}
\center
\includegraphics[width=5cm]{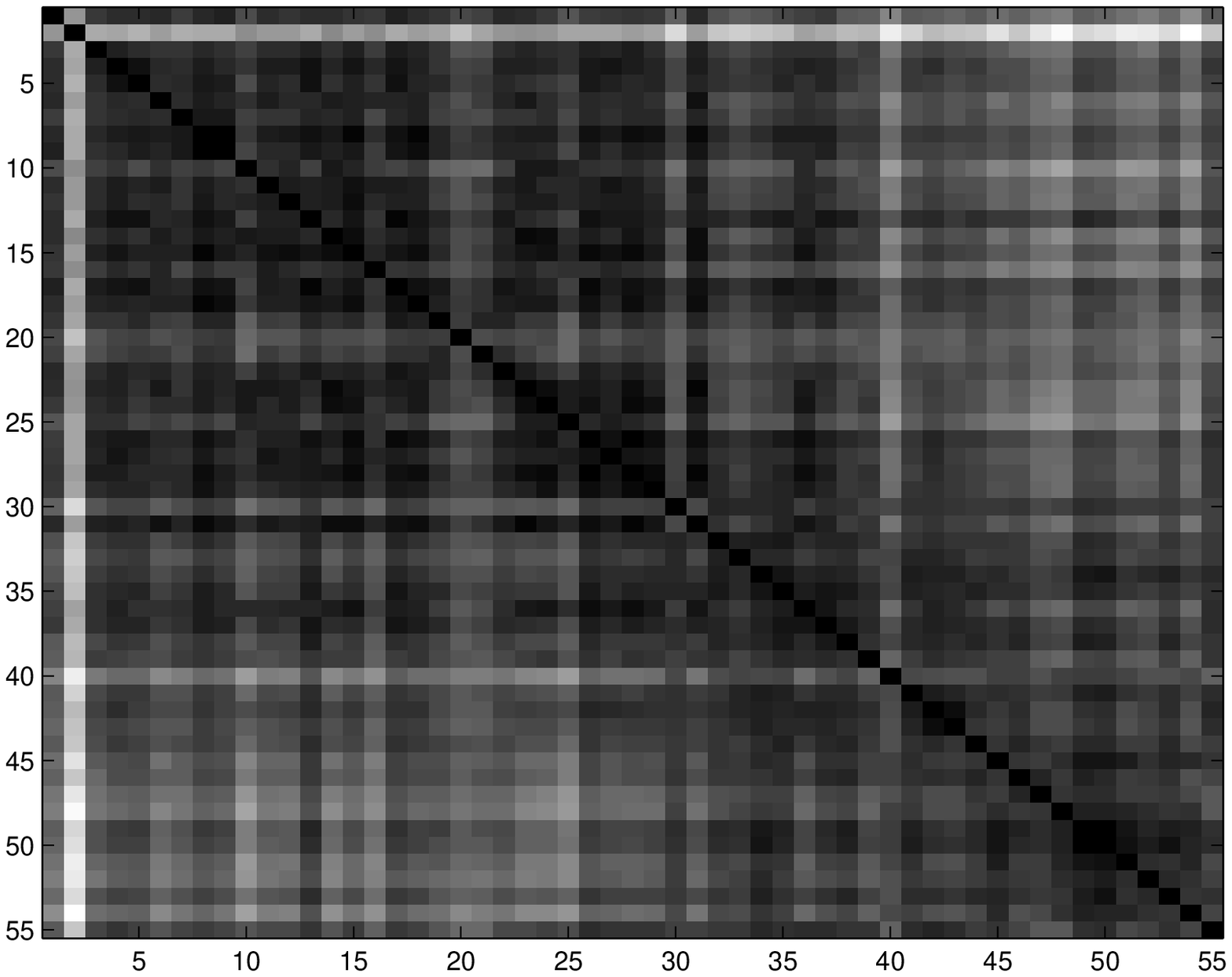}
\dtname{Addresses}, $\dcm{ind}$
\end{minipage}
\begin{minipage}{5cm}
\center
\includegraphics[width=5cm]{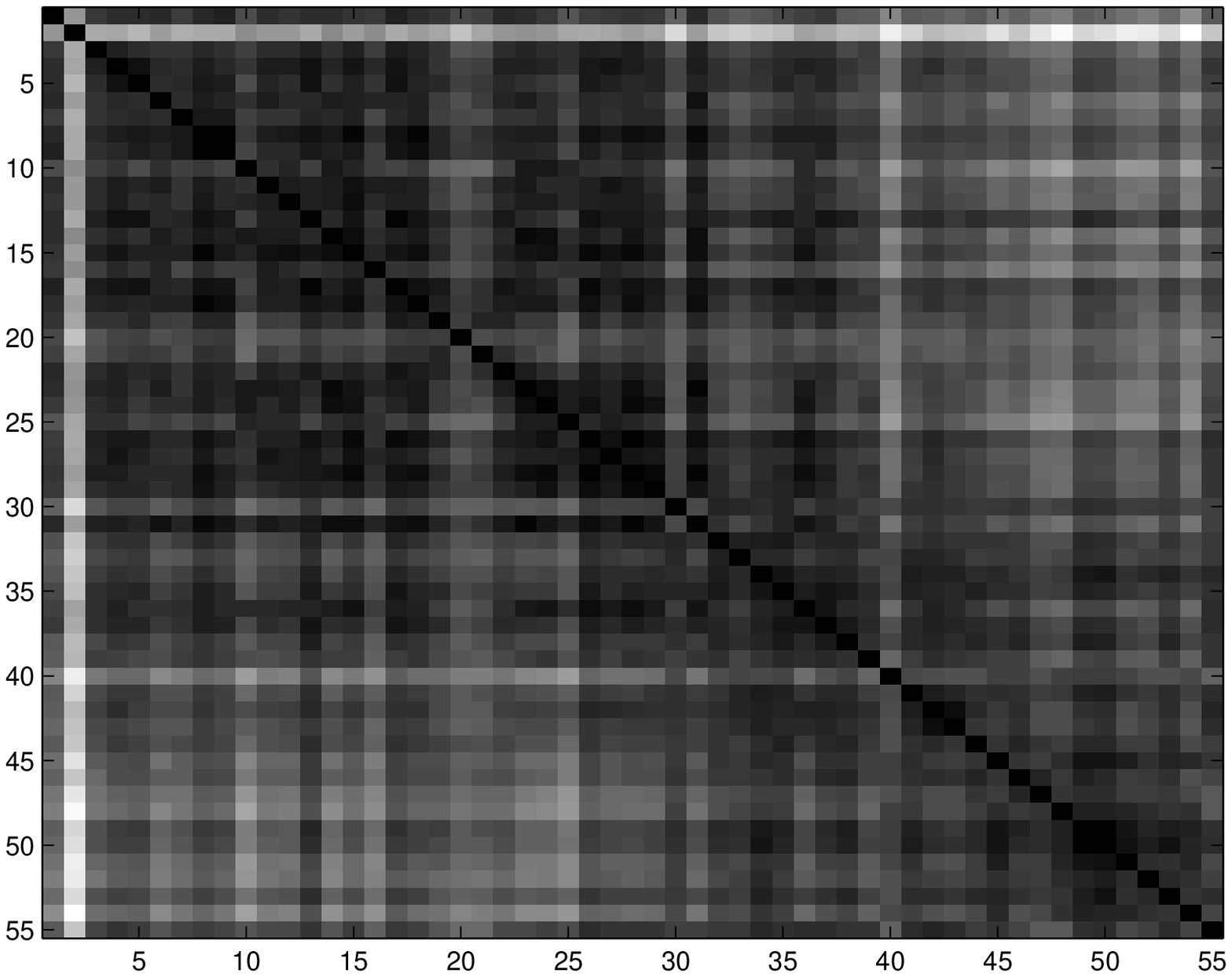}
\dtname{Addresses}, $\dcm{cov}$
\end{minipage}
\begin{minipage}{5cm}
\center
\includegraphics[width=5cm]{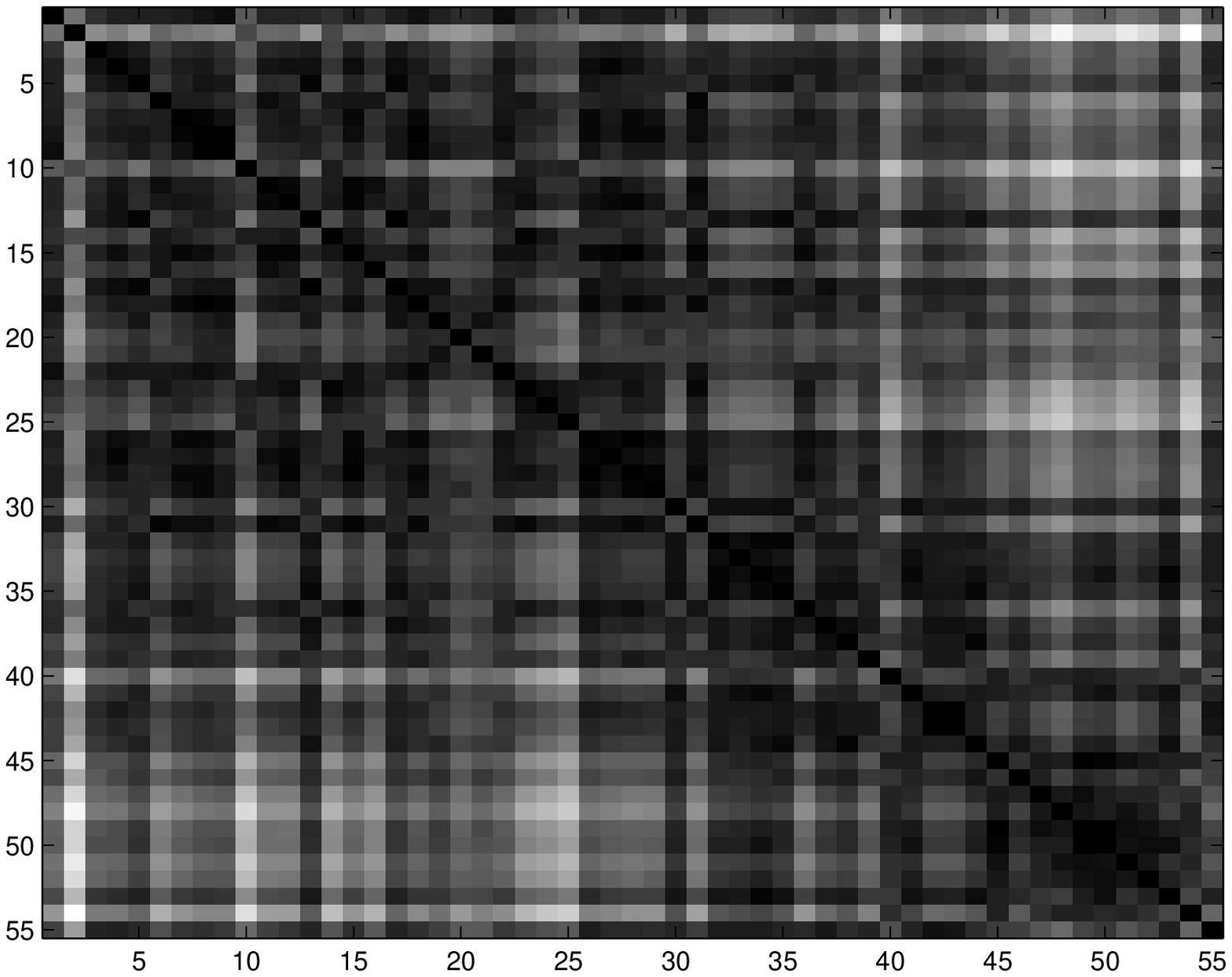}
\dtname{Addresses}, $\dcm{freq}$
\end{minipage}
\bigskip

\begin{minipage}{5cm}
\center
\includegraphics[width=4cm]{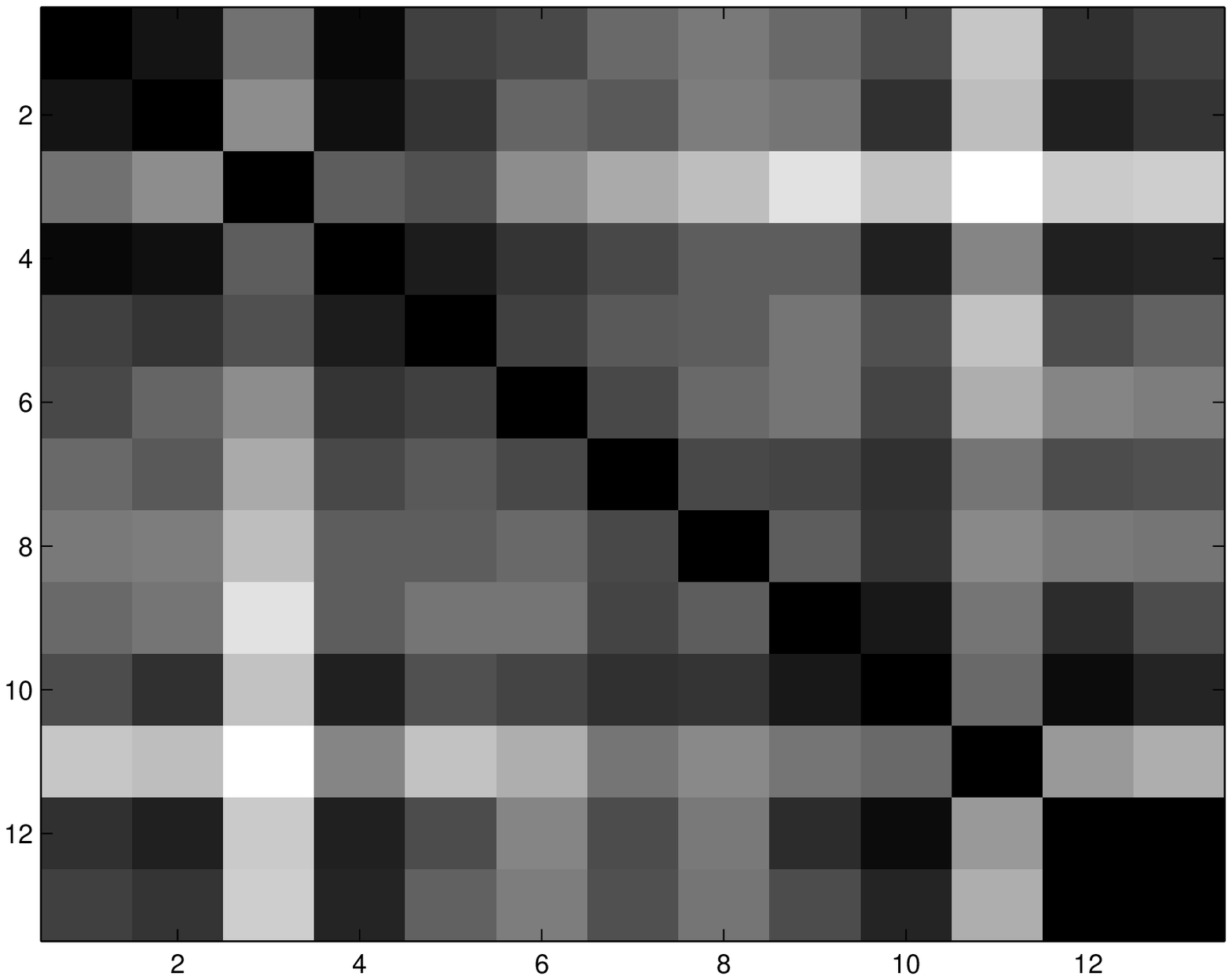}
\dtname{Beatles}, $\dcm{ind}$
\end{minipage}
\begin{minipage}{5cm}
\center
\includegraphics[width=4cm]{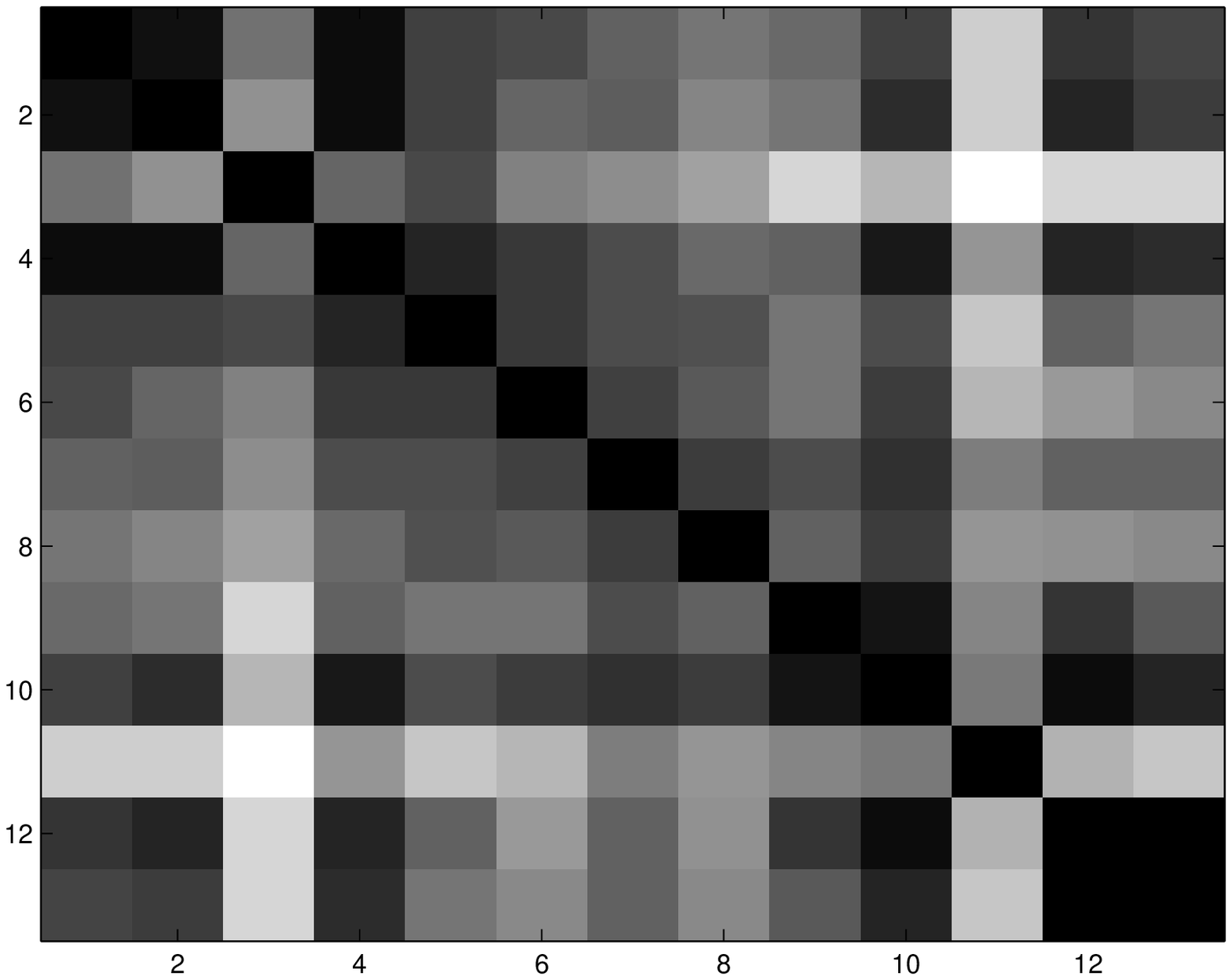}
\dtname{Beatles}, $\dcm{cov}$
\end{minipage}
\begin{minipage}{5cm}
\center
\includegraphics[width=4cm]{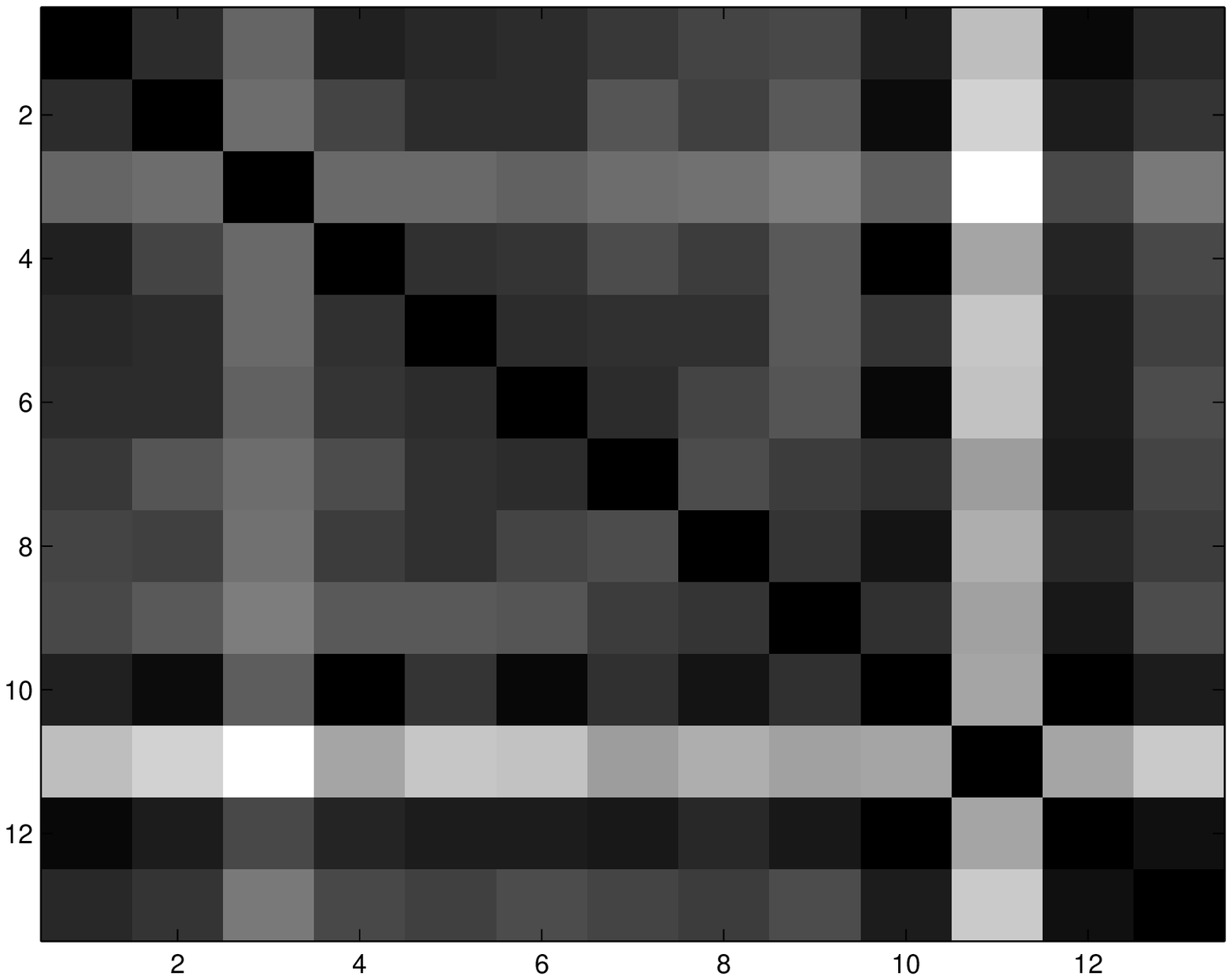}
\dtname{Beatles}, $\dcm{freq}$
\end{minipage}
\caption{Distance matrices for \dtname{Bible}, \dtname{Addresses}, and \dtname{Beatles}. In the first column the feature set \ftname{ind} contains the independent means, in the second feature set \ftname{cov} the pairwise correlation is added, and in the third column the feature set \ftname{freq} consists of $10K$ most frequent itemsets, where $K$ is the number of attributes. Darker colours indicate smaller distances.}
\label{fig:distances}
\end{figure}


We should stress that standard edit distances would not work in these data setups. For example, the sequences have different lengths and hence Levenshtein distance cannot work.

The imperative observation is that, according to the CM distance, the data sets have structure. We can also provide some interpretations to the results: In \dtname{Bible} we see a cluster starting from the $46$th book. The New Testament starts from the $47$th book. An alternative clustering is obtained by separating the Epistles, starting from the $52$th book, from the Gospels. In \dtname{Addresses} we some temporal dependence. Early speeches are different than the modern speeches. In \dtname{Beatles} we see that the early albums are linked together and the last two albums are also linked together. The third album, \dtname{Help!}, is peculiar. It is not linked to the early albums but rather to the later work. One explanation may be that, unlike the other early albums, this album does not contain any cover songs. In \dtname{20Newsgroups} the groups of politics and of religions are close to each other and so are the computer-related groups. The group \dtname{misc.forsale} is close to the computer-related groups. In \dtname{TopGenres} \dtname{Action} and \dtname{Adventure} are close to each other. Also \dtname{Comedy} and \dtname{Romance} are linked. In \dtname{TopDecades} and in \dtname{Abstract} we see temporal behaviour. In Table~\ref{tab:clusterings} the CM distance outperforms the base distance, except for \dtname{Beatles} and \dtname{TopGenres}.

\section{Conclusions and Discussion}
\label{sec:conclusions}
Our task was to find a versatile distance that has nice statistical properties and that can be evaluated efficiently. The CM distance fulfils our goals. In theoretical sections we proved that this distance takes properly into account the correlation between features, and that it is the only (Mahalanobis) distance that does so. Even though our theoretical justifications are complex, the CM distance itself is rather simple. In its simplest form, it is the $L_2$ distance between the means of the individual attributes. On the other hand, the CM distance has a surprising form when the features are itemsets.

In general, the computation time of the CM distance depends of the size of sample space that can be exponentially large. Still, there are many types of feature functions for which the distance can be solved. For instance, if the features are itemsets, then the distance can be solved in polynomial time. In addition, if the itemsets form an antimonotonic family, then the distance can be solved in linear time.

In empirical tests the CM distance implied that the used data sets have structure, as expected. The performance of the CM distance compared to the base distance depended heavily on the data set. We also showed that the feature sets \ftname{ind} and \ftname{cov} produced almost equivalent distances, whereas using frequent itemsets produced very different distances.

Sophisticated feature selection methods were not compared in this paper. Instead, we either decided explicitly the set of features or deduced them using \alname{Apriori}. We argued that we cannot use the traditional approaches for selecting features of data sets, unless we are provided some additional information.

\acks{The author would like to thank Heikki Mannila and Kai Puolam\"{a}ki for their extremely helpful comments.}
\bibliography{bibliography}

\begin{thebibliography}{18}
\providecommand{\natexlab}[1]{#1}
\providecommand{\url}[1]{\texttt{#1}}
\expandafter\ifx\csname urlstyle\endcsname\relax
  \providecommand{\doi}[1]{doi: #1}\else
  \providecommand{\doi}{doi: \begingroup \urlstyle{rm}\Url}\fi

\bibitem[Agrawal et~al.(1993)Agrawal, Imielinski, and Swami]{agrawal93mining}
Rakesh Agrawal, Tomasz Imielinski, and Arun~N. Swami.
\newblock Mining association rules between sets of items in large databases.
\newblock In Peter Buneman and Sushil Jajodia, editors, \emph{Proceedings of
  the 1993 {ACM} {SIGMOD} International Conference on Management of Data},
  pages 207--216, Washington, D.C., 26--28~ 1993.

\bibitem[Agrawal et~al.(1996)Agrawal, Mannila, Srikant, Toivonen, and
  Verkamo]{agrawal96apriori}
Rakesh Agrawal, Heikki Mannila, Ramakrishnan Srikant, Hannu Toivonen, and
  Aino~I. Verkamo.
\newblock Fast discovery of association rules.
\newblock In Usama~M. Fayyad, Gregory Piatetsky-Shapiro, Padhraic Smyth, and
  Ramasamy Uthurusamy, editors, \emph{Advances in Knowledge Discovery and Data
  Mining}, pages 307--328. AAAI Press/The MIT Press, 1996.

\bibitem[Baldi et~al.(2003)Baldi, Frasconi, and Smyth]{baldi03internet}
Pierre Baldi, Paolo Frasconi, and Padhraic Smyth.
\newblock \emph{Modeling the Internet and the Web}.
\newblock John Wiley \& Sons, 2003.

\bibitem[Baseville(1989)]{baseville89distance}
Mich\'ele Baseville.
\newblock Distance measures for signal processing and pattern recognition.
\newblock \emph{Signal Processing}, 18\penalty0 (4):\penalty0 349--369, 1989.

\bibitem[Calders(2003)]{calders03thesis}
Toon Calders.
\newblock \emph{Axiomatization and Deduction Rules for the Frequency of
  Itemsets}.
\newblock PhD thesis, University of Antwerp, Belgium, 2003.

\bibitem[Calinski and Harabasz(1974)]{calinski74index}
Tadeusz Calinski and Jerzy Harabasz.
\newblock A dendrite method for cluster analysis.
\newblock \emph{Communications in Statistics}, 3:\penalty0 1--27, 1974.

\bibitem[Cooper(1990)]{cooper90complexity}
Gregory Cooper.
\newblock The computational complexity of probabilistic inference using
  bayesian belief networks.
\newblock \emph{Artificial Intelligence}, 42\penalty0 (2--3):\penalty0
  393--405, Mar. 1990.

\bibitem[Csisz\'ar(1975)]{csiszar75divergence}
Imre Csisz\'ar.
\newblock I-divergence geometry of probability distributions and minimization
  problems.
\newblock \emph{The Annals of Probability}, 3\penalty0 (1):\penalty0 146--158,
  Feb. 1975.

\bibitem[Davies and Bouldin(1979)]{davies79index}
David~L. Davies and Donald~W. Bouldin.
\newblock A cluster separation measure.
\newblock \emph{IEEE Transactions of Pattern Analysis and Machine
  Intelligence}, 1\penalty0 (2):\penalty0 224--227, April 1979.

\bibitem[Eiter and Mannila(1997)]{eiter97distance}
Thomas Eiter and Heikki Mannila.
\newblock Distance measures for point sets and their computation.
\newblock \emph{Acta Informatica}, 34\penalty0 (2):\penalty0 109--133, 1997.

\bibitem[Hailperin(1965)]{hailperin65inequalities}
Theodore Hailperin.
\newblock Best possible inequalities for the probability of a logical function
  of events.
\newblock \emph{The American Mathematical Monthly}, 72\penalty0 (4):\penalty0
  343--359, Apr. 1965.

\bibitem[Hand et~al.(2001)Hand, Mannila, and Smyth]{hand02principles}
David Hand, Heikki Mannila, and Padhraic Smyth.
\newblock \emph{Principles of Data Mining}.
\newblock The MIT Press, 2001.

\bibitem[Hollm\'en et~al.(2003)Hollm\'en, Sepp\"anen, and
  Mannila]{hollmen03mixture}
Jaakko Hollm\'en, Jouni~K Sepp\"anen, and Heikki Mannila.
\newblock Mixture models and frequent sets: combining global and local methods
  for 0-1 data.
\newblock In \emph{Proceedings of the SIAM Conference on Data Mining (2003)},
  2003.

\bibitem[Kullback(1968)]{kullback68information}
Solomon Kullback.
\newblock \emph{Information Theory and Statistics}.
\newblock Dover Publications, Inc., 1968.

\bibitem[Levenshtein(1966)]{levenshtein66distance}
Vladimir~I. Levenshtein.
\newblock Binary codes capable of correcting deletions, insertions and
  reversals.
\newblock \emph{Soviet Physics Doklady.}, 10\penalty0 (8):\penalty0 707--710,
  February 1966.

\bibitem[Mannila et~al.(1999)Mannila, Pavlov, and Smyth]{mannila99prediction}
Heikki Mannila, Dmitry Pavlov, and Padhraic Smyth.
\newblock Prediction with local patterns using cross-entropy.
\newblock In \emph{Knowledge Discovery and Data Mining}, pages 357--361, 1999.

\bibitem[Ng et~al.(2002)Ng, Jordan, and Weiss]{ng02clustering}
Andrew~Y. Ng, Michael~I. Jordan, and Yair Weiss.
\newblock On spectral clustering: Analysis and an algorithm.
\newblock In \emph{Advances in Neural Information Processing Systems 14}, 2002.

\bibitem[Pearl(1988)]{pearl88reasoning}
Judea Pearl.
\newblock \emph{Probabilistic reasoning in intelligent systems: networks of
  plausible inference}.
\newblock Morgan Kaufmann Publishers Inc., San Francisco, CA, USA, 1988.

\end{thebibliography}
\appendix
\section{}
In this section we will prove the theorems given in this paper.
\subsection{Proof of Theorem~\ref{thr:calc}}
\label{prf:calc}
To simplify the notation denote $S_0(x) = 1$, $\theta^*_1 = \envec{1, \theta_{11}}{\theta_{1N}}$ and $\theta^*_2 = \envec{1, \theta_{21}}{\theta_{2N}}$.
The norm function restricted to the affine space has one minimum and it can be found using Lagrange multipliers. Thus we can express the vectors $u_i$ in Eq.~\ref{eq:def}
\[
u_{ij} = \lambda_i^TS(j),
\]
where $j \in \Omega$ and $\lambda_i$ is the column vector of length $N+1$ consisting of the corresponding Lagrange multipliers.
The distance is equal to
\[
\begin{split}
\dist{D_1}{D_2}{S}^2 & = \abs{\Omega}\norm{u_1-u_2}^2 \\
& = \abs{\Omega}\sum_{j \in \Omega}\pr{u_{1j}-u_{2j}}\pr{u_{1j}-u_{2j}} \\
& = \abs{\Omega}\sum_{j \in \Omega}\pr{u_{1j}-u_{2j}}\pr{\lambda_1^TS(j)-\lambda_2^TS(j)} \\
& = \abs{\Omega}\pr{\lambda_1-\lambda_2}^T\sum_{j \in \Omega}\pr{u_{1j}-u_{2j}}S(j) \\
 & = \abs{\Omega} \pr{\lambda_1-\lambda_2}^T\pr{\theta_1^*-\theta_2^*}.
\end{split}
\]
Since $u_i \in \const{S}{\theta_i}$, the multipliers $\lambda_i$ can be solved from the equation
\[
\theta_i^* = \sum_{j \in \Omega} S(j) u_{ij} = \sum_{j \in \Omega} S(j) \lambda_i^TS(j) = \pr{\sum_{j \in \Omega} S(j)S(j)^T} \lambda_i,
\]
i.e., $\theta^*_i = A\lambda_i$, where $A$ is an $(N+1)\times (N+1)$ matrix $A_{xy} = \sum_j S_x(j)S_y(j)$. It is straightforward to prove that the existence of $\cov{}{-1}{S}$ implies that $A$ is also invertible. Let $B$ be an $N\times N$ matrix formed from $A^{-1}$ by removing the first row and the first column. We have
\[
\begin{split}
\abs{\Omega}\norm{u_1-u_2}^2 & = \abs{\Omega}\pr{\theta_1^*-\theta_2^*}^TA^{-1}\pr{\theta_1^*-\theta_2^*} \\
& = \abs{\Omega}\pr{\theta_1-\theta_2}^TB\pr{\theta_1-\theta_2}.
\end{split}
\]
The last equality is true since $\theta^*_{10} = \theta^*_{20}$.

We need to prove that $\abs{\Omega}B = \cov{}{-1}{S}$. Let $\spr{c; B}$ be the matrix obtained from $A^{-1}$ by removing the first row. Let $\gamma = \mean{}{S}$ taken with respect to the uniform distribution. Since the first column of $A$ is equal to $\abs{\Omega}\spr{1, \gamma}$, it follows that $c = -B\gamma$. From the identity
\[
c_xA_{(0,y)} + \sum_{z=1}^NB_{(x,z)}A_{(z,y)} = \delta_{xy}
\]
we have
\[
\sum_{z=1}^NB_{(x,z)}\pr{A_{(z,y)}-A_{(0,y)}\gamma_z} = \sum_{z=1}^N\abs{\Omega}B_{(x,z)}\pr{\abs{\Omega}^{-1}A_{(z,y)}-\gamma_y\gamma_z} = \delta_{xy}.
\]
Since $\abs{\Omega}^{-1}A_{(z,y)}-\gamma_z\gamma_y$ is equal to the $(z,y)$ entry of $\cov{}{}{S}$, the theorem follows.
\subsection{Proofs of Theorems given in Section~\ref{sec:properties}}
\begin{proof}[Theorem~\ref{thr:metric}]
The covariance matrix $\cov{}{}{S}$ in Theorem~\ref{thr:calc} depends only on $S$ and is positive definite. Therefore, the CM distance is a Mahalanobis distance.
\end{proof}

\begin{proof}[Theorem~\ref{thr:augdata}]
Let $\theta_i = S(D_i)$ for $i=1,2,3$. The frequencies for $D_1 \cup D_3$ and $D_2 \cup D_3$ are $(1-\epsilon)\theta_1+\epsilon\theta_3$ and $(1-\epsilon)\theta_2+\epsilon\theta_3$, respectively. The theorem follows from Theorem~\ref{thr:calc}.
\end{proof}
The following lemma proves Theorem~\ref{thr:linear}.
\begin{lemma} Let $\funcdef{A}{\real^N}{\real^M}$ and define a function $T(\omega) = A(S(\omega))$. Let $\phi = \freq{T}{D}$ and $\theta = \freq{S}{D}$ be the frequencies for some data set $D$. Assume further that there is no two data sets $D_1$ and $D_2$ such that $\freq{S}{D_1} = \freq{S}{D_2}$ and $\freq{T}{D_1} \neq \freq{T}{D_2}$. Then $\dist{D_1}{D_2}{T} \leq \dist{D_1}{D_2}{S}$. The equality holds if for a fixed $\phi$ the frequency $\theta$ is unique.
\label{thr:transform}
\end{lemma}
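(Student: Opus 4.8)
The plan is to argue directly from the geometric definition of the CM distance, through the minimum-norm points of the constrained spaces, and to reduce everything to a single inclusion of affine spaces. Throughout write $\theta_i = \freq{S}{D_i}$ and $\phi_i = \freq{T}{D_i}$, where $T(\omega) = A(S(\omega))$. Recall that, by definition, $\dist{D_1}{D_2}{S} = \sqrt{\abs{\Omega}}\,\norm{u_1 - u_2}$ with $u_i$ the shortest vector of $\const{S}{\theta_i}$, and likewise $\dist{D_1}{D_2}{T} = \sqrt{\abs{\Omega}}\,\norm{v_1 - v_2}$ with $v_i$ the shortest vector of $\const{T}{\phi_i}$. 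The target is the inclusion $\const{S}{\theta_i} \subseteq \const{T}{\phi_i}$ for $i = 1, 2$.

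First I would prove that inclusion. Both spaces impose $\sum_j u_j = 1$, so it suffices to show that $\sum_j S(j) u_j = \theta_i$ forces $\sum_j T(j) u_j = \phi_i$. This is the heart of the matter and the main obstacle: the hypothesis is phrased for data sets, whose frequencies are convex combinations of the image points with non-negative rational weights, whereas a generic $u \in \const{S}{\theta_i}$ is a signed vector that may sit well outside the convex hull of $\set{S(\omega) : \omega \in \Omega}$. I would bridge this gap by showing the hypothesis forces $A$ to coincide with an affine map on the finite image set. Writing a data-set frequency as $\sum_k c_k s_k$ over the distinct image points $s_1, \ldots, s_m$ (rational $c_k \ge 0$, $\sum_k c_k = 1$), the assumption that $\phi$ is a well-defined function of $\theta$ says precisely that any signed combination $d$ with $\sum_k d_k = 0$ and $\sum_k d_k s_k = 0$ also satisfies $\sum_k d_k A(s_k) = 0$; by linearity and density of the rationals this extends to the whole subspace, which is exactly the assertion that $s_k \mapsto A(s_k)$ is an affine function of $s_k$. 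Hence there are a matrix $\alpha$ and a vector $\beta$ with $A(S(j)) = \alpha S(j) + \beta$ for every $j \in \Omega$.

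Granting this, the inclusion is a one-line check: for $u \in \const{S}{\theta_i}$,
\[
\sum_j T(j) u_j = \sum_j \pr{\alpha S(j) + \beta} u_j = \alpha \theta_i + \beta = \phi_i,
\]
where the last step uses $\phi_i = \freq{T}{D_i} = \alpha\,\freq{S}{D_i} + \beta$. Thus $u \in \const{T}{\phi_i}$. The inequality now follows from elementary geometry: since $u_1 \in \const{T}{\phi_1}$ and $u_2 \in \const{T}{\phi_2}$, the segment $u_1 - u_2$ connects the two parallel spaces $\const{T}{\phi_1}$ and $\const{T}{\phi_2}$, and by the elementary observation recalled just before the definition of the CM distance the shortest such segment is exactly $v_1 - v_2$. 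Therefore $\norm{v_1 - v_2} \le \norm{u_1 - u_2}$, and multiplying by $\sqrt{\abs{\Omega}}$ gives $\dist{D_1}{D_2}{T} \le \dist{D_1}{D_2}{S}$.

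For the equality statement I would add the hypothesis that $\theta$ is uniquely determined by $\phi$, i.e. the affine map $\theta \mapsto \alpha\theta + \beta$ is injective. For every direction vector $u$ with $\sum_j u_j = 0$ one has $\sum_j T(j) u_j = \alpha \sum_j S(j) u_j$, so injectivity of $\alpha$ forces the direction space of the $\const{S}{\cdot}$ family and that of the $\const{T}{\cdot}$ family to coincide; being nested affine spaces with a common direction space, $\const{S}{\theta_i} = \const{T}{\phi_i}$, whence $u_i = v_i$ and the two distances agree. The only genuinely delicate point is the first step --- upgrading the data-set-level hypothesis to an affine identity valid for all signed vectors in $\const{S}{\theta_i}$; everything afterwards is bookkeeping and Euclidean geometry.
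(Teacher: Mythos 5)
Your strategy is the paper's own: show the inclusion $\const{S}{\theta_i} \subseteq \const{T}{\phi_i}$ and then use the fact that the minimum-norm points of two parallel affine spaces realize the shortest connecting segment, so any other connecting segment (in particular $u_1-u_2$) is at least as long. That geometric half, and your handling of the equality case via coinciding direction spaces, are correct. The problem sits exactly where you flag it, and your bridge does not hold.

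From the data-set hypothesis you may only conclude $\sum_k d_k A(s_k) = 0$ for those $d$ with $\sum_k d_k = 0$ and $\sum_k d_k s_k = 0$ that are differences of two empirical distributions, i.e.\ for the \emph{rational} points of the subspace $V = \set{d : \sum_k d_k = 0,\ \sum_k d_k s_k = 0}$. But $V$ is cut out by linear equations whose coefficients are the real vectors $s_k$; it is not necessarily defined over $\mathbb{Q}$, so its rational points need not be dense in it --- they can be just $\set{0}$, in which case the hypothesis carries no information and "density of the rationals" gives you nothing. This is not a repairable technicality: take $\Omega = \set{1,2,3}$, $N = 1$, $S(1)=0$, $S(2)=\sqrt{2}$, $S(3)=1+\sqrt{2}$, and $A(0)=A(\sqrt{2})=0$, $A(1+\sqrt{2})=1$. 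Rational independence of $1$ and $\sqrt{2}$ forces two data sets with equal $S$-frequencies to have identical empirical distributions, so the hypothesis holds vacuously; yet $A$ is not affine on $S(\Omega)$, the inclusion fails, and for $D_1=\set{3}$, $D_2=\set{2}$ Theorem~\ref{thr:calc} gives $\dist{D_1}{D_2}{S}^2 = 9/(6+2\sqrt{2}) < 9/2 = \dist{D_1}{D_2}{T}^2$. So the lemma, with its hypothesis read literally about data sets, is false, and no argument can close the gap you are trying to close.

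In fairness, the paper's proof makes the same leap silently: it picks an arbitrary signed $u \in \const{S}{\theta_1}$ and applies the hypothesis to it as though $u$ were a data set. The honest repair is to read the hypothesis as a statement about all vectors of the constrained spaces (equivalently, to assume outright that $A$ agrees with an affine map $\alpha S + \beta$ on $S(\Omega)$, which is the only case the paper actually uses, namely Theorem~\ref{thr:linear} with a matrix $A$). Under that reading your one-line verification of the inclusion, the shortest-segment inequality, and your equality case all go through and coincide with the paper's intended argument.
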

Before proving this lemma, let us explain why the uniqueness requirement is needed: Assume that the sample space $\Omega$ consists of two-dimensional binary vectors, that is, 
\[
\Omega = \set{\pr{0,0},\pr{1,0},\pr{0,1},\pr{1,1}}.
\]
We set the features to be $S(\omega) = \vect{\omega_1, \omega_2}$. Define a function $T(x) = \vect{\omega_1, \omega_2, \omega_1\omega_2} = \vect{S_1(\omega), S_2(\omega), S_1(\omega)S_2(\omega)}$. Note that uniqueness assumption is now violated. Without this assumption the lemma would imply that $\dist{D_1}{D_2}{T} \leq \dist{D_1}{D_2}{S}$ which is in general false.
\begin{proof}
Let $\theta_1 = \freq{S}{D_1}$ and $\phi_1 = \freq{T}{D_1}$. Pick $u \in \const{S}{\theta_1}$. The frequency of $S$ taken with the respect to $u$ is $\theta_1$ and because of the assumption the corresponding frequency of $T$ is $\phi_1$. It follows that $\const{S}{\theta_i} \subseteq \const{T}{\phi_i}$. The theorem follows from the fact that the CM distance is the shortest distance between the affine spaces $\const{S}{\theta_1}$ and $\const{S}{\theta_2}$.
\end{proof}

\subsection{Proof of Theorem~\ref{thr:char}}
\label{prf:char}
It suffices to prove that the matrix $C(S)$ is proportional to the covariance matrix $\cov{}{}{S}$. The notation $\delta\pr{\omega_1 \mid \omega_2}$ used in the proof represents a feature function $\funcdef{\delta}{\Omega}{\set{0,1}}$ which returns $1$ if $\omega_1 = \omega_2$ and $0$ otherwise.

Before proving the theorem we should point one technical detail. In general, $C(S)$ may be singular, especially in Assumption~\ref{as:1}. In our proof we will show that $C(S) \propto \cov{}{}{S}$ and this does not require $C(S)$ to be invertible. However, if one wants to evaluate the distance $d$, then one must assume that $C(S)$ is invertible.

Fix indices $i$ and $j$ such that $i \neq j$. Let $T(\omega) = \vect{S_i(\omega), S_j(\omega)}$. If follows from Assumption~\ref{as:1} that
\[
C(T) = \spr{
\begin{array}{cc}
C_{ii}(S) & C_{ij}(S) \\
C_{ji}(S) & C_{jj}(S) \\
\end{array}
}.
\]
This implies that $C_{ij}(S)$ depends only on $S_i$ and $S_j$. In other words, we can say $C_{ij}(S) = C_{ij}(S_i, S_j)$. Let $\funcdef{\rho}{\enset{1}{N}}{\enset{1}{N}}$ be some permutation function and define $U(x) = \envec{S_{\rho(1)}(x)}{S_{\rho(N)}(x)}$. Assumption~\ref{as:1} implies that
\[
C_{\rho(i)\rho(j)}(S) = C_{ij}(U) = C_{ij}(U_i, U_j) = C_{ij}(S_{\rho(i)}, S_{\rho(j)}).
\]
This is possible only if all non-diagonal entries of $C$ have the same form or, in other words, $C_{ij}(S) = C_{ij}(S_i, S_j) = C(S_i, S_j)$. Similarly, the diagonal entry $S_{ii}$ depends only on $S_i$ and all the diagonal entries have the same form $C_{ii}(S) = C(S_i)$. To see the connection between $C(S_i)$ and $C(S_i, S_j)$ let $V(\omega) = \vect{S_i(\omega), S_i(\omega)}$ and let $W(\omega) = \vect{2S_i(\omega)}$. We can represent $W(\omega) = V_1(\omega) + V_2(\omega)$. Now Assumption~\ref{as:1} implies
\[
\begin{split}
4C(S_i) & = C(W) = C(V_{11})+2C(V_{12},V_{21})+C(V_{22}) \\
& = 2C(S_i) + 2C(S_i,S_i)
\end{split}
\]
which shows that $C(S_i) = C(S_i, S_i)$. Fix $S_j$ and note that Assumption~\ref{as:1} implies that $C(S_i, S_j)$ is a linear function of $S_i$. Thus $C$ has a form 
\[
C(S_i, S_j) = \sum_{\omega \in \Omega}S_i(\omega)h(S_j, \omega)
\]
for some specific map $h$. Let $\alpha \in \Omega$. Then $C(\delta\pr{\omega \mid \alpha}, S_j) = h(S_j, \alpha)$ is a linear function of $S_j$. Thus $C$ has a form
\[C(S_i, S_j) = \sum_{\omega_1,\omega_2 \in \Omega}S_i(\omega_1)S_j(\omega_2)g(\omega_1, \omega_2)\] for some specific $g$.

Let $\alpha$, $\beta$, and $\gamma$ be distinct points in $\Omega$. An application of Assumption~\ref{as:2} shows that
$g(\alpha, \beta) = C(\delta\pr{\omega\mid\alpha}, \delta\pr{\omega\mid\beta}) = C(\delta\pr{\omega\mid\alpha},\delta\pr{\omega\mid\gamma}) = g(\alpha, \gamma)$. Thus $g$ has a form $g(\omega_1, \omega_2) = a\delta\pr{\omega_1\mid \omega_2}+b$ for some constants $a$ and $b$.

To complete the proof note that Assumption~\ref{as:1} implies that $C(S+b) = C(S)$ which in turns implies that $\sum_x g(\omega_1, \omega_2) = 0$ for all $y$. Thus $b = - a\abs{\Omega}^{-1}$. This leads us to
\[
\begin{split}
C(S_i, S_j) & = \sum_{\omega_1,\omega_2 \in \Omega}S_i(\omega_1)S_j(\omega_2)\pr{a\delta\pr{\omega_1\mid\omega_2}-a\abs{\Omega}^{-1}} \\
& = a\sum_{\omega \in \Omega} S_i(\omega)S_j(\omega) - a\pr{\sum_{\omega \in \Omega} S_i(\omega)}\pr{\sum_{\omega \in \Omega} \abs{\Omega}^{-1}S_j(\omega)} \\
& \propto \mean{}{S_iS_j} - \mean{}{S_i}\mean{}{S_j},
\end{split}
\]
where the means are taken with respect to the uniform distribution. This identity proves the theorem.
\subsection{Proof for Lemma~\ref{lem:paritycov}}
Let us prove that $\cov{}{}{T_\iset{F}} = 0.5I$. Let $A$ be an itemset. There are odd number of ones in $A$ in exactly half of the transactions. Hence, $\mean{}{T_A^2} = \mean{}{T_A} = 0.5$. Let $B \neq A$ be an itemset. We wish to have $T_B(\omega) = T_A(\omega) = 1$. This means that $\omega$ must have odd number of ones in $A$ and in $B$. Assume that the number of ones in $A \cap B$ is even. This means that $A-B$ and $B-A$ have odd number of ones. There is only a quarter of all the transactions that fulfil this condition. If $A \cap B$ is odd, then we must an even number of ones in $A-B$ and $B-A$. Again, there is only a quarter of all the transactions for which this holds. This implies that $\mean{}{T_AT_B} = 0.25 = \mean{}{T_A}\mean{}{T_B}$. This proves that $\cov{}{}{T_\iset{F}} = 0.5I$.

\subsection{Proof of Theorem~\ref{thr:generic}}
\label{prf:generic}
Before proving this theorem let us rephrase it. First, note even though $\dista{\cdot}{\cdot}{\cdot}$ is defined only on the conjunction functions $S_\iset{F}$, we can operate with the parity function $T_\iset{F}$. As we stated before there is an invertible matrix $A$ such that $T_\iset{F} = AS_\iset{F}$. We can write the distance as
\[
\dista{D_1}{D_2}{S_\iset{F}}^2 = \pr{A\theta_1-A\theta_2}^T\pr{A^{-1}}^TC(S_\iset{F})^{-1}A^{-1}\pr{A\theta_1-A\theta_2}.
\]
Thus we define $C(T_\iset{F}) = AC(S_\iset{F})A^T$. Note that the following lemma implies that the condition stated in Theorem~\ref{thr:generic} is equivalent to $C(T_\iset{A}) = cI$, for some constant $c$. Theorem~\ref{thr:generic} is equivalent to stating that $C(T_\iset{F}) = cI$.

The following lemma deals with some difficulties due the fact that the frequencies should arise from some valid distributions
\begin{lemma}
\label{lem:valid}
Let $\iset{A}$ be the family of all itemsets. There exists $\epsilon > 0$ such that for each real vector $\gamma$ of length $2^K-1$ that satisfies $\norm{\gamma} < \epsilon$ there exist distributions $p$ and $q$ such that $\gamma = \mean{p}{T_\iset{A}}-\mean{q}{T_\iset{A}}$.
\end{lemma}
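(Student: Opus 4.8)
The plan is to recognise $p \mapsto \mean{p}{T_\iset{A}}$ as a \emph{linear} map and to exploit the invertibility coming from the Walsh/Hadamard structure of parity functions. Write $\Phi(p) = \mean{p}{T_\iset{A}}$. Since $\mean{p}{T_B} = \sum_{\omega \in \Omega} p(\omega)T_B(\omega)$ is linear in $p$, the map $\Phi$ is linear from $\real^{\abs{\Omega}}$ to $\real^{2^K-1}$ (recall $\abs{\Omega} = 2^K$), and in particular $\mean{p}{T_\iset{A}} - \mean{q}{T_\iset{A}} = \Phi\pr{p-q}$. Thus the lemma reduces to showing that every sufficiently small $\gamma$ has the form $\Phi(p) - \Phi(q)$ for two distributions $p, q \in \mathbb{P}$.

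First I would establish that the restriction of $\Phi$ to the hyperplane $\set{u \in \real^{\abs{\Omega}} : \sum_\omega u_\omega = 1}$ is an affine bijection onto $\real^{2^K-1}$. Consider the $2^K$ functions on $\Omega$ given by the constant $T_\emptyset \equiv 1$ together with all parities $T_B$, $B \neq \emptyset$. Setting $\chi_B(\omega) = (-1)^{T_B(\omega)} = \prod_{i \in B}(-1)^{\omega_i}$, the family $\set{\chi_B}_B$ consists of the characters of $\pr{\mathbb{Z}/2\mathbb{Z}}^K$ and hence is an orthogonal basis of $\real^{\abs{\Omega}}$ (the rows of a Hadamard matrix). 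Because $T_B = \pr{1-\chi_B}/2$ and $\chi_B = 1 - 2T_B$, the span of $\set{1} \cup \set{T_B : B \neq \emptyset}$ equals the span of $\set{\chi_B}_B = \real^{\abs{\Omega}}$, so these $2^K$ functions are linearly independent. Consequently the combined linear map $u \mapsto \pr{\sum_\omega u_\omega, \Phi(u)}$ is an invertible map $\real^{\abs{\Omega}} \to \real^{\abs{\Omega}}$, which is exactly the assertion that $\Phi$ is a bijection from the hyperplane $\sum_\omega u_\omega = 1$ onto $\real^{2^K-1}$.

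Next I would use this bijection to locate an interior point. The image $Q = \Phi\pr{\mathbb{P}}$ of the simplex is a full-dimensional polytope in $\real^{2^K-1}$, and since an affine bijection is a homeomorphism carrying interior to interior, the image of any interior point of $\mathbb{P}$ lies in the interior of $Q$. Take the uniform distribution $u^\ast$, which is interior to $\mathbb{P}$; by Lemma~\ref{lem:paritycov} (or directly, since each parity equals $1$ on exactly half of $\Omega$) we have $\mean{u^\ast}{T_B} = 1/2$ for every $B \neq \emptyset$, so the point $c = \Phi\pr{u^\ast} = \pr{1/2, \ldots, 1/2}$ is interior to $Q$. Hence there is an $\epsilon > 0$ with $\set{x : \norm{x - c} < \epsilon} \subseteq Q$.

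Finally, given $\gamma$ with $\norm{\gamma} < \epsilon$, the point $c + \gamma$ lies in $Q$, so there is a distribution $p \in \mathbb{P}$ with $\Phi(p) = c + \gamma$; taking $q = u^\ast$ yields $\mean{p}{T_\iset{A}} - \mean{q}{T_\iset{A}} = \Phi(p) - c = \gamma$, as required. I expect the main obstacle to be the first step, namely verifying that the parity features together with the normalisation constraint form an invertible system; once the Hadamard/character orthogonality is in hand, the remainder is a soft topological argument that interior maps to interior.
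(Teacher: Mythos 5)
Your proposal is correct and follows essentially the same route as the paper: both augment the parity features with the constant function, observe that the resulting map on distributions is an invertible linear (hence affine, homeomorphic) map, and conclude by an openness argument that a small ball of frequency differences is realisable. The only difference is one of detail—you supply the Walsh/Hadamard character-orthogonality argument for the invertibility that the paper merely asserts ("We can show that $A$ is invertible"), and you fix $q$ to be the uniform distribution where the paper works with an arbitrary open ball of distributions; neither changes the substance.
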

\begin{proof}
To ease the notation, add $T_0(x) = 1$ to $T_\iset{A}$ and denote the end result by $T^*$. We can consider $T^*$ as a $2^K \times 2^K$ matrix, say $A$. Let $p$ be a distribution and let $u$ be the vector of length $2^K$ representing the distribution. Note that we have $Au = \mean{p}{T^*}$. We can show that $A$ is invertible. Let $U$ some $2^K - 1$ dimensional open ball of distributions. Since $A$ is invertible, the set $V^* = \set{Ax \mid x \in U}$ is a $2^K - 1$ dimensional open ellipsoid. Define also $V$ by removing the first coordinate from the vectors of $V^*$. Note that the first coordinate of elements of $V^*$ is equal to $1$. This implies that $V$ is also a $2^K - 1$ dimensional open ellipsoid. Hence we can pick an open ball $N(\theta, \epsilon) \subset V$. The lemma follows.
\end{proof}
We are now ready to prove Theorem~\ref{thr:generic}:

Abbreviate the matrix $C(T_\iset{F})$ by $C$. We will first prove that the diagonal entries of $C$ are equal to $c$. Let $\iset{A}$ be the family of all itemsets. Select $G \in \iset{F}$ and define $\iset{R} = \set{H \in \iset{F} \mid H \subseteq G}$. As we stated above, $C(T_\iset{A}) = cI$ and Assumption~\ref{as2:3} imply that $C(T_\iset{R}) = cI$. Assumption~\ref{as2:2} implies that
\begin{equation}
\label{eq:chain}
\dista{\cdot}{\cdot}{S_\iset{R}}^2 \leq \dista{\cdot}{\cdot}{S_\iset{F}}^2 \leq \dista{\cdot}{\cdot}{S_\iset{A}}^2
\end{equation}
Select $\epsilon$ corresponding to Lemma~\ref{lem:valid} and let $\gamma_\iset{A} = \vect{0,\ldots,\epsilon/2,\ldots,0}$, i.e., $\gamma_\iset{A}$ is a vector whose entries are all $0$ except the entry corresponding to $G$. Lemma~\ref{lem:valid} guarantees that there exist distributions $p$ and $q$ such that $\dista{p}{q}{S_\iset{A}}^2 = c\norm{\gamma_\iset{A}}^2$. Let $\gamma_\iset{F} = \mean{p}{T_\iset{F}}-\mean{q}{T_\iset{F}}$ and $\gamma_\iset{R} = \mean{p}{T_\iset{R}}-\mean{q}{T_\iset{R}}$. Note that $\gamma_\iset{R}$ and $\gamma_\iset{F}$ has the same form as $\gamma_{A}$. It follows from Eq.~\ref{eq:chain} that
\[
c\epsilon^2/4 \leq C_{G,G}\epsilon^2/4 \leq c\epsilon^2/4,
\]
where $C_{G,G}$ is the diagonal entry of $C$ corresponding to $G$. It follows that $C_{G,G} = c$.

To complete the proof we need to show that $C_{G,H} = 0$ for $G,H\in \iset{F}, G \neq H$. Assume that $C_{X,Y} \neq 0$ and let $s$ be the sign of $C_{G,H}$. Apply Lemma~\ref{lem:valid} again and select $\gamma_\iset{A} = \spr{0,\ldots,\epsilon/4,0,\ldots,0,s\epsilon/4,\ldots,0}^T$, i.e., $\gamma_\iset{A}$ has $\epsilon/4$ and $s\epsilon/4$ in the entries corresponding to $G$ and $H$, respectively, and $0$ elsewhere. The right side of Eq.~\ref{eq:chain} implies that
\[
2c\epsilon^2/16+2\abs{C_{G,H}}\epsilon^2/16 \leq 2c\epsilon^2/16
\]
which is a contradiction and it follows that $C_{G,H} = 0$. This completes the theorem.

\end{document}